\documentclass[11pt]{article}

\input{pretex}

\usepackage{mathrsfs}
\nc{\pl}{{\scalebox{0.7}{+}}}
\nc{\PSD}{\HERM_{\pl}}
\nc{\PD}{\HERM_{\pl\pl}}
\nc{\polarPSD}[1]{{#1}_{\pl}^{\circ}}
\nc{\polarPSDre}[1]{{#1}_{\pl}^{\star}}
\nc{\polarPD}[1]{{#1}_{\pl\pl}^{\circ}}
\nc{\HERM}{\mathscr{H}}
\nc{\cvxset}{\mathscr{C}}
\nc{\density}{\mathscr{D}}
\nc{\subdensity}{\mathscr{D}_\bullet}
\nc{\bcC}{\cC}
\nc{\Meas}{{\scriptscriptstyle \rm M}}
\nc{\Proj}{{{\scriptscriptstyle \rm P}}}
\nc{\RM}{{{\mathscr{R}}}}
\nc{\sK}{{{\mathscr{K}}}}
\nc{\sS}{{{\mathscr{S}}}}
\nc{\sT}{{{\mathscr{T}}}}
\nc{\sA}{{{\mathscr{A}}}}
\nc{\sB}{{{\mathscr{B}}}}
\nc{\sC}{{{\mathscr{C}}}}
\nc{\sE}{{{\mathscr{E}}}}
\nc{\sL}{{{\mathscr{L}}}}
\nc{\sG}{{{\mathscr{G}}}}
\nc{\sF}{{{\mathscr{F}}}}
\nc{\sI}{{{\mathscr{I}}}}
\nc{\sN}{{{\mathscr{N}}}}
\nc{\sM}{{{\mathscr{M}}}}
\nc{\END}{\operatorname{End}}
\nc{\PERM}{\mathfrak{\sigma}}

\nc{\Cone}{\text{\rm Cone}}
\nc{\sep}{{\SEP}}
\nc{\BS}{{\scriptscriptstyle \rm {BS}}}
\nc{\Sand}{{\scriptscriptstyle  \rm S}}
\nc{\Petz}{{\scriptscriptstyle  \rm P}}
\nc{\Hypo}{{\scriptscriptstyle  \rm H}}
\nc{\DD}{{{\mathbb D}}}
\nc{\suchthat}{\text{\rm s.t.}}
\nc{\PPT}{\text{\rm PPT}}
\nc{\Rains}{\text{\rm Rains}}
\nc{\WD}{\text{\rm WD}}
\nc{\new}{\text{\rm new}}
\nc{\sfT}{\mathsf T}
\nc{\SEP}{\text{\rm SEP}}
\nc{\PSEP}{\text{\rm PSEP}}
\nc{\CPTP}{\text{\rm CPTP}}
\nc{\POVM}{\text{\rm POVM}}
\nc{\PVM}{\text{\rm PVM}}
\nc{\CP}{\text{\rm CP}}
\nc{\adv}{\text{\rm adv}}
\nc{\spec}{\text{\rm spec}}
\nc{\poly}{\text{\rm poly}}
\nc{\End}{\operatorname{End}}
\nc{\Par}{\operatorname{Par}}
\nc{\RNG}{\operatorname{RNG}}
\nc{\epi}{\boldsymbol{\operatorname{epi}}}
\nc{\op}{\boldsymbol{\operatorname{op}}}

\usepackage{stmaryrd}
\nc{\db}[1]{\left\llbracket#1 \right\rrbracket}

\usepackage{pgfplots}

\nc{\img}{\mathbf{i}}

\usepackage{changepage}

\begin{document}

\title{\Large \textbf{Generalized quantum asymptotic equipartition}}

\author[1]{Kun Fang \thanks{kunfang@cuhk.edu.cn}}
\author[2]{Hamza Fawzi \thanks{h.fawzi@damtp.cam.ac.uk}}
\author[3]{Omar Fawzi \thanks{omar.fawzi@ens-lyon.fr}}

\affil[1]{\small School of Data Science, The Chinese University of Hong Kong, Shenzhen,\protect\\  Guangdong, 518172, China}
\affil[2]{\small Department of Applied Mathematics and Theoretical Physics,  University of Cambridge, \protect\\ Cambridge CB3 0WA, United Kingdom}
\affil[3]{\small Inria, ENS de Lyon, UCBL, LIP, 69342, Lyon Cedex 07, France}

\date{}

\maketitle

\begin{abstract}
The asymptotic equipartition property (AEP) states that in the limit of a large number of independent and identically distributed (i.i.d.) random experiments, the output sequence is virtually certain to come from the typical set, each member of which is almost equally likely. This property is a form of the law of large numbers and lies at the heart of information theory. In this work, we prove a generalized quantum AEP beyond the i.i.d. framework where the random samples are drawn from two sets of quantum states. In particular, under suitable assumptions on the sets, we prove that all operationally relevant divergences converge to the quantum relative entropy between the sets. More specifically, both the quantum hypothesis testing relative entropy (a smoothed form of the min-relative entropy) and the smoothed max-relative entropy approach the regularized relative entropy between the sets. Notably, the asymptotic limit has explicit convergence guarantees and can be efficiently estimated through convex optimization programs, despite the regularization, provided that the sets have efficient descriptions. The generalized AEP directly implies a new generalized quantum Stein's lemma for conducting quantum hypothesis testing between two sets of quantum states. This addresses open questions
raised by Brand\~{a}o et al. [IEEE TIT 66(8):5037–5054 (2020)] and Mosonyi et al. [IEEE TIT 68(2):1032-1067 (2022)], which seek a Stein’s lemma with computational efficiency. Moreover, we propose a new framework for quantum resource theory in which state transformations are performed without requiring precise characterization of the states being manipulated, making it more robust to imperfections. We demonstrate the reversibility (also referred to as the second law) of such a theory and identify the regularized relative entropy as the unique measure of the resource in this new framework. 

\end{abstract}

{
\hypersetup{linkcolor=black}
\tableofcontents
}

\section{Introduction}

\subsection{Background}

The asymptotic equipartition property (AEP) is a form of the law of large numbers in information theory, stating that $-\frac{1}{n} \log {p(X_1,X_2,\cdots, X_n)}$ is close to the entropy $H(X)$, where $X_i$ are independent, identically distributed (i.i.d.) random variables and $p(X_1,X_2,\cdots,X_n)$ is the probability of observing the sequence $X_1,X_2,\cdots,X_n$ \cite{cover1999elements}. The AEP was first stated by Shannon in his seminal 1948 paper~\cite{Shannon1948}, where he proved the result for i.i.d. processes and stated the result for staionary ergodic processes. McMillan~\cite{mcmillan1953basic} and Breiman~\cite{breiman1957individual} then proved the AEP for ergodic finite alphabet sources. The result is now referred to as the AEP or the Shannon–McMillan–Breiman theorem. This property lies in the heart of information theory, forming the mathematical foundation for applications such as data compression~\footnote{The AEP has been proved to be almost equivalent to the source coding theorem by~\cite{verdu1997role}, which reinforces the prominent role played by the AEP in information theory.}, channel coding~\cite{Shannon1948} and cryptography~\cite{han1993approximation,vembu1995generating,cachin1997entropy}.

The AEP can be represented in a more generic form in terms of entropic quantities:
\begin{align}
    \lim_{\ve\to 0} \lim_{n\to \infty} \frac{1}{n} \DD_{\ve}(P^{\ox n}\|Q^{\ox n}) = D(P\|Q),
\end{align}
where $P$ is a probability distribution, $Q$ is a nonnegative function on a finite set and $\DD$ is a divergence of interest, $\ve$ is a smoothing parameter and $D$ is the Kullback–Leibler divergence, also called the relative entropy~\cite{kullback1951information}. For instance, considering $Q$ to be a constant function equal to $1$ and $\DD$ be the min-relative entropy and max-relative entropy, then the above equation reduce to the Shannon–McMillan–Breiman theorem.

The AEP has been generalized to quantum information theory, by replacing the probability distribution with the density matrix of a quantum state and using quantum divergences,
\begin{align}
    \lim_{\ve\to 0} \lim_{n\to \infty} \frac{1}{n} \DD_{\ve}(\rho^{\ox n}\|\sigma^{\ox n}) = D(\rho\|\sigma),
\end{align}
where $\rho$ is the density matrix, $\sigma$ is a positive semidefinite operator. In particular, Hiai and Petz~\cite{hiai1991proper} proved the case when $\DD_\ve$ is the quantum hypothesis testing relative entropy and $D$ is the Umegaki relative entropy. Ogawa and Nagaoka~\cite{Ogawa2000} then strenghtened the result by removing the dependence of $\ve$ in the outer limit. These together formed the well-known quantum Stein's lemma in quantum information theory. Tomamichel et al.~\cite{tomamichel2009fully} proved the case when $\rho_{AB}$ is a bipartite quantum state and $\sigma_{AB} = I_A \ox \rho_B$ and $\DD_\ve$ is the quantum min-relative entropy and quantum max-relative entropy~\cite{renner2005security}. As the classical AEP plays an essential role in classical information theory, the quantum AEP has also found plenty of applications, including quantum data compression~\cite{schumacher1995quantum}, quantum state merging~\cite{berta2009single}, quantum channel coding~\cite{datta2013smooth}, quantum cryptography~\cite{renner2005security} and quantum resource theory~\cite{fang2019non}.

For many problems, the situation is not so simple and $\rho^{\ox n}$ and $\sigma^{\ox n}$ are not fully known and do not necessarily possess the i.i.d. structure. All one knows is that they belong to two sets of quantum states, $\sA_n$ and $\sB_n$, respectively, making the existing quantum AEPs not applicable. Such as scenario has been extensively studied in the classical information theory and applied to practical problems (refer to~\cite{levitan2002competitive} and references therein) such as classification with training sequences (e.g. speech recognition, signal detection and digital communication) and detection of messages via unknown channels (e.g. radar target detection, identification problem and watermark detection). This motivates us to explore a more general notion of the quantum AEP by extending beyond the i.i.d. structure of the source and beyond the singleton case. Specifically, we consider the following limit:
\begin{align}\label{eq: introduction quantum AEP}
    \lim_{n \to \infty} \frac{1}{n} \DD_{\ve}(\sA_n\|\sB_n) =\; ? \quad \text{with} \quad \DD_{\ve}(\sA_n\|\sB_n):= \inf_{\substack{\rho_n \in \sA_n\\ \sigma_n \in \sB_n}} \DD_{\ve}(\rho_n\|\sigma_n)
\end{align}
where $\sA_n$ is a set of quantum states and $\sB_n$ is a set of positive semidefinite operators acting on $\cH^{\otimes n}$, where $\cH$ is a finite-dimensional Hilbert space. This is a very general framework that encompasses almost all existing studies of quantum AEP in the literature. 
For instance, if $\DD_{\ve}$ is the hypothesis testing relative entropy, then $\sA_n$ is seen as a composite null hypothesis and $\sB_n$ as a composite alternate hypothesis. Specific examples that were recently studied include composite hypothesis testing where $\sA_n$ and $\sB_n$ are convex mixtures of i.i.d. states~\cite{berta2021composite},
as well as the generalized quantum Stein's lemma, where $\sA_n$ is an i.i.d. state and $\sB_n$ is a set of quantum states~\cite{Brand_o_2010,hayashi2024generalized,lami2024solutiongeneralisedquantumsteins}.

\subsection{Main Results}

\subsubsection*{Generalized quantum AEP}

Under some structural assumptions on the sets $\sA_n$ and $\sB_n$, in particular the stability of the sets and their polars under tensor products, we prove that the limit in Eq.~\eqref{eq: introduction quantum AEP} exists for both the hypothesis testing relative entropy $D_{\Hypo,\ve}$ (a smoothed
form of the min-relative entropy) and the max-relative entropy $D_{\max,\ve}$, which represent two extreme cases in a family of quantum divergences. Both limits converge to the quantum relative entropy between the sets. More specifically, we establish the following:

\begin{theorem}(Generalized AEP, informal) \label{thm: generalized AEP intro}
Let $\{\sA_n\}_{n\in\NN}$ and $\{\sB_n\}_{n\in\NN}$ be two sequences of sets of states such that each set is convex, compact, permutation invariant, and the sequences as well as their polar sets are closed under tensor product.
Then for any $\ve \in (0,1)$, it holds that
\begin{alignat}{2}\label{eq: intro hyp testing rel entropy}
    n D^{\reg}(\sA \| \sB) & - O(n^{2/3} \log n) \leq D_{\Hypo,\ve}(\sA_n\|\sB_n) & \leq n D^{\reg}(\sA \| \sB) + O(n^{2/3}\log n),\\
    n D^{\reg}(\sA \| \sB) & - O(n^{2/3} \log n) \leq D_{\max,\ve}(\sA_n\|\sB_n) & \leq n D^{\reg}(\sA \| \sB) + O(n^{2/3}\log n),\label{eq: intro dmax}
\end{alignat}
where $D^{\reg}(\sA \| \sB) := \lim_{n\to \infty} \frac{1}{n}D(\sA_n\|\sB_n)$, which can be estimated with explicit convergence guarantees and can be efficiently computed given that $\sA_m$ and $\sB_m$ have efficient descriptions.
\end{theorem}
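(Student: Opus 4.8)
The plan is to prove the two extremal bounds directly and obtain the other two for free from the elementary ordering $D_{\Hypo,\ve}(\rho\|\sigma)\le D_{\max,\ve}(\rho\|\sigma)$, valid for single states up to a harmless shift of the smoothing parameter that the error term $O(n^{2/3}\log n)$ absorbs, and which survives taking the infimum over $\rho\in\sA_n$ and $\sigma\in\sB_n$. It therefore suffices to establish the \emph{achievability} upper bound $D_{\max,\ve}(\sA_n\|\sB_n)\le nD^{\reg}+O(n^{2/3}\log n)$, which then also controls the smaller quantity $D_{\Hypo,\ve}$, and the \emph{optimality} lower bound $D_{\Hypo,\ve}(\sA_n\|\sB_n)\ge nD^{\reg}-O(n^{2/3}\log n)$, which then also controls the larger quantity $D_{\max,\ve}$. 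Before either bound I would record that $a_n:=D(\sA_n\|\sB_n)$ is subadditive: additivity of the Umegaki relative entropy under tensor products together with $\sA_m\ox\sA_n\subseteq\sA_{m+n}$ and $\sB_m\ox\sB_n\subseteq\sB_{m+n}$ gives $a_{m+n}\le a_m+a_n$, so by Fekete's lemma $\tfrac1n a_n\to D^{\reg}=\inf_k\tfrac1k a_k$ and in particular $D(\sA_n\|\sB_n)\ge nD^{\reg}$ for every $n$.

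For achievability I would use a blocking argument. Fix a block length $m$, write $n=km$ (handling the remainder by monotonicity), and choose near-optimal $\rho_m\in\sA_m$, $\sigma_m\in\sB_m$ with $D(\rho_m\|\sigma_m)=D(\sA_m\|\sB_m)=mD^{\reg}+m\,\delta_m$, where $\delta_m:=\tfrac1m a_m-D^{\reg}\ge0$ tends to $0$ with an explicit rate (established earlier via permutation invariance and a de Finetti estimate). Tensor-stability gives $\rho_m^{\ox k}\in\sA_n$ and $\sigma_m^{\ox k}\in\sB_n$, so these are feasible in the infimum defining $D_{\max,\ve}(\sA_n\|\sB_n)$ and hence $D_{\max,\ve}(\sA_n\|\sB_n)\le D_{\max,\ve}(\rho_m^{\ox k}\|\sigma_m^{\ox k})$. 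The problem is now genuinely i.i.d. in the block variable, and the second-order expansion of the smoothed max-relative entropy yields $D_{\max,\ve}(\rho_m^{\ox k}\|\sigma_m^{\ox k})\le k\,D(\rho_m\|\sigma_m)+O(\sqrt{k}\,m)$, the factor $m$ arising because each block lives on dimension $\dim(\cH)^m$. Substituting, the regularization error is $k\cdot m\,\delta_m\approx (n/m)\log m$ and the i.i.d. fluctuation is $O(\sqrt{k}\,m)=O(\sqrt{nm})$; balancing the two at $m\asymp n^{1/3}$ makes both $O(n^{2/3}\log n)$, which is exactly the claimed rate.

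The optimality lower bound for $D_{\Hypo,\ve}$ is the main obstacle, being essentially the converse of the generalized quantum Stein's lemma. The difficulty is that smoothing makes $D_{\Hypo,\ve}$ strictly smaller than $D$, so the free inequality $D(\sA_n\|\sB_n)\ge nD^{\reg}$ cannot be invoked directly; one must show that neither the error tolerance $\ve$ nor the freedom to range over the whole composite alternative $\sB_n$ can lower the optimal exponent below $nD^{\reg}$ by more than $O(n^{2/3}\log n)$. My plan is to dualize the hypothesis-testing program: its optimal test against $\sB_n$ is controlled by an element of the polar cone $\sB_n^{\circ}$, and tensor-stability of the polar family $\sB_m^{\circ}\ox\sB_n^{\circ}\subseteq\sB_{m+n}^{\circ}$ is precisely what lets a single-block dual certificate be lifted to $n$ copies. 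Combining this duality with permutation invariance, to symmetrize both the state and the test, and a quantitative de Finetti reduction turns the converse into a single-block statement about $D(\sA_m\|\sB_m)$, after which the same $m\asymp n^{1/3}$ blocking together with a smoothing and continuity estimate closes the gap. The hard part is keeping the de Finetti and smoothing overheads at order $O(n^{2/3}\log n)$ \emph{simultaneously} with the duality lifting; this is where the polar tensor-stability assumption is indispensable and where the bulk of the technical work lies. Finally, the efficient-computability claim follows because $D^{\reg}$ is characterized variationally through the single-block convex programs defining $D(\sA_m\|\sB_m)$ together with the explicit convergence rate, so truncating at a polynomially large $m$ already certifies $D^{\reg}$ to any desired accuracy.
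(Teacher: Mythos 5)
There is a genuine gap, and it sits at the single point on which both halves of your argument secretly rest: the \emph{explicit rate} $\delta_m = \frac{1}{m}D(\sA_m\|\sB_m) - D^{\reg}(\sA\|\sB) = O(\log m / m)$. Fekete's lemma gives you $\frac1m D(\sA_m\|\sB_m)\to D^{\reg}$ and $D(\sA_n\|\sB_n)\ge nD^{\reg}$, but \emph{no rate whatsoever}; without a rate, the term $n\delta_m$ in your achievability bound is uncontrolled and the $m\asymp n^{1/3}$ balancing collapses. You attribute this rate to ``permutation invariance and a de Finetti estimate,'' but no de Finetti argument is available here: $\sB_n$ is an arbitrary convex set satisfying Assumption~\ref{ass: steins lemma assumptions}, not a set of (mixtures of) i.i.d.\ states, so approximating its symmetrized elements by i.i.d.\ mixtures neither keeps you inside $\sB_n$ nor yields dimension-independent errors. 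What actually delivers $\delta_m \le \frac{2(d^2+d)\log(m+d)}{m}$ in the paper (Lemma~\ref{lem: main theorem computation} via Eq.~\eqref{eq: generalized AEP finite estimate convergence}) is the chain $\frac1m D_{\Meas}(\sA_m\|\sB_m)\le D^{\reg}(\sA\|\sB)\le \frac1m D(\sA_m\|\sB_m)$, whose left inequality requires the \emph{superadditivity of the measured relative entropy between sets} (Lemmas~\ref{lem: generalized supadditivity DM alpha} and~\ref{lem: generalized supadditivity DM}, proved through the variational formula of Lemma~\ref{lem: DM alpha variational sets} where the polar stability (A.4) lets feasible certificates $W\in\sB^\circ$ tensorize), combined with the pinching estimate $D_{\Sand,\alpha}\le D_{\Meas,\alpha} + 2\log|\spec(\cT(\sigma_n))|$ and the $\poly(n)$ bound on the spectrum of permutation-invariant operators (Lemmas~\ref{lem: DM and Sandwiched relation} and~\ref{lem: permutation invariant spec}) --- a pinching argument, not de Finetti. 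None of this machinery appears in your proposal, yet it is precisely ``where the bulk of the technical work lies,'' as you yourself say of the converse.

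The same omission makes your lower bound on $D_{\Hypo,\ve}$ a plan rather than a proof. Your intuition --- dual certificates in $\sB_n^\circ$ lifted across blocks by polar tensor-stability --- is indeed the correct mechanism, but the concrete bridge is $D_{\Hypo,\ve}\ge D_{\Petz,\alpha} + \frac{\alpha}{\alpha-1}\log\frac1\ve \ge D_{\Meas,\alpha}(\sA_n\|\sB_n) + \frac{\alpha}{\alpha-1}\log\frac1\ve$ (Lemma~\ref{lem: DH petz sandwiched} plus data processing), followed by superadditivity of $D_{\Meas,\alpha}$ and the explicit continuity estimates at $\alpha\to 1^-$ with $\log\eta^{(\alpha)}_m \le (2+C)m$ (Lemma~\ref{eq: approx Dreg DMmalpha}, relying on the boundedness hypothesis~\eqref{eq: assumptionC}); a de Finetti reduction would not substitute for this step, for the reason above. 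Two smaller remarks: your direction labels are swapped (the lower bound on $D_{\Hypo,\ve}$ is the \emph{achievability} of the Stein exponent, the upper bound the strong converse), though this does not affect the logic; and your achievability route for $D_{\max,\ve}$ --- blocking plus the i.i.d.\ second-order expansion, with the variance controlled by $D_{\max}(\rho_m\|\sigma_m)\le Cm$ --- is a legitimately different and viable alternative to the paper's R\'enyi-$\alpha>1$ argument (Eq.~\eqref{eq: hypo to sand in proof AEP} onward), but only once the $\delta_m$ rate is supplied, so it does not circumvent the missing lemma.
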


A central ingredient to the proof of this result is the superadditivity of the measured relative entropy between two sets of quantum states under the stability assumption of the polar set,
\begin{align}
    D_{\Meas}(\sA_{m+k}\|\sB_{m+k}) \geq D_{\Meas}(\sA_m\|\sB_m) + D_{\Meas}(\sA_k\|\sB_k). 
\end{align}
This superadditivity result makes crucial use of the variational expression of the measured relative entropy~\cite{petzquantum,Berta2017}. We note that~\cite{rubboli2024new,rubboli2024mixed} used a condition that is similar to the stability assumption of the polar set to show additivity of the relative entropy of entanglement for various classes of states. We now proceed with some remarks concerning Theorem~\ref{thm: generalized AEP intro}:
\begin{itemize}
    \item (\emph{Generality.}) All assumptions, except for the one regarding polar sets, are standard in most existing literature (e.g.,~\cite{Brand_o_2010,brandao2020adversarial,hayashi2024generalized,lami2024solutiongeneralisedquantumsteins}). The polar assumption, while technical, is satisfied in many cases of interest (see Table~\ref{tab: A list of sets that satisfy Assumption} later). For instance, it holds when the set is a singleton of i.i.d. state, as in the existing quantum AEP; when the set consists of the identity operator tensored with all density operators, which is relevant for conditional quantum entropies; or when it is the image of a quantum channel or a set of quantum channels, a scenario that naturally arises in quantum channel discrimination~\cite{fang2025adversarial}. 
    \item (\emph{Relaxation.}) Even in cases where the polar assumption is not directly satisfied, we can relax the original problem to meet the required conditions. This approach provides an efficient approximation to the original problem, which may be difficult to solve directly. We illustrate this approach in~\cite{fang2025efficient} through applications in entanglement manipulation and magic state distillation and anticipate that this approach has the potential for more applications beyond the specific cases discussed therein.
    \item (\emph{Efficiency.}) The leading order in the generalized AEP is characterized by a regularized quantity, which will be shown to be necessary in general through specific examples. However, the asymptotic limit comes with explicit convergence guarantees and can still be efficiently estimated via convex optimization programs, despite the regularization, provided that $\sA_n$ and $\sB_n$ have efficient descriptions. The detail of this computational aspect is provided in the accompanying paper~\cite{fang2025efficient}.
    \item (\emph{Explicit finite $n$ estimate.}) Our generalized AEP has a leading term that is independent of $\ve$, corresponding to the strong converse property in information theory. Moreover, while the leading order is characterized by a regularized quantity, we can still provide an explicit estimate for finite $n$, making its convergence controllable in the finite block length regime. Even though the bound we obtain on the second order term is $O(n^{2/3} \log n)$ instead of the usual $O(\sqrt{n})$, this is a rare case where we can explicitly bound the convergence rate for a regularized quantity. 
    \item (\emph{Divergences.}) The results apply to the two extreme cases of quantum divergence; therefore, any divergence that lies between the hypothesis testing relative entropy and the max-relative entropy, or is equivalent to these, will yield a similar result.
\end{itemize}

\subsubsection*{Generalized quantum Stein's lemma}

Quantum hypothesis testing is a pivotal concept in quantum information theory, forming the foundation for numerous applications, including quantum channel coding, quantum illumination, and the operational interpretation of quantum divergences. A central task in hypothesis testing is to discriminate between quantum states corresponding to the different hypotheses, with the goal of determining which state is actually provided. In this work, we investigate quantum hypothesis testing between two sets of quantum states, $\sA_n$ and $\sB_n$. The fundamental difficulty in analyzing such a hypothesis testing is that both hypotheses are composite and extend beyond the i.i.d. structure. We need to control the type-I error $\alpha(\sA_n, M_n)$ within a constant threshold whenever the sample is drawn from $\sA_n$, in order to prevent the extremely bad case from passing the test. The asymmetric setting then seeks to determine the optimal exponent at which the type-II error probability decays $\beta(\sB_n, M_n)$, known as the \emph{Stein's exponent}, while keeping the type-I error within a fixed threshold $\ve$. Specifically, the goal is to evaluate
\begin{align}
\beta_\ve(\sA_n\|\sB_n) := \inf_{0\leq M_n\leq I}\left\{\beta(\sB_n, M_n): \alpha(\sA_n, M_n) \leq \ve\right\}.
\end{align}

\begin{theorem}(Another generalized quantum Stein's lemma, informal.)\label{thm: generalized steins intro}
Given the same assumptions on $\{\sA_n\}_{n\in\NN}$ and $\{\sB_n\}_{n\in\NN}$ as stated in Theorem~\ref{thm: generalized AEP intro}.
Then for any $\ve \in (0,1)$, it holds that
\begin{align}\label{eq: adversarial quantum Steins lemma}
    \lim_{n\to \infty} - \frac{1}{n} \log \beta_\ve(\sA_n\|\sB_n) = D^\reg(\sA\|\sB).
\end{align}
\end{theorem}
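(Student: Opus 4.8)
The plan is to show that the generalized Stein's lemma is an immediate consequence of the hypothesis testing part of the generalized AEP (Theorem~\ref{thm: generalized AEP intro}), through an exact minimax identity
\begin{align}
    -\log \beta_\ve(\sA_n\|\sB_n) = D_{\Hypo,\ve}(\sA_n\|\sB_n).
\end{align}
I will use the natural worst-case definitions $\alpha(\sA_n,M_n)=\sup_{\rho_n\in\sA_n}\tr[(I-M_n)\rho_n]$ and $\beta(\sB_n,M_n)=\sup_{\sigma_n\in\sB_n}\tr[M_n\sigma_n]$, so that $\beta_\ve(\sA_n\|\sB_n)=\inf_{0\le M_n\le I}\{\sup_{\sigma_n}\tr[M_n\sigma_n]:\sup_{\rho_n}\tr[(I-M_n)\rho_n]\le \ve\}$. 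Once the identity is in hand, the claimed limit is immediate: the bound~\eqref{eq: intro hyp testing rel entropy} gives $n D^\reg(\sA\|\sB)-O(n^{2/3}\log n)\le D_{\Hypo,\ve}(\sA_n\|\sB_n)\le n D^\reg(\sA\|\sB)+O(n^{2/3}\log n)$, and dividing by $n$ yields $\lim_{n\to\infty}\tfrac1n\bigl(-\log\beta_\ve(\sA_n\|\sB_n)\bigr)= D^\reg(\sA\|\sB)$.

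To prove the identity I would proceed in two minimax steps. First, the feasible set $\{M_n: 0\le M_n\le I,\ \alpha(\sA_n,M_n)\le\ve\}$ is convex and compact (the map $M_n\mapsto \alpha(\sA_n,M_n)$ is a supremum of affine functions, hence convex), while $\sB_n$ is convex and compact by assumption; since $(M_n,\sigma_n)\mapsto\tr[M_n\sigma_n]$ is bilinear, Sion's minimax theorem lets me exchange $\inf_{M_n}$ and $\sup_{\sigma_n}$, giving $\beta_\ve(\sA_n\|\sB_n)=\sup_{\sigma_n\in\sB_n}\beta_\ve(\sA_n\|\sigma_n)$, where $\beta_\ve(\sA_n\|\sigma_n)$ denotes the composite-null, single-alternative testing error. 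Second, I would dualize the worst-case null constraint: introducing a nonnegative measure $\mu$ on $\sA_n$ as a Lagrange multiplier and writing $c=\mu(\sA_n)$ and $\bar\rho=\int\rho_n\,d\mu$ with $\bar\rho/c\in\sA_n$ by convexity, the dual function reduces to $c(1-\ve)-\tr[(\bar\rho-\sigma_n)_+]$. Strong duality holds by Slater's condition ($M_n=I$ is strictly feasible for $\ve>0$), and the supremum over $\mu$ factorizes, through $c$ and $\hat\rho=\bar\rho/c$, into $\sup_{\rho_n\in\sA_n}$ of the single-pair Lagrange dual, which by strong duality again equals $\sup_{\rho_n\in\sA_n}\beta_\ve(\rho_n\|\sigma_n)$. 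Hence $-\log\beta_\ve(\sA_n\|\sigma_n)=\inf_{\rho_n\in\sA_n}D_{\Hypo,\ve}(\rho_n\|\sigma_n)$, and combining with the first step gives $-\log\beta_\ve(\sA_n\|\sB_n)=\inf_{\rho_n\in\sA_n,\,\sigma_n\in\sB_n}D_{\Hypo,\ve}(\rho_n\|\sigma_n)=D_{\Hypo,\ve}(\sA_n\|\sB_n)$, which is exactly the definition in~\eqref{eq: introduction quantum AEP}.

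I expect the main obstacle to lie in the rigor of the minimax and duality interchanges rather than in any new inequality: verifying compactness and the applicability of Sion's theorem, and establishing strong duality for the semi-infinite program defining the composite null. The conceptually subtle point is that, although the test must control the type-I error uniformly over all of $\sA_n$, duality collapses this worst-case requirement to the \emph{infimum} over single null states $\rho_n\in\sA_n$, matching the definition of $D_{\Hypo,\ve}(\sA_n\|\sB_n)$; this is precisely what makes the Stein's lemma a direct corollary of the AEP, with no separate universal-test construction needed. Finally, since the leading term $D^\reg(\sA\|\sB)$ in the AEP is independent of $\ve$, the identity also shows that the Stein exponent is independent of the chosen threshold $\ve\in(0,1)$, which is the strong converse property.
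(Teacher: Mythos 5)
Your proposal is correct and follows essentially the same route as the paper: the operational identity $-\log\beta_\ve(\sA_n\|\sB_n)=D_{\Hypo,\ve}(\sA_n\|\sB_n)$ (the paper's Lemma~\ref{lem: hypothesis operational and DH}) combined with the generalized AEP (Theorem~\ref{thm: generalized AEP}), and your two-step derivation of the identity---Sion's minimax to pull out $\sup_{\sigma\in\sB_n}$, then Lagrangian duality with a measure on $\sA_n$ (equivalently, a multiplier $Z\in\Cone(\sA_n)$) and Slater's condition to collapse the composite null---is precisely the paper's alternative proof in Appendix~\ref{sec: appendix Alternative proof lem: hypothesis operational and DH}, with the two steps performed in the opposite order. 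The main-text proof differs only cosmetically, handling both exchanges via a single change of variables $\rho'=z\rho\in\Cone(\sA_n)$ and two applications of Sion's theorem, so no substantive gap remains in your argument.
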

This result extends the well-known quantum Stein's lemma to a significantly broader setting. Notably, by leveraging the generalized AEP in Theorem~\ref{thm: generalized AEP}, the rate of convergence in our version of Stein's lemma can be explicitly quantified. Furthermore, the Stein's exponent can be efficiently computed via convex optimization programs, despite the regularization involved, provided that $\sA_n$ and $\sB_n$ have efficient descriptions. This addresses open questions of considerable interest in the community, particularly those seeking a Stein's lemma with efficient estimation, as discussed in~\cite[Section 3.5]{brandao2020adversarial} and~\cite[Section VI]{Mosonyi_2022}.

\vspace{0.2cm}
The discussion on the reduction to the classical case and the comparison with various versions of the quantum Stein's lemma is presented below:

\begin{itemize}
\item (\emph{Reduction to the classical case.}) If $\{\sA_n\}_{n\in\NN}$ and $\{\sB_n\}_{n\in\NN}$ are classical probability distributions, then the measured relative entropy coincides with the quantum relative entropy. In this case, we can show that $D(\sA_n\|\sB_n)$ is additive. Therefore, the Stein's exponent $D^\reg(\sA\|\sB)$ can be replaced by the one-shot quantity $D(\sA_1\|\sB_1)$ without regularization.
\item (\emph{Quantum Stein's lemma.}) Let $\sA_n = \{\rho^{\ox n}\}$ and $\sB_n = \{\sigma^{\ox n}\}$ be two singletons. Then it is easy to check that these sets satisfy all the required assumptions. Therefore, our generalized quantum Stein's lemma covers the quantum Stein's lemma~\cite{hiai1991proper,Ogawa2000} as a special case.
\item (\emph{Generalized quantum Stein's lemma.}) 
Theorem~\ref{thm: generalized steins intro} is incomparable to the generalized quantum Stein lemma of~\cite{Brand_o_2010,hayashi2024generalized,lami2024solutiongeneralisedquantumsteins}. It is weaker in the sense that we have the additional assumption on the stability of the polar sets under tensor product for the alternate hypothesis $\sB_n$, but it is stronger at least in two ways: it allows for a composite null hypothesis $\sA_n$ whereas the results of~\cite{Brand_o_2010,hayashi2024generalized,lami2024solutiongeneralisedquantumsteins} are restricted to $\sA_n = \{\rho^{\otimes n}\}$, and in addition we can obtain efficient and controlled approximations of the Stein's exponent as presented in Theorem~\ref{thm: generalized AEP intro}.
\item (\emph{Composite quantum Stein's lemma.})  The composite quantum hypothesis testing studied in~\cite{berta2021composite,Mosonyi_2022} does not satisfy the stability assumption of the sets. Therefore, it is not comparable to our result or the previous generalized quantum Stein's lemma~\cite{Brand_o_2010,hayashi2024generalized,lami2024solutiongeneralisedquantumsteins}. 
\item (\emph{Quantum Stein's lemma for restricted measurements.}) In~\cite[Theorem 16]{brandao2020adversarial}, the authors proved a quantum Stein's lemma for restricted measurements, assuming that the measurements and the set of quantum states are compatible~\cite{Piani2009}. In this work, we do not impose constraints on the performed measurements. As a result, the measurements and the set of quantum states in our result may not be compatible in their context, making our result not directly comparable to theirs. It is also worth noting that our Stein's lemma possesses the strong converse property (i.e., its convergence is independent of $\ve$), whereas the Stein's lemma for restricted measurements~\cite[Theorem 16]{brandao2020adversarial} only exhibits a weak converse property (i.e., it requires taking $\ve \to 0$).
\end{itemize}

\subsubsection*{Second law of quantum resource theory}

The existing generalized quantum Stein's lemma is closely related to the second law of quantum resource theory, which states that transformations between $\rho^{\ox n}$ and $\sigma^{\ox m}$ can be performed reversibly~\cite{Brand_o_2010,hayashi2024generalized,lami2024solutiongeneralisedquantumsteins}. Based on the new generalized quantum Stein's lemma, we propose an extended framework that enables state transformations without requiring precise characterization of the states being manipulated. This approach enhances the framework's robustness to imperfections. Specifically, let $\sF_n$ denote the set of free states and $\RNG$ represent the set of asymptotically resource non-generating operations. If it is possible to transform an unknown source state $\rho_n \in \sA_n$ into a specific target state $\sigma_m \in \sB_m$ under $\RNG$, we define $m/n$ as an achievable rate. We show that the optimal achievable rate in the asymptotic limit, denoted as $r\left(\sA\xrightarrow[]{\RNG} \sB\right)$, is characterized by the following theorem.

\begin{theorem}(Second law of resource manipulation, informal.)
    Let $\{\sA_n\}_{n\in\NN}$, $\{\sB_n\}_{n\in\NN}$ and  $\{\sF_n\}_{n\in\NN}$ be threes sequences of sets satisfying the same assumptions in Theorem~\ref{thm: generalized AEP intro} where $\sA_n \subseteq \density(\cH^{\ox n})$ and $\sB_n \subseteq \density(\cH^{\ox n})$ are the sets of source states and the sets of target states, respectively, and $\sF_n \subseteq \density(\cH^{\ox n})$ are the sets of free states. Then it holds that
    \begin{align}
        r\left(\sA\xrightarrow[]{\RNG} \sB\right) = \frac{D^\reg(\sA\|\sF)}{D^\reg(\sB\|\sF)}.
    \end{align}
    As a consequence, \begin{align}
        r\left(\sA\xrightarrow[]{\RNG} \sB\right)r\left(\sB\xrightarrow[]{\RNG} \sA\right) = 1.
    \end{align}
\end{theorem}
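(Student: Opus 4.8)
The plan is to sandwich the optimal rate $r\left(\sA\xrightarrow[]{\RNG}\sB\right)$ between matching converse and achievability bounds, both equal to $D^\reg(\sA\|\sF)/D^\reg(\sB\|\sF)$, and then read off the reversibility relation as an algebraic consequence. The two operationally relevant halves of the generalized AEP (Theorem~\ref{thm: generalized AEP intro}) --- the $D_{\Hypo,\ve}$ estimate for the converse and the $D_{\max,\ve}$ estimate for the direct part --- are exactly what make both bounds controllable.

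For the converse, suppose $m/n$ is an achievable rate, realized by an asymptotically $\RNG$ map $\Lambda_n$ that sends every $\rho_n\in\sA_n$ to within small error of $\sB_m$ and sends $\sF_n$ to within small error of $\sF_m$. I would apply the data-processing inequality for the hypothesis testing relative entropy at the level of sets: since $\Lambda_n(\sA_n)$ lies (approximately) in $\sB_m$ and $\Lambda_n(\sF_n)$ in $\sF_m$, monotonicity of $D_{\Hypo,\ve}$ under $\Lambda_n$ together with the nesting of infima gives $D_{\Hypo,\ve}(\sB_m\|\sF_m)\le D_{\Hypo,\ve}(\sA_n\|\sF_n)$, where the $o(1)$ resource leakage of $\RNG$ and the $O(\ve)$ transformation error are absorbed by slightly relaxing the smoothing parameter. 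Feeding in the two explicit bounds of Theorem~\ref{thm: generalized AEP intro}, namely $D_{\Hypo,\ve}(\sB_m\|\sF_m)\ge mD^\reg(\sB\|\sF)-O(m^{2/3}\log m)$ and $D_{\Hypo,\ve}(\sA_n\|\sF_n)\le nD^\reg(\sA\|\sF)+O(n^{2/3}\log n)$, and dividing, yields $m/n\le D^\reg(\sA\|\sF)/D^\reg(\sB\|\sF)+o(1)$.

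For the achievability, fix $R<D^\reg(\sA\|\sF)/D^\reg(\sB\|\sF)$ and set $m=\lfloor Rn\rfloor$. I would build an explicit measure-and-prepare protocol. From the generalized Stein's lemma (Theorem~\ref{thm: generalized steins intro}) applied to $\sA_n$ versus $\sF_n$, take a test $\{M_n,I-M_n\}$ with $\alpha(\sA_n,M_n)\le\ve$ and $\beta(\sF_n,M_n)\le 2^{-n(D^\reg(\sA\|\sF)-\delta)}$; from the $D_{\max,\ve}$ half of Theorem~\ref{thm: generalized AEP intro} take a smoothed target $\tilde\sigma_m$ that is $\ve$-close to some $\sigma_m\in\sB_m$ and satisfies $D_{\max}(\tilde\sigma_m\|\sF_m)\le m(D^\reg(\sB\|\sF)+\delta)$. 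Fixing any $\omega_m\in\sF_m$, define $\Lambda_n(X)=\tr[M_nX]\,\tilde\sigma_m+\tr[(I-M_n)X]\,\omega_m$. On any source $\rho_n\in\sA_n$ the accept probability is at least $1-\ve$, so the output is $O(\ve)$-close to $\sB_m$; on any free $\tau_n\in\sF_n$ the output is a mostly-$\omega_m$ mixture whose generated resource obeys $D_{\max}(\Lambda_n(\tau_n)\|\sF_m)\le\log\!\big(1+\beta(\sF_n,M_n)\,2^{D_{\max}(\tilde\sigma_m\|\sF_m)}\big)$, which is $o(n)$ precisely because $mD^\reg(\sB\|\sF)\le nD^\reg(\sA\|\sF)$ below the target rate. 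Hence $\Lambda_n$ is asymptotically $\RNG$ and realizes rate $R$; letting $\ve,\delta\to0$ closes the gap.

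The step I expect to be the main obstacle is the $\RNG$ verification in the achievability at the correct rate. It hinges on the max-relative entropy of the prepared target being governed by $mD^\reg(\sB\|\sF)$ rather than the trivial $m\log\dim\cH$, so that the exponential decay of $\beta(\sF_n,M_n)$ can be balanced against the exponential growth $2^{D_{\max}(\tilde\sigma_m\|\sF_m)}$ of the target's robustness --- this is exactly where the $D_{\max,\ve}$ branch of the generalized AEP is indispensable, and where the permutation-invariance and polar stability assumptions enter through Theorem~\ref{thm: generalized AEP intro}. One must also carefully propagate the $O(\ve)$ and $o(1)$ errors through the definition of the asymptotic rate on both sides. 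Once the rate formula $r\left(\sA\xrightarrow[]{\RNG}\sB\right)=D^\reg(\sA\|\sF)/D^\reg(\sB\|\sF)$ is established, the reversibility statement $r\left(\sA\xrightarrow[]{\RNG}\sB\right)r\left(\sB\xrightarrow[]{\RNG}\sA\right)=1$ is immediate by multiplying the formula with its $\sA\leftrightarrow\sB$ swap.
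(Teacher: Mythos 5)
Your proposal is correct, and its two halves compare differently with the paper. The achievability half coincides with the paper's proof essentially verbatim: the same measure-and-prepare map $\Lambda_n(X)=\tr[M_nX]\,\tilde\sigma_m+\tr[(I-M_n)X]\,\omega_m$ built from the optimal Stein test for $(\sA_n,\sF_n)$ and the $D_{\max,\ve}$-optimizer for $(\sB_m,\sF_m)$, with the identical balancing $\beta(\sF_n,M_n)\cdot 2^{D_{\max}(\tilde\sigma_m\|\sF_m)}\leq 2^{-n\delta}$, which the paper phrases via convexity of the global robustness $\cR$. One wording should be tightened: asymptotically resource non-generating requires $\cR(\Lambda_n(\tau_n))\to 0$, not merely $D_{\max}(\Lambda_n(\tau_n)\|\sF_m)=o(n)$; your exhibited bound does deliver exponential vanishing once you fix a margin $r\,D^{\reg}(\sB\|\sF)<D^{\reg}(\sA\|\sF)-2\delta$, exactly as the paper does, so the mechanism is right even though ``$o(n)$'' undersells what is needed. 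The converse is where you genuinely diverge. The paper argues by contradiction using $D(\cdot\|\sF_{m_n})$ as a monotone: asymptotic continuity of the relative entropy to the free set (this is precisely where the extra hypothesis $\kappa=\limsup_n\frac{1}{n}\max_{\omega}D(\omega\|\sF_n)<\infty$ enters), the triangle-type bound $D(\rho\|\sigma)\leq D(\rho\|\omega)+D_{\max}(\omega\|\sigma)$, and data processing; the continuity cost $\kappa\ve$ makes this a weak converse requiring the transformation error to vanish. You instead pull the optimal test for a pair $(\sigma_m,\tau)\in\sB_m\times\sF_m$ back through $\Lambda_n^{\dagger}$: the trace-distance error relaxes the smoothing parameter, while the RNG leakage enters not as a smoothing change (as you suggest) but as the additive term $\log(1+\delta_n)$, since $\Lambda_n(\omega_n)\leq(1+\delta_n)\tau$ for some free $\tau$; then the two halves of the generalized AEP applied to $(\sA,\sF)$ and $(\sB,\sF)$ give $m\,D^{\reg}(\sB\|\sF)\leq n\,D^{\reg}(\sA\|\sF)+o(n)$. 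Note that $\sigma_m$ and $\tau$ depend on the chosen $\rho_n,\omega_n$, which is harmless because $D_{\Hypo,\ve}(\sB_m\|\sF_m)$ is an infimum over pairs, and that dividing requires $D^{\reg}(\sB\|\sF)>0$, an assumption the paper's formal theorem makes explicitly. Your route buys something real: it avoids asymptotic continuity and the hypothesis $\kappa<\infty$ (the AEP for $(\sB,\sF)$ is assumed anyway), and since the AEP's leading order is smoothing-independent it upgrades the converse to a strong-converse-type statement with explicit $O(n^{2/3}\log n)$ finite-$n$ corrections, whereas the paper's continuity-based converse is intrinsically weak-converse in nature.
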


This result demonstrates the reversibility of the proposed framework, indicating that transformations between $\sA_n$ and $\sB_m$ can be performed without any loss. Moreover, the transformation rate is entirely characterized by the regularized quantum relative entropy, establishing it as the ``unique'' measure of resource within this new framework.

\subsection{Organization}

The organization of the remaining content is as follows. In Section~\ref{sec: preliminaries}, we introduce necessary notations and preliminary results on convex optimization and quantum divergences that will be used throughout this work. In Section~\ref{sec: Variational formula and superadditivity}, we present the variational formula and superadditivity of the measured relative entropy between two sets of quantum states, which serve as the main technical tools for proving our main results. In Section~\ref{sec: Generalized asymptotic equipartition property}, we present our main result on the generalized AEP and its formal proof. In Section~\ref{sec: Application 1: another generalized quantum Stein's lemma}, we derive another generalized quantum Stein's lemma by applying the generalized AEP. In Section~\ref{sec: Quantum resource theory with partial information and its reversibility}, we introduce a new framework of quantum resource theory and prove its reversibility. 
In Section~\ref{sec: conclusion}, we give a conclusion of this work with several open problems for further study.

\section{Preliminaries}
\label{sec: preliminaries}

\subsection{Notations}

In this section we set the notations and define several quantities that will be used throughout this work. Some frequently used notations are summarized in Table~\ref{tab: state version}. Note that we label different physical systems by capital Latin letters and use these labels as subscripts to guide the reader by indicating which system a mathematical object belongs to. We drop the subscripts when they are evident in the context of an expression (or if we are not talking about a specific system).

\setlength\extrarowheight{2pt}
\begin{table}[H]
\centering
\begin{tabular}{l|l}
\toprule[2pt]
Notations & Descriptions\\
\hline
$\cH_A$ & Hilbert space on system $A$\\
$|A|$ & Dimension of $\cH_A$\\
$\sL(A)$ & Linear operators on $\cH_A$\\
$\HERM(A)$ & Hermitian operators on $\cH_A$\\
$\HERM_{\pl}(A)$ & Positive semidefinite operators on $\cH_A$\\
$\HERM_{\pl\pl}(A)$ & Positive definite operators on $\cH_A$\\
$\density(A)$ & Density matrices on $\cH_A$\\
$\sA,\sB,\sC$ & Set of linear operators\\
$\cvxset^{\circ}$ & Polar set $\cvxset^\circ := \{X: \tr[XY] \leq 1,  \forall\, Y\in \cvxset\}$ of $\cvxset$\\
$\polarPSD{\cvxset}$ & Polar set restricted to positive semidefinite cone $\cvxset^\circ \cap \PSD$\\
$\polarPD{\cvxset}$ & Polar set restricted to positive definite operators $\cvxset^\circ \cap \PD$\\
$\cvxset^{\star}$ & Reverse polar set $\cvxset^\star := \{X: \tr[XY] \geq 1,  \forall\, Y\in \cvxset\}$ of $\cvxset$\\
$\cvxset^{\star}_{\pl}$ & Reverse polar set restricted to positive semidefinite cone $\cvxset^\star \cap \PSD$\\
$\cvxset^{\star}_{\pl\pl}$ & Reverse polar set restricted to positive definite operators $\cvxset^\star \cap \PD$\\
$\CPTP$ & Completely positive and trace preserving maps\\
$\CP$ & Completely positive maps\\
$\log(x)$ & Logarithm of $x$ in base two\\
\bottomrule[2pt]  
\end{tabular}
\caption{\small Overview of notational conventions.}
\label{tab: state version}
\end{table}

\subsection{Polar set and support function}

In the following, we introduce the definitions of the polar set and support function, along with some fundamental results that will be used in our discussions.

\newcommand{\revh}{\underline{h}}

\begin{definition}[Polar set and support function]
Let $\cvxset \subseteq \HERM$ be a convex set. Its polar set is defined by
\begin{align}
\cvxset^\circ:= \{X: \tr[XY] \leq 1,  \forall\, Y\in \cvxset\} = \{X : h_{\cvxset}(X) \leq 1\}
\end{align} 
where $h_{\cvxset}$ is the support function of $\cvxset$:
\begin{align}
    h_{\cvxset}(\omega) := \sup_{\sigma \in \cvxset} \tr[\omega \sigma].
\end{align}
Let $\polarPSD{\cvxset}:= \cvxset^\circ \cap \PSD$ and  $\polarPD{\cvxset}:= \cvxset^\circ \cap \PD$ be the intersections with positive semidefinite operators and positive definite operators, respectively.

Similarly, the reverse polar set of $\cvxset$ is defined as:
\begin{align}
    \cvxset^{\star} &:= \{X : \tr[XY] \geq 1, \forall\, Y \in \cvxset \} = \{X : \revh_{\cvxset}(X) \geq 1\}
\end{align}
where $\revh_{\cvxset}$ is the reverse support function of $\cvxset$:
\begin{align}
    \revh_{\cvxset}(\omega) := \inf_{\sigma \in \cvxset} \tr[\omega \sigma].
\end{align}
\end{definition}

The support and reverse support functions are defined as convex programs.
When $\cvxset$ is contained in the cone of positive semidefinite operators, we have the following dual representation.

\begin{lemma}\label{lem: support dual}
Let $\cvxset \subseteq \PSD$ be a convex set contained in the cone of positive semidefinite operators, and let $\omega \in \PSD$. Then $h_{\cvxset}(\omega) = \inf\{t > 0: \omega \in t \polarPSD{\cvxset}\}$, where we set the infimum to be $+\infty$ if $\{t > 0: \omega \in t \polarPSD{\cvxset}\} = \emptyset$,
and $\revh_{\cvxset}(\omega) = \sup\{t > 0 : \omega \in t \cvxset^{\star}_{\pl}\}$, where we set the supremum to be $0$ if $\{t > 0 : \omega \in t \cvxset^{\star}_{\pl}\} = \emptyset$.
\end{lemma}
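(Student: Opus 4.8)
The plan is to reduce both identities to the positive homogeneity of the (reverse) support function, combined with the elementary observation that for $\omega\in\PSD$ and $t>0$ the rescaled operator $\omega/t$ is automatically positive semidefinite, so that membership in $\polarPSD{\cvxset}$ (resp. $\cvxset^{\star}_{\pl}$) collapses to membership in $\cvxset^\circ$ (resp. $\cvxset^\star$).

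First I would record the homogeneity facts. Since $h_{\cvxset}$ and $\revh_{\cvxset}$ are respectively a supremum and an infimum of the linear functionals $\sigma\mapsto\tr[\omega\sigma]$, they satisfy $h_{\cvxset}(\lambda\omega)=\lambda\,h_{\cvxset}(\omega)$ and $\revh_{\cvxset}(\lambda\omega)=\lambda\,\revh_{\cvxset}(\omega)$ for every $\lambda>0$, with the convention $\lambda\cdot(+\infty)=+\infty$. I would also note that because $\cvxset\subseteq\PSD$ and $\omega\in\PSD$, both $h_{\cvxset}(\omega)$ and $\revh_{\cvxset}(\omega)$ are nonnegative, with $\revh_{\cvxset}(\omega)$ moreover finite whenever $\cvxset$ is nonempty (bounded above by $\tr[\omega\sigma_0]$ for any fixed $\sigma_0\in\cvxset$).

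For the support-function identity I would rewrite the membership condition: $\omega\in t\,\polarPSD{\cvxset}$ is equivalent to $\omega/t\in\cvxset^\circ\cap\PSD$, and since $\omega/t\in\PSD$ holds automatically this reduces to $\omega/t\in\cvxset^\circ$, i.e. $h_{\cvxset}(\omega/t)\le 1$. Applying homogeneity gives $h_{\cvxset}(\omega)\le t$, so that $\{t>0:\omega\in t\,\polarPSD{\cvxset}\}=\{t>0:t\ge h_{\cvxset}(\omega)\}$, whose infimum is $h_{\cvxset}(\omega)$. The reverse-support identity is entirely parallel: $\omega\in t\,\cvxset^{\star}_{\pl}$ reduces to $\revh_{\cvxset}(\omega/t)\ge 1$, i.e. $t\le\revh_{\cvxset}(\omega)$, so $\{t>0:\omega\in t\,\cvxset^{\star}_{\pl}\}=\{t>0:t\le\revh_{\cvxset}(\omega)\}$, whose supremum is $\revh_{\cvxset}(\omega)$.

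The only care needed—and the one place I would be explicit—is matching the stated conventions at the boundary. If $h_{\cvxset}(\omega)=+\infty$ the scaling set is empty and its infimum is declared $+\infty$, consistent with the formula; if $h_{\cvxset}(\omega)=0$ the scaling set is all of $(0,\infty)$ with infimum $0$. Dually, if $\revh_{\cvxset}(\omega)=0$ the set $\{t>0:t\le 0\}$ is empty and its supremum is declared $0$, again matching. I do not anticipate any genuine obstacle here: the entire content is the homogeneity step, and the remaining burden is purely the bookkeeping of the three regimes (value $=0$, finite and positive, or $=+\infty$) against the prescribed empty-set conventions.
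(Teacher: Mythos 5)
Your proof is correct and is essentially the paper's argument: both reduce to positive homogeneity of $h_{\cvxset}$ (resp.\ $\revh_{\cvxset}$) together with the defining characterization $\cvxset^{\circ}=\{X: h_{\cvxset}(X)\leq 1\}$, with the observation that $\omega/t\in\PSD$ is automatic for $t>0$. You merely package the paper's two-inequality argument as an exact identification of the feasible set $\{t>0:\omega\in t\,\polarPSD{\cvxset}\}=\{t>0: t\geq h_{\cvxset}(\omega)\}$, and your explicit bookkeeping of the boundary conventions is consistent with the statement.
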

\begin{proof}
We write the proof for $h_{\sC}$, for $\revh_{\cvxset}$ the proof is similar. We start from ``$\geq$'' direction. Let $s > h_{\cvxset}(\omega)\geq 0$. We have $h_{\cvxset}(\omega/s) = h_{\cvxset}(\omega)/s < 1$. So $\omega / s \in \polarPSD{\cvxset}$ and therefore $\omega \in s \polarPSD{\cvxset}$. This implies that $s$ is a feasible solution for $\inf\{t > 0: \omega \in t \polarPSD{\cvxset}\}$, and we get $s \geq \inf\{t > 0: \omega \in t \polarPSD{\cvxset}\}$. As this holds for any $s > h_{\cvxset}(\omega)$, we get $h_{\cvxset}(\omega) \geq \inf\{t > 0: \omega \in t \polarPSD{\cvxset}\}$. Now we prove the ``$\leq$'' direction. Let $z$ be any feasible solution to $\inf\{t > 0: \omega \in t \polarPSD{\cvxset}\}$. We have $\omega = z X$ with $X \in \polarPSD{\cvxset}$. Then $h_{\cvxset}(\omega) = z h_{\cvxset}(X) \leq z$ because $X \in \polarPSD{\cvxset}$ implies $h_{\cvxset}(X) \leq 1$. As this holds for any feasible solution $z$, we have $h_{\cvxset}(\omega) \leq \inf\{t > 0: \omega \in t \polarPSD{\cvxset}\}$.
\end{proof}

\begin{definition}(Stability.)\label{def: closed under tensor product}
    Let $\cH_1$ and $\cH_2$ be finite-dimensional Hilbert spaces. Consider three sets $\sA_1 \subseteq \PSD(\cH_1)$, $\sA_2 \subseteq \PSD(\cH_2)$, and $\sA_{12} \subseteq \PSD(\cH_1 \otimes \cH_2)$. We call $\{\sA_1,\sA_2,\sA_{12}\}$ is {closed under tensor product} if for any $X_1 \in \sA_1$, $X_2 \in \sA_2$, we have $X_1 \otimes X_2 \in \sA_{12}$. In short, we write $\sA_1 \ox \sA_2 \subseteq \sA_{12}$.
\end{definition}

The following lemma provides an equivalent condition for determining if the polar sets of interest are closed under tensor product, which can be easier to validate for specific examples.

\begin{lemma}\label{lema: polar set and support function}
Let $\cH_1$ and $\cH_2$ be finite-dimensional Hilbert spaces. Consider three sets $\sA_1 \subseteq \PSD(\cH_1)$, $\sA_2 \subseteq \PSD(\cH_2)$, and $\sA_{12} \subseteq \PSD(\cH_1 \otimes \cH_2)$. Their polar sets are closed under tensor product if and only if their support functions are sub-multiplicative. That is,
\begin{align}\label{eq: polar set and support function tmp1}
\polarPSD{(\sA_{1})} \ox \polarPSD{(\sA_{2})} \subseteq \polarPSD{(\sA_{12})} \iff h_{\sA_{12}}(X_1 \ox X_2) \leq h_{\sA_1}(X_1) h_{\sA_2}(X_2),\; \forall X_i \in \PSD(\cH_i).
\end{align}
Similarly, the reverse polar sets are closed under tensor product if and only if the reverse support functions are super-multiplicative. That is,
\begin{align}\label{eq: reverse polar set and support function tmp1}
(\sA_{1})^{\star}_{\pl} \ox (\sA_{2})^{\star}_{\pl} \subseteq (\sA_{12})^{\star}_{\pl} \iff \revh_{\sA_{12}}(X_1 \ox X_2) \geq \revh_{\sA_1}(X_1) \revh_{\sA_2}(X_2),\; \forall X_i \in \PSD(\cH_i).
\end{align}
\end{lemma}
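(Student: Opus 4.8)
The plan is to reduce both equivalences to two elementary properties of the (reverse) support function: positive homogeneity and the pointwise characterization of the (reverse) polar set. Concretely, directly from the definitions one has, for every $X \in \PSD$,
\[
X \in \polarPSD{\cvxset} \iff h_{\cvxset}(X) \le 1, \qquad X \in \cvxset^{\star}_{\pl} \iff \revh_{\cvxset}(X) \ge 1,
\]
together with $h_{\cvxset}(\lambda X) = \lambda\, h_{\cvxset}(X)$ and $\revh_{\cvxset}(\lambda X) = \lambda\, \revh_{\cvxset}(X)$ for every scalar $\lambda > 0$. Moreover, since the argument $X$ and every element of $\cvxset$ are positive semidefinite, $h_{\cvxset}(X) \ge 0$ and $\revh_{\cvxset}(X) \ge 0$ on $\PSD$. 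These are the only structural facts I will use, and I will assume throughout (as holds for all compact sets of interest) that the support functions are finite on $\PSD$, so that no $\infty \cdot 0$ ambiguity arises.

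For the polar statement I would first prove ``submultiplicativity $\Rightarrow$ inclusion''. Pick arbitrary $X_1 \in \polarPSD{(\sA_1)}$ and $X_2 \in \polarPSD{(\sA_2)}$, so $h_{\sA_1}(X_1) \le 1$ and $h_{\sA_2}(X_2) \le 1$; the assumed submultiplicativity then gives $h_{\sA_{12}}(X_1 \ox X_2) \le h_{\sA_1}(X_1)\, h_{\sA_2}(X_2) \le 1$, and since $X_1 \ox X_2 \in \PSD$ this places it in $\polarPSD{(\sA_{12})}$. For the converse, fix $X_1, X_2 \in \PSD$ and set $a := h_{\sA_1}(X_1)$ and $b := h_{\sA_2}(X_2)$. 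When $a, b > 0$, homogeneity yields $h_{\sA_1}(X_1/a) = 1$ and $h_{\sA_2}(X_2/b) = 1$, hence $X_1/a \in \polarPSD{(\sA_1)}$ and $X_2/b \in \polarPSD{(\sA_2)}$; the assumed inclusion gives $(X_1/a) \ox (X_2/b) \in \polarPSD{(\sA_{12})}$, and applying homogeneity once more converts $h_{\sA_{12}}\big((X_1 \ox X_2)/(ab)\big) \le 1$ into the desired $h_{\sA_{12}}(X_1 \ox X_2) \le ab$.

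The only nonroutine point, which I would flag as the main obstacle, is the degenerate case $a = 0$ (or $b = 0$) in the converse, where the normalization $X_1/a$ is undefined. Here I would argue by a scaling limit: for every $\varepsilon > 0$ one still has $h_{\sA_1}(X_1/\varepsilon) = a/\varepsilon = 0 \le 1$, so $X_1/\varepsilon \in \polarPSD{(\sA_1)}$, and (taking $b > 0$, say) $X_2/b \in \polarPSD{(\sA_2)}$, whence the inclusion gives $h_{\sA_{12}}(X_1 \ox X_2) \le \varepsilon b$; letting $\varepsilon \to 0$ and using nonnegativity of $h$ forces $h_{\sA_{12}}(X_1 \ox X_2) = 0 = ab$. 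The subcase $b = 0$ is symmetric, and if both vanish one scales both arguments.

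Finally, the reverse-polar equivalence is proved by the mirror-image argument with all inequalities reversed. The direction ``super-multiplicativity $\Rightarrow$ inclusion'' follows because $\revh_{\sA_1}(X_1) \ge 1$ and $\revh_{\sA_2}(X_2) \ge 1$ give $\revh_{\sA_{12}}(X_1 \ox X_2) \ge 1$, so $X_1 \ox X_2 \in (\sA_{12})^{\star}_{\pl}$; the converse normalizes by $a := \revh_{\sA_1}(X_1)$ and $b := \revh_{\sA_2}(X_2)$ exactly as before. In this case the degenerate situation is trivial rather than delicate: if $a = 0$ or $b = 0$, the claimed bound $\revh_{\sA_{12}}(X_1 \ox X_2) \ge ab = 0$ holds automatically by nonnegativity of $\revh$ on positive semidefinite arguments, so no limiting argument is needed.
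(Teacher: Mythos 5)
Your proof is correct and takes essentially the same approach as the paper: the direction from sub-multiplicativity to the inclusion is identical, and your normalization-by-homogeneity argument for the converse is exactly what the paper packages via the dual representation $h_{\cvxset}(\omega)=\inf\{t>0:\omega\in t\,\polarPSD{\cvxset}\}$ of Lemma~\ref{lem: support dual}, whose infimum over feasible $t$ automatically absorbs the degenerate case $h_{\sA_1}(X_1)=0$ that you handle by an explicit scaling limit. Your finiteness caveat is a reasonable reading of the statement and is likewise implicit in the paper's feasible-solution phrasing, so nothing is lost.
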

\begin{proof}
We focus on establishing Eq.~\eqref{eq: polar set and support function tmp1}, Eq.~\eqref{eq: reverse polar set and support function tmp1} is analogous. We start from ``$\Rightarrow$'' direction. Consider any $X_1 \in \PSD(\cH_1), X_2 \in \PSD(\cH_2)$. Let $t_1$ and $t_2$ be any feasible solutions of the dual programs in Lemma~\ref{lem: support dual} for $h_{\sA_1}(X_1)$ and $h_{\sA_2}(X_2)$, respectively. 
Then we have $X_1 \in t_1 \polarPSD{(\sA_{1})}$ and $X_2 \in t_2 \polarPSD{(\sA_{2})}$. 
This implies $X_1\ox X_2 \in t_1 t_2 \polarPSD{(\sA_{1})} \ox \polarPSD{(\sA_{2})} \subseteq t_1t_2 \polarPSD{(\sA_{12})}$ and therefore $t_1t_2$ is also a feasible solution for the dual program of $h_{\sA_{12}}(X_1 \ox X_2)$. So $h_{\sA_{12}}(X_1 \ox X_2) \leq t_1t_2$. As this holds for any feasible solutions $t_1, t_2$, we have $h_{\sA_{12}}(X_1 \ox X_2) \leq h_{\sA_1}(X_1) h_{\sA_2}(X_2)$. Now we prove the ``$\Leftarrow$'' direction. For any $Y_1 \in \polarPSD{(\sA_{1})}$ and $Y_2 \in \polarPSD{(\sA_{2})}$, we have $0 \leq h_{\sA_1}(Y_1) \leq 1$ and $0 \leq h_{\sA_2}(Y_2) \leq 1$. This implies that $h_{\sA_{12}}(Y_1 \ox Y_2) \leq 1$ and therefore, $Y_1 \ox Y_2 \in \polarPSD{(\sA_{12})}$. This shows $\polarPSD{(\sA_{1})} \ox \polarPSD{(\sA_{2})} \subseteq \polarPSD{(\sA_{12})}$.
\end{proof}

\subsection{Quantum divergences}

A functional $\DD: \density \times \PSD \to \RR$ is a quantum divergence if it satisfies the data-processing inequality $\DD(\cE(\rho)\|\cE(\sigma)) \leq \DD(\rho\|\sigma)$ for any CPTP map $\cE$ and $(\rho,\sigma) \in \density \times \PSD$. In the following, we will introduce several quantum divergences and their fundamental properties, which will be used throughout this work. Additionally, we will define quantum divergences between two sets of quantum states, which will be the main quantity of interest in this work.

\begin{definition}(Petz \Renyi divergence~\cite{petz1986quasi}.)
Let $\alpha \in (0,1) \cup (1,+\infty)$. For any $\rho\in \density$ and $\sigma \in \PSD$, the Petz \Renyi divergence is defined by
\begin{align}\label{eq: Petz}
    D_{\Petz,\alpha}(\rho\|\sigma) := \frac{1}{\alpha-1}\log\tr\left[\rho^\alpha\sigma^{1-\alpha}\right],
\end{align}
if  $\supp(\rho) \subseteq \supp(\sigma)$, and $+\infty$ otherwise. 
\end{definition}

\begin{definition}(Sandwiched \Renyi divergence~\cite{muller2013quantum,wilde2014strong}.)
Let $\alpha \in (0,1) \cup (1,+\infty)$. For any $\rho\in \density$ and $\sigma \in \PSD$, the sandwiched \Renyi divergence is defined by
\begin{align}\label{eq: Sandwiched}
    D_{\Sand,\alpha}(\rho\|\sigma) := \frac{1}{\alpha-1}\log\tr\left[\sigma^{\frac{1-\alpha}{2\alpha}}\rho\sigma^{\frac{1-\alpha}{2\alpha}}\right]^\alpha,
\end{align}
if $\supp(\rho) \subseteq \supp(\sigma)$, and $+\infty$ otherwise.
\end{definition}

\begin{definition}(Umegaki relative entropy~\cite{umegaki1954conditional}.)
For any $\rho\in \density$ and $\sigma \in \PSD$, the Umegaki relative entropy is defined by
\begin{align}\label{eq: Umegaki}
    D(\rho\|\sigma):= \tr[\rho(\log \rho - \log \sigma)],
\end{align}
if $\supp(\rho) \subseteq \supp(\sigma)$ and $+\infty$ otherwise.
\end{definition}

When $\alpha \to 1$, both $D_{\Petz,\alpha}$ and $D_{\Sand,\alpha}$ converge to the Umegaki relative entropy~\cite{muller2013quantum,wilde2014strong},
\begin{align}\label{eq: state Renyi continuous}
  \lim_{\alpha \to 1} D_{\Petz,\alpha}(\rho\|\sigma) = \lim_{\alpha \to 1} D_{\Sand,\alpha}(\rho\|\sigma) = D(\rho\|\sigma).
\end{align}
When $\alpha \to 0^+$, the Petz \Renyi divergence converges to the min-relative entropy or the Petz \Renyi divergence of order $0$,
\begin{align}
    \lim_{\alpha \to 0^+} D_{\Petz,\alpha}(\rho\|\sigma) = D_{\min}(\rho\|\sigma) = D_{\Petz, 0}(\rho \| \sigma) := -\log \tr [\Pi_\rho \sigma],
\end{align}
with $\Pi_\rho$ the projection on the support of $\rho$. When $\alpha \to \infty$, the sandwiched \Renyi divergence converges to the max-relative entropy~\cite{datta2009min,renner2005security},
\begin{align}\label{eq: definition of Dmax}
\lim_{\alpha \to \infty} D_{\Sand,\alpha}(\rho\|\sigma) = D_{\max}(\rho\|\sigma):= \log\inf\big\{t \in \RR \;:\; \rho \leq t\sigma \big\}\;,
\end{align}
if $\supp(\rho) \subseteq \supp(\sigma)$ and $+\infty$ otherwise. Let $F(\rho,\sigma) := \|\sqrt{\rho}\sqrt{\sigma}\|_1+\sqrt{(1-\tr\rho)(1-\tr \sigma)}$ be the generalized fidelity and $P(\rho,\sigma):= \sqrt{1-F^2(\rho,\sigma)}$ be the purified distance. Let $\ve \in (0,1)$. Then the smoothed max-relative entropy is defined by
\begin{align}
  D_{\max,\ve}(\rho\|\sigma):= \inf_{\rho':P(\rho',\rho) \leq \ve} D_{\max}(\rho'\|\sigma),
\end{align}
where the infimum is taken over all subnormalized states that are $\ve$-close to the state $\rho$.

\begin{definition}(Hypothesis testing relative entropy.)
Let $\ve \in [0,1]$. For any $\rho\in \density$ and $\sigma \in \PSD$, the quantum hypothesis testing relative entropy is defined by $D_{\Hypo, \ve}(\rho\|\sigma) := -\log \beta_{\ve}(\rho\|\sigma)$
where
\begin{align}
    \beta_\ve(\rho\|\sigma): = \min_{0\leq M \leq I} \left\{\tr[\sigma M]: \tr[\rho(I-M)] \leq \ve\right\}.
\end{align}
\end{definition}

The following lemma combines~\cite[Lemma 5]{cooney2016strong} and~\cite[Proposition 3]{qi2018applications}.
\begin{lemma}(One-shot equivalence.)\label{lem: DH petz sandwiched}
Let $\alpha \in (0,1)$, $\alpha' \in (1,+\infty)$ and $\ve\in(0,1)$. For any $\rho\in \density$ and $\sigma \in \PSD$, it holds that
\begin{align}
D_{\Petz,\alpha}(\rho\|\sigma) + \frac{\alpha}{\alpha-1} \log \frac{1}{\ve} \leq D_{\Hypo,\ve}(\rho\|\sigma)\leq D_{\Sand,\alpha'}(\rho\|\sigma) + \frac{\alpha'}{\alpha'-1}\log\frac{1}{1-\ve}.
\end{align}
\end{lemma}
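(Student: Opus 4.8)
The plan is to prove the two inequalities separately, since they are the converse and achievability directions of one-shot hypothesis testing and call for different tools. Throughout, fix a test $0\leq M\leq I$ and consider the binary measurement channel $\cM: X \mapsto \tr[XM]\,\ketbra{0}{0} + \tr[X(I-M)]\,\ketbra{1}{1}$, which sends $\rho$ and $\sigma$ to the classical pairs $p=(\tr[\rho M],\tr[\rho(I-M)])$ and $q=(\tr[\sigma M],\tr[\sigma(I-M)])$. I will use that the Petz and sandwiched \Renyi divergences coincide on commuting (classical) operators.

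For the upper bound I would argue as follows. By the data-processing inequality for the sandwiched \Renyi divergence (valid for $\alpha'>1$), $D_{\Sand,\alpha'}(\rho\|\sigma) \geq D_{\Sand,\alpha'}(\cM(\rho)\|\cM(\sigma)) = \frac{1}{\alpha'-1}\log\big((\tr[\rho M])^{\alpha'}(\tr[\sigma M])^{1-\alpha'} + (\tr[\rho(I-M)])^{\alpha'}(\tr[\sigma(I-M)])^{1-\alpha'}\big)$. Dropping the nonnegative second term and using $\frac{1}{\alpha'-1}>0$, this is at least $\frac{\alpha'}{\alpha'-1}\log\tr[\rho M] - \log\tr[\sigma M]$. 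Now take $M$ to be the optimal test for $\beta_\ve(\rho\|\sigma)$, so that $\tr[\sigma M]=\beta_\ve(\rho\|\sigma)$ and $\tr[\rho M]=1-\tr[\rho(I-M)]\geq 1-\ve$; since $\frac{\alpha'}{\alpha'-1}>0$ we may replace $\tr[\rho M]$ by $1-\ve$ and rearrange to get $-\log\beta_\ve(\rho\|\sigma) \leq D_{\Sand,\alpha'}(\rho\|\sigma) + \frac{\alpha'}{\alpha'-1}\log\frac{1}{1-\ve}$, which is the claimed upper bound on $D_{\Hypo,\ve}$.

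For the lower bound, the same data-processing manipulation does not close, because the discarded classical term now involves $\tr[\sigma(I-M)]$, which is unbounded for a general $\sigma\in\PSD$. Instead I would construct an explicit Neyman--Pearson test: fix a threshold $\gamma>0$ and let $M_\gamma = \{\rho - \gamma\sigma \geq 0\}$ be the spectral projector onto the nonnegative part of $\rho-\gamma\sigma$. The key non-commutative ingredient is the pair of quantum trace inequalities (of Audenaert et al.\ / Nagaoka type): for $0<\alpha<1$, $\tr[\rho(I-M_\gamma)] \leq \gamma^{1-\alpha}\tr[\rho^\alpha\sigma^{1-\alpha}]$ and $\tr[\sigma M_\gamma] \leq \gamma^{-\alpha}\tr[\rho^\alpha\sigma^{1-\alpha}]$. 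Choosing $\gamma$ so that the first bound equals $\ve$, i.e.\ $\gamma^{1-\alpha} = \ve/\tr[\rho^\alpha\sigma^{1-\alpha}]$, makes $M_\gamma$ feasible for $\beta_\ve(\rho\|\sigma)$, and substituting this $\gamma$ into the second bound gives $\beta_\ve(\rho\|\sigma) \leq \tr[\sigma M_\gamma] \leq \ve^{-\alpha/(1-\alpha)}\,\tr[\rho^\alpha\sigma^{1-\alpha}]^{1/(1-\alpha)}$. Taking $-\log$ and using $\frac{1}{1-\alpha}\log\tr[\rho^\alpha\sigma^{1-\alpha}] = -D_{\Petz,\alpha}(\rho\|\sigma)$ yields $D_{\Hypo,\ve}(\rho\|\sigma) \geq D_{\Petz,\alpha}(\rho\|\sigma) + \frac{\alpha}{\alpha-1}\log\frac{1}{\ve}$. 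This mirrors exactly the classical likelihood-ratio computation, where the two bounds follow by dominating an indicator by a power of the likelihood ratio on each region.

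I expect the main obstacle to be the lower bound, specifically the two non-commutative trace inequalities controlling the errors of $M_\gamma$: classically they are one-line, but quantumly they require a genuine operator argument (complex interpolation, or Araki--Lieb--Thirring). Since this is precisely what the cited works establish, in the write-up I would either invoke~\cite{qi2018applications} and~\cite{cooney2016strong} directly, or reproduce the two trace inequalities as a preliminary lemma and assemble the bound as above. The support-condition bookkeeping (the $+\infty$ convention when $\supp(\rho)\not\subseteq\supp(\sigma)$) is routine, handled by observing that both sides are $+\infty$ simultaneously.
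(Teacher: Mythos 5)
Your proof is correct, and it is essentially the argument behind the two results that the paper invokes: the paper gives no proof of this lemma itself, stating only that it combines \cite[Lemma 5]{cooney2016strong} and \cite[Proposition 3]{qi2018applications}, and your two directions reconstruct exactly those proofs — data processing of $D_{\Sand,\alpha'}$ under the binary measurement channel for the upper bound, and the Neyman--Pearson test $M_\gamma=\{\rho-\gamma\sigma\geq 0\}$ with the Audenaert-type trace inequalities (which both follow from the single quantum Chernoff inequality $\tr[A(I-\{A-B\geq0\})]+\tr[B\{A-B\geq0\}]\leq\tr[A^{\alpha}B^{1-\alpha}]$ applied to $A=\rho$, $B=\gamma\sigma$) for the lower bound. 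The algebra in both directions checks out, including the choice $\gamma^{1-\alpha}=\ve/\tr[\rho^\alpha\sigma^{1-\alpha}]$ and the sign manipulations of the prefactors.

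One correction to your closing remark: for the left inequality, the support bookkeeping is \emph{not} handled by ``both sides are $+\infty$ simultaneously.'' If $\supp(\rho)\not\subseteq\supp(\sigma)$ but $\rho$ and $\sigma$ are not orthogonal (e.g.\ $\rho=\frac12 I_2$ and $\sigma=\ketbra{0}{0}$), then $D_{\Hypo,\ve}(\rho\|\sigma)$ is finite for small $\ve$ (here $\beta_\ve = 1-2\ve$ for $\ve<1/2$), while the paper's literal convention sets $D_{\Petz,\alpha}(\rho\|\sigma)=+\infty$ for all $\alpha$, which would falsify the left inequality as stated. The resolution is that for $\alpha\in(0,1)$ the Petz divergence should be read as the trace formula $\frac{1}{\alpha-1}\log\tr[\rho^\alpha\sigma^{1-\alpha}]$ whenever $\tr[\rho^\alpha\sigma^{1-\alpha}]>0$, with no support hypothesis — and this is precisely what your Audenaert-inequality argument proves, since it never uses $\supp(\rho)\subseteq\supp(\sigma)$. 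The only genuinely degenerate case is $\tr[\rho^\alpha\sigma^{1-\alpha}]=0$, i.e.\ $\rho\perp\sigma$, where the test $M=\Pi_\rho$ gives $\beta_\ve=0$ and both sides are indeed $+\infty$. So your proof is fine; it is the blanket ``$+\infty$ otherwise'' convention (inherited from the paper's definition, which is tailored to $\alpha>1$) that needs the caveat, not your argument.
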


\begin{definition}(Measured relative entropy~\cite{donald1986relative,hiai1991proper}.)
For any $\rho \in \density$ and $\sigma \in \PSD$, the measured relative entropy is defined by
\begin{align}
D_{\Meas} (\rho\|\sigma) := \sup_{(\cX,M)} D(P_{\rho,M}\|P_{\sigma,M}),
\end{align}
where $D$ is the Kullback–Leibler divergence and the optimization is over finite sets $\cX$ and positive operator valued measures $M$ on $\cX$ such that $M_x \geq 0$ and $\sum_{x \in \cX} M_x = I$, $P_{\rho,M}$ is a measure on $\cX$ defined via the relation $P_{\rho,M}(x) = \tr[M_x\rho]$ for any $x \in \cX$.    
\end{definition}

A variational expression for $D_{\Meas}$ is given by~\cite[Lemma 1]{Berta2017},
\begin{align}\label{eq: DM variational}
    D_{\Meas}(\rho\|\sigma) = \sup_{\omega \in \PD} \ \tr[\rho \log \omega] + 1  - \tr[\sigma \omega].
\end{align}

\begin{definition}(Measured \Renyi divergence~\cite{Berta2017}.)
Let $\alpha \in (0,1) \cup (1,\infty)$.  For any $\rho \in \density$ and $\sigma \in \PSD$, the {measured \Renyi divergence} is defined as
\begin{align}\label{eq: definition DM alpha}
D_{\Meas, \alpha} (\rho\|\sigma) := \sup_{(\cX,M)} D_{\alpha}(P_{\rho,M}\|P_{\sigma,M}),
\end{align}
where $D_{\alpha}$ is the classical \Renyi divergence. 
\end{definition}
A variational expression for $D_{\Meas, \alpha}$ is given by~\cite[Lemma 3]{Berta2017},
\begin{align}\label{eq: DM alpha variational}
D_{\Meas,\alpha}(\rho\|\sigma) = \frac{1}{\alpha - 1} \log  \begin{cases}  \displaystyle \inf_{W \in \PD} \ \alpha \tr[\rho W] + (1-\alpha) \tr\left[\sigma W^{\frac{\alpha}{\alpha-1}}\right], & \text{if}\ \alpha \in (0,1/2),\\
\displaystyle \inf_{W \in \PD} \ \alpha \tr\left[\rho W^{\frac{\alpha-1}{\alpha}}\right] + (1-\alpha) \tr[\sigma W], & \text{if}\ \alpha \in [1/2,1),\\
\displaystyle \sup_{W \in \PD} \ \alpha \tr\left[\rho W^{\frac{\alpha-1}{\alpha}}\right] + (1-\alpha) \tr[\sigma W], & \text{if}\ \alpha \in (1,\infty).
\end{cases}
\end{align}
When $\alpha \to 1$, the measured \Renyi divergence converges to the measured relative entropy. It is known that~\cite[Lemma 2]{rippchen2024locally}
$D_{\Meas,\alpha}$ is jointly convex for $\alpha \in (0,1]$ and jointly quasi-convex for $\alpha \in (1,\infty)$. For any $\cE \in \CPTP$ and $\alpha > 0$, the data-processing inequality holds~\cite[Lemma 5]{rippchen2024locally}
\begin{align}
    D_{\Meas,\alpha}(\cE(\rho)\|\cE(\sigma)) \leq D_{\Meas,\alpha}(\rho\|\sigma).
\end{align}

The following result shows the ordering relation among different relative entropies.

\begin{lemma}\label{thm: comparison of quantum divergence}
Let $\alpha \in [1/2,1)$. For any $\rho \in \density$ and $\sigma \in \PSD$, 
\begin{align}
	D_{\min}(\rho\|\sigma) \leq D_{\Meas, \alpha}(\rho\|\sigma) \leq D_{\Meas}(\rho\|\sigma) \leq  D(\rho\|\sigma).
\end{align}
\end{lemma}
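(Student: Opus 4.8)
The plan is to establish the four-term chain by proving each of the three inequalities separately, since each is an instance of either a monotonicity property of the classical \Renyi divergence or the data-processing inequality. Throughout I would use that for a fixed measurement $(\cX, M)$ the induced distributions $P_{\rho,M}$ and $P_{\sigma,M}$ reduce every quantum quantity to its classical counterpart, so the quantum ordering can be read off from the classical one after taking suprema over $M$.

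For the first inequality $D_{\min}(\rho\|\sigma) \leq D_{\Meas,\alpha}(\rho\|\sigma)$, I would exhibit a single measurement that already attains $D_{\min}$, so that the supremum defining $D_{\Meas,\alpha}$ dominates it. Take the binary projective measurement $M = \{\Pi_\rho, I - \Pi_\rho\}$ with $\Pi_\rho$ the support projection of $\rho$. Then $P_{\rho,M} = (1,0)$ is deterministic and $P_{\sigma,M} = (\tr[\Pi_\rho\sigma],\, \tr[(I-\Pi_\rho)\sigma])$. A direct computation of the classical \Renyi divergence on a deterministic first argument gives
\begin{align}
D_\alpha(P_{\rho,M}\|P_{\sigma,M}) = \frac{1}{\alpha-1}\log\big(\tr[\Pi_\rho\sigma]\big)^{1-\alpha} = -\log\tr[\Pi_\rho\sigma] = D_{\min}(\rho\|\sigma),
\end{align}
valid for every $\alpha\in(0,1)$ because $0^\alpha = 0$ kills the off-support term. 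Since $D_{\Meas,\alpha}$ is a supremum over all measurements, it is at least the value attained by this particular one, which yields the claim.

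For the remaining two inequalities I would pass through the classical ordering measurement by measurement. The classical \Renyi divergence $D_\alpha(P\|Q)$ is nondecreasing in the order $\alpha$, so for $\alpha\in[1/2,1)$ we have $D_\alpha(P_{\rho,M}\|P_{\sigma,M}) \leq D(P_{\rho,M}\|P_{\sigma,M})$ for every fixed $M$; bounding the right-hand side by its supremum over $M$ and then taking the supremum on the left gives $D_{\Meas,\alpha}(\rho\|\sigma) \leq D_{\Meas}(\rho\|\sigma)$. Finally, viewing each measurement as a quantum-to-classical CPTP map $\cE_M(\cdot) = \sum_x \tr[M_x \cdot]\,\ketbra{x}{x}$, the data-processing inequality for the Umegaki relative entropy gives
\begin{align}
D(P_{\rho,M}\|P_{\sigma,M}) = D(\cE_M(\rho)\|\cE_M(\sigma)) \leq D(\rho\|\sigma),
\end{align}
and taking the supremum over $M$ yields $D_{\Meas}(\rho\|\sigma) \leq D(\rho\|\sigma)$, completing the chain.

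I do not anticipate a serious obstacle, as all three steps reduce to standard facts. The only points requiring care are the degenerate computation on the deterministic distribution (ensuring the off-support term vanishes and that the support condition $\supp\rho\subseteq\supp\sigma$ keeps $D_{\min}$ and $D$ finite) and the correct invocation of monotonicity of the classical \Renyi divergence in its order together with data-processing of the Umegaki entropy under the measurement channel. I note that the hypothesis $\alpha\in[1/2,1)$ is in fact stronger than needed, since the argument goes through for every $\alpha\in(0,1)$; I would simply work within the stated range.
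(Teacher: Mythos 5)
Your proof is correct, and while your treatment of the last two inequalities coincides with the paper's (measurement-wise monotonicity of the classical R\'enyi divergence in $\alpha$, followed by data processing for the Umegaki relative entropy under the measurement channel $\cE_M$), your handling of the first inequality takes a genuinely different route. The paper reduces to the single value $\alpha = 1/2$ via monotonicity of $D_{\Meas,\alpha}$ in $\alpha$, invokes the variational (Alberti/fidelity) formula $D_{\Meas,1/2}(\rho\|\sigma) = -\log\inf_{\omega\in\PD}\tr[\rho\,\omega^{-1}]\tr[\sigma\omega]$, tests it on the perturbed operator $\omega_\ve = \Pi_\rho + \ve(I-\Pi_\rho)$, and takes $\ve\to 0^+$. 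You instead exhibit the binary measurement $\{\Pi_\rho,\, I-\Pi_\rho\}$ directly and observe that on the resulting deterministic distribution the classical R\'enyi divergence equals $-\log\tr[\Pi_\rho\sigma] = D_{\min}(\rho\|\sigma)$ exactly, for every $\alpha\in(0,1)$. Your argument is more elementary (no variational formula, no limiting procedure, no appeal to monotonicity to reduce to $\alpha=1/2$) and, as you note, proves the slightly stronger statement on all of $(0,1)$ in one stroke; it also correctly handles the degenerate case $\tr[\Pi_\rho\sigma]=0$, where both sides are $+\infty$. What the paper's route buys is a demonstration of the variational machinery for $D_{\Meas,\alpha}$ that is the workhorse of the rest of the paper, and an explicit link between $D_{\Meas,1/2}$ and the fidelity, but for this lemma in isolation your proof is the cleaner one.
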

\begin{proof}
The last two inequalities follow from the monotonicity in $\alpha$ of the classical R\'enyi divergences and the data processing inequality for $D$. As $D_{\Meas, \alpha}$ is monotone increasing in $\alpha$, it remains to show the first inequality for $\alpha = \frac12$. By the variational formula in~\cite[Eq. (21)]{Berta2017}, we have
\begin{align}
D_{\Meas, 1/2}(\rho\|\sigma) = -\log \inf_{\omega \in \PD} \tr[\rho \omega^{-1}] \tr[\sigma \omega],
\end{align}
which is the same as the Alberti's theorem for quantum fidelity~(see e.g.~\cite[Corollary 3.20]{watrous2018theory}). Consider a feasible solution $\omega_\ve = \Pi_\rho + \ve(I - \Pi_\rho) \in \PD$ with $\ve > 0$. It gives $D_{\Meas, 1/2}(\rho\|\sigma) \geq -\log \tr[\rho \omega_\ve^{-1}] \tr[\sigma \omega_\ve]$. Since $\rho$ has trace one, it gives $\tr[\rho \omega_\ve^{-1}] = \tr[\rho] = 1$. Then we have \begin{align}
D_{\Meas, 1/2}(\rho\|\sigma) \geq -\log \tr[\sigma \omega_\ve] = -\log [(1-\ve) \tr \Pi_\rho \sigma + \ve].
\end{align} 
As the above holds for any $\ve > 0$, we take $\ve \to 0^{+}$ and get $D_{\Meas, 1/2}(\rho\|\sigma) \geq -\log \tr[\Pi_\rho \sigma] = D_{\min}(\rho\|\sigma)$, which completes the proof.
\end{proof}

\begin{lemma}\label{lem: DM and Sandwiched relation}
Let $\alpha \in [1/2,\infty)$. For any $\rho \in \density$ and $\sigma \in \PSD$, it hold
\begin{align}
    D_{\Meas,\alpha}(\rho\|\sigma) \leq D_{\Sand,\alpha}(\rho\|\sigma) \leq D_{\Meas,\alpha}(\rho\|\sigma) + 2\log |\spec(\sigma)|,
\end{align}
where $|\spec(\sigma)|$ is the number of mutually different eigenvalues of $\sigma$.
\end{lemma}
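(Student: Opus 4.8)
The plan is to prove the two inequalities separately: the lower bound $D_{\Meas,\alpha}(\rho\|\sigma)\le D_{\Sand,\alpha}(\rho\|\sigma)$ is a direct consequence of data processing, while the upper bound is the substantive part and will be obtained by a pinching argument. For the lower bound I would argue that any measurement can only decrease the sandwiched \Renyi divergence. Concretely, for a fixed POVM $M$ the post-measurement classical states $P_{\rho,M}$ and $P_{\sigma,M}$ commute, so $D_\alpha(P_{\rho,M}\|P_{\sigma,M})=D_{\Sand,\alpha}(P_{\rho,M}\|P_{\sigma,M})$, and since realizing the measurement is a \CPTP{} map, the data-processing inequality for $D_{\Sand,\alpha}$ (valid for $\alpha\ge 1/2$) gives $D_\alpha(P_{\rho,M}\|P_{\sigma,M})\le D_{\Sand,\alpha}(\rho\|\sigma)$. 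Taking the supremum over $(\cX,M)$ in the definition of $D_{\Meas,\alpha}$ then yields the claim.

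For the upper bound I would introduce the pinching map $\mathcal{P}_\sigma$ onto the eigenspaces of $\sigma$, whose number of blocks equals $v:=|\spec(\sigma)|$. It satisfies $\mathcal{P}_\sigma(\sigma)=\sigma$, its output $\mathcal{P}_\sigma(\rho)$ commutes with $\sigma$, and Hayashi's pinching inequality $\rho\le v\,\mathcal{P}_\sigma(\rho)$. The argument then splits into two pieces. First, since $\mathcal{P}_\sigma(\rho)$ and $\sigma$ commute, the sandwiched and measured \Renyi divergences between them coincide, so $D_{\Sand,\alpha}(\mathcal{P}_\sigma(\rho)\|\sigma)=D_{\Meas,\alpha}(\mathcal{P}_\sigma(\rho)\|\sigma)$; applying the data-processing inequality for $D_{\Meas,\alpha}$ to the pinching channel and using $\mathcal{P}_\sigma(\sigma)=\sigma$ gives $D_{\Sand,\alpha}(\mathcal{P}_\sigma(\rho)\|\sigma)\le D_{\Meas,\alpha}(\rho\|\sigma)$. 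Second, I would bound $D_{\Sand,\alpha}(\rho\|\sigma)$ by $D_{\Sand,\alpha}(\mathcal{P}_\sigma(\rho)\|\sigma)$ up to an additive $O(\log v)$ correction: conjugating the pinching inequality by $\sigma^{(1-\alpha)/(2\alpha)}$ (which commutes with $\mathcal{P}_\sigma$) and using monotonicity of $X\mapsto\tr[X^\alpha]$ with respect to the L\"owner order turns $\rho\le v\,\mathcal{P}_\sigma(\rho)$ into a comparison of the underlying $Q$-quantities. Chaining the two pieces produces $D_{\Sand,\alpha}(\rho\|\sigma)\le D_{\Meas,\alpha}(\rho\|\sigma)+O(\log v)$.

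The main obstacle is to make this correction term uniform in $\alpha$ and equal to the stated $2\log v$. The naive estimate above inserts the factor $v$ inside the $\alpha$-power, so after applying $\tfrac{1}{\alpha-1}\log(\cdot)$ it produces the prefactor $\tfrac{\alpha}{\alpha-1}$; this is harmless for large $\alpha$ but diverges as $\alpha\to 1^+$ and already exceeds $2$ on $(1,2)$. Moreover, for $\alpha\in[1/2,1)$ the sign of $\tfrac{1}{\alpha-1}$ reverses, so $\tr[X^\alpha]$-monotonicity then controls the wrong side of the inequality and a separate argument is needed. I would therefore treat the regimes $\alpha>1$ and $\alpha\in[1/2,1)$ individually and, rather than relying on crude Schatten-type monotonicity, exploit operator monotonicity of the logarithm to tame the behaviour near $\alpha=1$.

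As a guide for the right constant, note that the two endpoints are already clean: at $\alpha=1$ one has $D(\rho\|\mathcal{P}_\sigma(\rho))\le\log v$ because $\log\rho\le(\log v)\,I+\log\mathcal{P}_\sigma(\rho)$ follows from $\rho\le v\,\mathcal{P}_\sigma(\rho)$ by operator monotonicity of the logarithm, giving $D(\rho\|\sigma)\le D(\mathcal{P}_\sigma(\rho)\|\sigma)+\log v$; and at $\alpha=\infty$ the same operator inequality controls $D_{\max}$ with the identical $\log v$ loss. The factor $2$ in the statement is precisely the margin that absorbs the looseness incurred when gluing these endpoint-type estimates across the full range $\alpha\in[1/2,\infty)$, and I expect the bulk of the work to lie in verifying that the operator-monotonicity refinement indeed keeps the prefactor bounded by $2$ throughout the intermediate regime rather than reverting to the divergent $\tfrac{\alpha}{\alpha-1}$.
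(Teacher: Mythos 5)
Your lower bound is fine and matches the paper's (the paper simply cites \cite[Theorem 6]{Berta2017}, whose content is exactly your argument: classical outputs commute, so the classical \Renyi divergence equals the sandwiched one on them, and data processing for $D_{\Sand,\alpha}$ with $\alpha\geq 1/2$ does the rest). Your upper-bound skeleton is also the paper's: pinch with $\cP_\sigma$, use $D_{\Sand,\alpha}(\cP_\sigma(\rho)\|\sigma)=D_{\Meas,\alpha}(\cP_\sigma(\rho)\|\sigma)$ in the commuting case, then $\cP_\sigma(\sigma)=\sigma$ and data processing for $D_{\Meas,\alpha}$ to land on $D_{\Meas,\alpha}(\rho\|\sigma)$.

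However, there is a genuine gap at the one step that carries all the quantitative content: the pinching comparison
\begin{align}
D_{\Sand,\alpha}(\rho\|\sigma) \leq D_{\Sand,\alpha}(\cP_\sigma(\rho)\|\sigma) + 2\log|\spec(\sigma)|
\quad \text{uniformly for } \alpha\in[1/2,\infty).
\end{align}
The paper does not prove this either; it imports it wholesale from \cite[Lemma 3]{hayashi2016correlation}. You, by contrast, attempt to derive it and correctly diagnose why the direct route fails: feeding Hayashi's operator inequality $\rho\leq v\,\cP_\sigma(\rho)$ through $\tr[(\,\cdot\,)^\alpha]$-monotonicity yields the prefactor $\frac{\alpha}{\alpha-1}$, which diverges as $\alpha\to1^+$ and exceeds $2$ on $(1,2)$, while for $\alpha\in[1/2,1)$ the sign of $\frac{1}{\alpha-1}$ flips and the same inequality bounds the wrong side. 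Your proposed repair --- operator monotonicity of the logarithm plus the clean endpoint estimates at $\alpha=1$ and $\alpha=\infty$ --- does not close this: monotonicity of $D_{\Sand,\alpha}$ in $\alpha$ transfers endpoint bounds in the wrong direction (e.g.\ for $\alpha<1$, going through the $\alpha=1$ endpoint compares against $D(\cP_\sigma(\rho)\|\sigma)\geq D_{\Sand,\alpha}(\cP_\sigma(\rho)\|\sigma)$, which is useless), and no interpolation argument is actually supplied; you yourself flag that "the bulk of the work" remains. So the proposal reduces the lemma to precisely the statement that still needs proof, and neither proves it nor cites a result providing it. To repair the write-up, either invoke \cite[Lemma 3]{hayashi2016correlation} explicitly at this step, or reproduce a genuine proof of the uniform $2\log|\spec(\sigma)|$ pinching bound; the endpoint checks alone do not constitute one.
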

\begin{proof}
The first inequality is given by~\cite[Theorem 6]{Berta2017}. Now we prove the second inequality. Let $\cP_\sigma$ be a pinching map with respect to $\sigma$. Then we have 
\begin{align}
    D_{\Sand,\alpha}(\rho\|\sigma) & \leq D_{\Sand,\alpha}(\cP_\sigma(\rho)\|\sigma) + 2\log |\spec(\sigma)|\\
    & = D_{\Meas,\alpha}(\cP_\sigma(\rho)\|\sigma) + 2\log |\spec(\sigma)|\\
    & = D_{\Meas,\alpha}(\cP_\sigma(\rho)\|\cP_\sigma(\sigma)) + 2\log |\spec(\sigma)|\\
    & \leq D_{\Meas,\alpha}(\rho\|\sigma) + 2\log |\spec(\sigma)|
\end{align}
where the first line follows from~\cite[Lemma 3]{hayashi2016correlation}, the second line follows because $\cP_\sigma(\rho)$ and $\sigma$ commute and therefore $D_{\Sand,\alpha}(\cP_\sigma(\rho)\|\sigma) = D_{\Meas,\alpha}(\cP_\sigma(\rho)\|\sigma)$, the third line follows by $\cP_\sigma(\sigma) = \sigma$ and the last line follows from the data processing inequality.
\end{proof}

\begin{lemma}\label{lem: permutation invariant spec}
Let $X$ be a permutation-invarant operator on $\sL(A^{\ox n})$ with $|A| = d$. Then $|\spec(X)| \leq (n+1)^{d} (n+d)^{d^2} = \poly(n)$\cite[Lemma A.1]{fawzi2021defining}.
\end{lemma}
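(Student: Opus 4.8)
The plan is to invoke Schur--Weyl duality to put any permutation-invariant operator into a block-diagonal form indexed by partitions, and then to count eigenvalues block by block. First I would recall that the symmetric group $S_n$ acts on $(\CC^d)^{\ox n}$ by permuting tensor factors, and that this representation decomposes as
\begin{align}
    (\CC^d)^{\ox n} \cong \bigoplus_{\lambda} \cU_\lambda \ox \cV_\lambda,
\end{align}
where $\lambda$ ranges over partitions of $n$ with at most $d$ parts, $\cU_\lambda$ is the associated irreducible $S_n$-module (Specht module) and $\cV_\lambda$ is the associated irreducible $GL(d)$-module (Weyl module). An operator $X$ is permutation-invariant precisely when it commutes with the $S_n$-action, so by Schur's lemma it lies in the commutant $\bigoplus_\lambda \1_{\cU_\lambda} \ox \End(\cV_\lambda)$; that is, $X = \bigoplus_\lambda \1_{\cU_\lambda} \ox X_\lambda$ for some $X_\lambda \in \End(\cV_\lambda)$.

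The next step is pure bookkeeping. Since the spectrum of a direct sum is the union of the spectra of the blocks, and since $X$ acts as the identity on each $\cU_\lambda$ factor, every distinct eigenvalue of $X$ must already be an eigenvalue of some $X_\lambda$. As $X_\lambda$ acts on $\cV_\lambda$, it has at most $\dim \cV_\lambda$ distinct eigenvalues, whence
\begin{align}
    |\spec(X)| \leq \sum_\lambda \dim \cV_\lambda \leq \big(\#\{\lambda\}\big)\cdot \max_\lambda \dim \cV_\lambda.
\end{align}

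Finally I would estimate the two factors separately. The partitions in play satisfy $\lambda_1 \geq \cdots \geq \lambda_d \geq 0$ with each $\lambda_i \in \{0,\dots,n\}$, so there are at most $(n+1)^d$ of them. For the dimension, Weyl's dimension formula gives $\dim \cV_\lambda = \prod_{1 \leq i < j \leq d} \frac{\lambda_i - \lambda_j + j - i}{j-i}$; each numerator factor is a positive integer bounded above by $n+d$, each denominator is at least $1$, and there are $\binom{d}{2} \leq d^2$ factors, so $\dim \cV_\lambda \leq (n+d)^{d^2}$. Combining the two estimates yields $|\spec(X)| \leq (n+1)^d (n+d)^{d^2}$, which is $\poly(n)$ for fixed $d$. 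The only genuinely nontrivial ingredient is Schur--Weyl duality itself together with the resulting description of the commutant of the $S_n$-action; once that structural fact is in hand, both the reduction to a sum of module dimensions and the two elementary counting bounds are routine. I therefore expect the main step to be setting up the duality correctly rather than any delicate estimation, and one could equally replace Weyl's formula by the (slightly looser) count of semistandard Young tableaux of shape $\lambda$ with entries in $\{1,\dots,d\}$ if an entirely combinatorial argument is preferred.
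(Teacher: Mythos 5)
Your proof is correct and takes essentially the same route as the source of this statement: the paper gives no inline proof but cites~\cite[Lemma A.1]{fawzi2021defining}, whose argument is precisely this Schur--Weyl decomposition of the commutant, with the factor $(n+1)^d$ counting partitions of $n$ into at most $d$ parts and the factor $(n+d)^{d^2}$ bounding $\dim \cV_\lambda$ via the Weyl dimension formula. Your bookkeeping is sound as written, including the implicit point that no diagonalizability is needed, since each block $X_\lambda$ has at most $\dim \cV_\lambda$ distinct eigenvalues as roots of its characteristic polynomial.
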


In this work, we will focus on the study of quantum divergences between two sets of quantum states.

\begin{definition}(Quantum divergence between two sets of states.)\label{def: divergence between two sets}
    Let $\DD$ be a quantum divergence between two quantum states. Then for any sets $\sA\subseteq \density$ and $\sB\subseteq \PSD$, the quantum divergence between these two sets of quantum states is defined by
    \begin{align}
        \DD(\sA\|\sB):= \inf_{\substack{\rho \in \sA\\ \sigma \in \sB}} \DD(\rho\|\sigma). 
    \end{align}
\end{definition}

Note that if $\DD$ is lower semicontinuous (which is true for most quantum divergences of interest), and $\sA$ and $\sB$ are compact sets, the infimum in the above expression is always attained and can thus be replaced by a minimization~\cite[Theorem 7.3.1]{kurdila2005convex}.
From a geometric perspective, this quantity characterizes the distance between two sets $\sA$ and $\sB$ under the ``distance metric'' $\DD$. In particular, if $\sA = \{\rho\}$ is a singleton, we write $\DD(\rho\|\sB):= \DD(\{\rho\}\|\sB)$. For two sequences of sets $\{\sA_n\}_{n\in \NN}$ and $\{\sB_n\}_{n \in \NN}$, the regularized divergence is defined by 
\begin{align}
\DD^{\reg}(\sA \| \sB) := \lim_{n \to \infty} \frac{1}{n} \DD(\sA_{n} \| \sB_{n}),
\end{align}
whenever the limit on the right-hand side exists.

\section{Variational formula and superadditivity}
\label{sec: Variational formula and superadditivity}

Unlike other quantum relative entropies, which are defined by closed-form expressions, the measured relative entropy requires maximization over all possible quantum measurements, making it inherently challenging to compute. This complexity is compounded in the case of divergences between different sets of quantum states, where additional layers of optimization transform it into a minimax problem. The following results address this issue by providing variational formulas for $D_{\Meas}(\sA\|\sB)$ and $D_{\Meas,\alpha}(\sA\|\sB)$, reformulating them as more tractable convex optimization programs. Based on these variational formulas, we establish their superadditivity, which serves as a key technical tool for the main results of this work in the subsequent sections.

\begin{shaded}
\begin{lemma}\label{lem: DM variational sets}
Let $\sA\subseteq \density$ and $\sB \subseteq \PSD$ be two compact convex sets. Then it holds that
\begin{align}\label{eq: DM general convex program}
D_{\Meas}(\sA\|\sB) = \sup_{W \in \polarPD{\sB}} -h_{\sA}(-\log W),
\end{align}
where the objective function on the right-hand side is concave in $W$. 
\end{lemma}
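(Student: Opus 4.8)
The plan is to start from the definition $D_{\Meas}(\sA\|\sB) = \inf_{\rho \in \sA,\, \sigma \in \sB} D_{\Meas}(\rho\|\sigma)$ and substitute the variational expression~\eqref{eq: DM variational}, which turns the quantity into a nested saddle-point problem
\[
D_{\Meas}(\sA\|\sB) = \inf_{\rho \in \sA,\, \sigma \in \sB} \ \sup_{\omega \in \PD} \ \big( \tr[\rho \log \omega] + 1 - \tr[\sigma \omega] \big).
\]
Write $f(\rho,\sigma,\omega)$ for the objective. The key structural observation is that $f$ is affine (hence convex and continuous) in the pair $(\rho,\sigma)$, while in $\omega$ it is concave and continuous on $\PD$: the term $\tr[\rho\log\omega]$ is concave because $\omega \mapsto \log\omega$ is operator concave and $\rho \geq 0$, and $-\tr[\sigma\omega]$ is linear. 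Moreover $\sA \times \sB$ is convex and compact, whereas $\PD$ is convex.

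The first main step is to exchange the infimum and supremum. Since $f$ is real-valued and finite once $\omega\in\PD$ is fixed, is convex and lower semicontinuous in $(\rho,\sigma)$, concave and upper semicontinuous in $\omega$, and the minimization domain $\sA\times\sB$ is compact, Sion's minimax theorem applies and allows swapping the two optimizations. The inner infimum then separates: using $h_{\sB}(\omega) = \sup_{\sigma\in\sB}\tr[\sigma\omega]$ and $-h_{\sA}(-\log\omega) = \inf_{\rho\in\sA}\tr[\rho\log\omega]$, I obtain
\[
D_{\Meas}(\sA\|\sB) = \sup_{\omega\in\PD} \ \big( 1 - h_{\sB}(\omega) - h_{\sA}(-\log\omega) \big).
\]

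The second main step is to restrict the unconstrained domain $\PD$ to the polar set $\polarPD{\sB} = \{W\in\PD : h_{\sB}(W)\le 1\}$ by a rescaling argument. The crucial normalization identity is that, since every $\rho\in\sA$ has unit trace, $h_{\sA}(X+cI) = h_{\sA}(X) + c$ for all $c\in\RR$; combined with the positive homogeneity $h_{\sB}(\lambda\omega)=\lambda h_{\sB}(\omega)$, replacing $\omega$ by $\lambda\omega$ with $\lambda>0$ sends the objective to $1 - \lambda h_{\sB}(\omega) - h_{\sA}(-\log\omega) + \log\lambda$. For the direction ``$\le$'', I fix $\omega\in\PD$ and set $s := h_{\sB}(\omega)$; assuming $s>0$ (which holds whenever $\sB$ contains a nonzero operator, because $\omega\in\PD$), this concave function of $\lambda$ is maximized at $\lambda = 1/s$, where the rescaled point $W := \omega/s$ satisfies $h_{\sB}(W)=1$ and the value equals exactly $-h_{\sA}(-\log W)$; hence the objective at $\omega$ is at most $\sup_{W\in\polarPD{\sB}} -h_{\sA}(-\log W)$. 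For the direction ``$\ge$'', any $W\in\polarPD{\sB}$ satisfies $1-h_{\sB}(W)\ge 0$, so the objective evaluated at $W$ dominates $-h_{\sA}(-\log W)$. Combining the two directions gives the claimed identity. (The degenerate case $\sB\subseteq\{0\}$ is trivial and is treated separately.)

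Finally, I would record the concavity of the objective: $W\mapsto -\log W$ is operator convex, so for $\rho\ge 0$ the map $W\mapsto \tr[\rho(-\log W)]$ is convex, and taking the supremum over $\rho\in\sA$ preserves convexity, making $h_{\sA}(-\log W)$ convex and hence $-h_{\sA}(-\log W)$ concave on the convex set $\polarPD{\sB}$. I expect the main obstacle to be the rigorous application of Sion's minimax theorem, precisely because the maximization domain $\PD$ is open and unbounded; the argument hinges on the compactness of $\sA\times\sB$ (only one of the two sets needs to be compact) and on $f$ being finite-valued once $\omega\in\PD$ is fixed, so that no $+\infty$ values obstruct the exchange. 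The subsequent rescaling reduction is elementary but relies essentially on the trace-normalization of $\sA$, which is what makes the additive-shift identity for $h_{\sA}$ available.
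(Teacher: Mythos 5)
Your proposal is correct and follows essentially the same route as the paper's proof: substitute the variational formula \eqref{eq: DM variational} into the two-set definition, exchange infimum and supremum via Sion's minimax theorem (with compactness needed only on the $\sA\times\sB$ side, exactly as you note), separate the inner infimum into the support functions $h_{\sA}$ and $h_{\sB}$, and then absorb the term $1-h_{\sB}(\omega)$ into the constraint $W\in\polarPD{\sB}$ by rescaling, with concavity of the objective obtained from operator convexity of $-\log$. Your scalar optimization over the scale $\lambda$, evaluated at $\lambda = 1/h_{\sB}(\omega)$, is the same computation as the paper's substitution $\widetilde W = W/h_{\sB}(W)$ justified by $\log x \leq x-1$ (both steps, like the quoted variational formula itself, are exact in the natural-logarithm convention), so the two normalization arguments coincide.
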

\end{shaded}
\begin{proof}
By the variational formula of the measured relative entropy in Eq.~\eqref{eq: DM variational}, we have
\begin{align}
D_{\Meas}(\sA \| \sB) &= \inf_{\substack{\rho \in \sA\\ \sigma \in \sB}} \sup_{W \in \PD} \tr[\rho \log W] + 1 - \tr[\sigma W].
\end{align}
Note that all $\sA,\sB,\PD$ are convex sets, with $\sA,\sB$ being compact. Moreover, the objective function is linear in $(\rho,\sigma)$ and concave in $W$. So we can apply Sion's minimax theorem~\cite[Corollary 3.3]{Sion1958} to exchange the infimum and supremum and get
\begin{align}
D_{\Meas}(\sA \| \sB) & = \sup_{W \in \PD} \inf_{\substack{\rho \in \sA\\ \sigma \in \sB}} \tr[\rho \log W] + 1 - \tr[\sigma W]\\
&= \sup_{W \in \PD} \inf_{\rho \in \sA} \tr[\rho \log W] + 1 - \sup_{\sigma \in \sB} \tr[\sigma W].
\end{align}
By using the definition of the support function, we have
\begin{align}
    D_{\Meas}(\sA \| \sB)  &= \sup_{W \in \PD} -h_{\sA}(-\log W) + 1 - h_{\sB}(W).
\end{align}
Next, we aim to simplify the objective function by moving the term $1 - h_{\sB}(W)$ into the constraint. For this, let $W$ be a feasible solution in the above optimization and define $\widetilde W = W / h_{\sB}(W)$. Then $h_{\sB}(\widetilde W) = 1$ and we will see that $\widetilde W$ achieves an objective value no smaller than $W$. Indeed
\begin{align}
-h_{\sA}(-\log \widetilde{W})
& =  -h_{\sA}(-\log {W}) - \log h_{\sB}(W) \geq  -h_{\sA}(-\log {W}) + 1 - h_{\sB}(W)
\end{align}
where the inequality follows from the fact that $\log x \leq x-1$. 
Therefore, we can reformulate 
\begin{align}
D_{\Meas}(\sA \| \sB)  &= \sup_{\substack{W \in \PD\\h_{\sB}(W) = 1}} -h_{\sA}(-\log W).
\end{align}
Using the same argument as above, we relax the condition $h_{\sB}(W) = 1$ to $h_{\sB}(W) \leq 1$ and get 
\begin{align}
D_{\Meas}(\sA \| \sB) &= \sup_{\substack{W \in \PD\\h_{\sB}(W) \leq 1}} -h_{\sA}(-\log W).
\end{align}
Finally, noting that $h_{\sB}(W) \leq 1$ if and only if $W \in \sB^{\circ}$, we have the expression in Eq.~\eqref{eq: DM general convex program}. Let $f(W) := -h_{\sA}(-\log W) = \inf_{\rho \in \sA} \tr[\rho \log W]$ be the objective function. Consider any convex combination $\lambda W_1 + (1-\lambda)W_2$ with $\lambda\in (0,1)$. We have 
\begin{align}
    f(\lambda W_1 + (1-\lambda)W_2) & = \inf_{\rho \in \sA} \tr[\rho \log (\lambda W_1 + (1-\lambda)W_2)]\\
    & \geq \inf_{\rho \in \sA} \lambda \tr[\rho \log W_1]  + (1-\lambda)\tr[\rho \log W_2]\\
    & \geq \lambda \inf_{\rho \in \sA}  \tr[\rho \log W_1]  + (1-\lambda)\inf_{\rho \in \sA} \tr[\rho \log W_2]\\
    & = \lambda f(W_1) + (1-\lambda) f(W_2)
\end{align} 
where the second line follows from the operator concavity of logarithm and the third line follows by putting infimum to both parts. This proves the concavity of the objective function.
\end{proof}

Following the same argument, we can show the variational formula for the measured \Renyi divergence between two sets of quantum states. 

\begin{shaded}
\begin{lemma}\label{lem: DM alpha variational sets}
Let $\sA\subseteq \density$ and $\sB \subseteq \PSD$ be two compact convex sets. Then it holds that
\begin{align}
D_{\Meas, \alpha}(\sA\|\sB) = \frac{\alpha}{\alpha-1} \log (*)
\end{align}
with
\begin{align}
\label{eq: DM alpha variational sets}
(*) =
\begin{dcases}
\inf_{W, V} h_{\sA}(W)  & \text{\rm s.t.}\ W \in \PD, \; \; V \in \sB^\circ,\; W^{\frac{\alpha}{\alpha-1}} \leq V,\; \ \ \text{\rm if} \ \alpha \in (0,1/2),\\
\inf_{W, V} h_{\sA}(V)  & \text{\rm s.t.}\ W \in \PD, \; W \in \sB^{\circ},\; W^{\frac{\alpha - 1}{\alpha}} \leq V,\; \ \  \text{\rm if} \ \alpha \in [1/2,1),\\
\sup_{W, V} \revh_{\sA}(V) & \text{\rm s.t.}\ W \in \PD,\; W \in \sB^{\star},\; W^{\frac{\alpha - 1}{\alpha}} \geq V \geq 0,\; \ \ \text{\rm if} \ \alpha \in (1,+\infty),\\
\end{dcases}
\end{align}
where the right-hand side is a convex program.
\end{lemma}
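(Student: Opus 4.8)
The plan is to mirror the proof of Lemma~\ref{lem: DM variational sets}, now starting from the three-regime variational formula Eq.~\eqref{eq: DM alpha variational} for the single-pair measured \Renyi divergence. Writing $D_{\Meas,\alpha}(\sA\|\sB) = \inf_{\rho\in\sA,\sigma\in\sB} D_{\Meas,\alpha}(\rho\|\sigma)$ and inserting Eq.~\eqref{eq: DM alpha variational}, the first step is to pull the outer infimum over $(\rho,\sigma)$ through the scalar map $x \mapsto \frac{1}{\alpha-1}\log x$. This map is monotone decreasing when $\alpha\in(0,1)$ and monotone increasing when $\alpha\in(1,\infty)$, so the outer $\inf_{\rho,\sigma}$ turns into a $\sup_{\rho,\sigma}$ inside the logarithm for $\alpha<1$ and stays an $\inf_{\rho,\sigma}$ for $\alpha>1$. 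In either case we obtain a joint optimization over $(\rho,\sigma)$ and the inner variable $W\in\PD$ of the objective in Eq.~\eqref{eq: DM alpha variational}, which is affine in $(\rho,\sigma)$.

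Next I would apply Sion's minimax theorem~\cite{Sion1958} to exchange the $(\rho,\sigma)$-optimization with the $W$-optimization. The objective is linear, hence both convex and concave, in $(\rho,\sigma)$ over the compact convex set $\sA\times\sB$, while in $W$ the relevant operator power $t\mapsto t^{p}$ (with $p=\frac{\alpha}{\alpha-1}$ for $\alpha\in(0,1/2)$ and $p=\frac{\alpha-1}{\alpha}$ otherwise) has exponent in $(-1,0)$ when $\alpha<1$ and in $(0,1)$ when $\alpha>1$; operator convexity (resp. concavity) of these powers supplies the convexity (resp. concavity) of the objective in $W$ needed for Sion. After the swap, the inner $\inf_\rho,\inf_\sigma$ (or $\sup_\sigma$) over the sets collapse, by definition, into the support functions $h_\sA,h_\sB$ and the reverse support function $\revh_\sA$, producing in each regime a two-term objective of the shape $\alpha\,h_\sA(W^{p}) + (1-\alpha)\,h_\sB(W)$ and its analogues.

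The third step is a homogeneity argument. Each two-term objective is optimized over the scaling $W\mapsto cW$ in closed form, using that $h_\sA,h_\sB,\revh_\sA$ are positively homogeneous of degree one; the stationarity condition collapses it to a single scale-invariant geometric-mean term of the form $h_\sA(\cdot)^{\alpha}\,h_\sB(\cdot)^{1-\alpha}$. Scale invariance then lets me normalize the $\sB$-factor to one, which, via Lemma~\ref{lem: support dual} and the equivalence $h_\sB(W)\le 1 \iff W\in\sB^{\circ}$, folds that factor into a polar-set constraint; pulling the exponent $\alpha$ out of the logarithm turns the prefactor $\frac{1}{\alpha-1}$ into the stated $\frac{\alpha}{\alpha-1}$. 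This reproduces the structure of Lemma~\ref{lem: DM variational sets}, now decorated with an operator power.

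Finally I would introduce an auxiliary operator $V$ to split the composition of a support/reverse-support function with an operator power of $W$ into a linear objective plus a separate power constraint, relaxing the defining equality to a Loewner inequality ($V\ge W^{p}$ when it feeds the convex $h_\sA$ to be minimized, and $W^{p}\ge V\ge 0$ when it feeds the concave $\revh_\sA$ to be maximized). This relaxation is exact because $h_\sA$ and $\revh_\sA$ are Loewner monotone, so the optimum saturates the bound, and the inequality form is jointly convex in $(W,V)$ precisely because $t\mapsto t^{p}$ is operator convex for $p\in(-1,0)$ (the $\alpha<1$ cases) and operator concave for $p\in(0,1)$ (the $\alpha>1$ case). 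Combined with convexity of $h_\sA$, concavity of $\revh_\sA$, and the convex polar-set constraint, this yields the claimed convex program. The main obstacle I anticipate is the regime $\alpha\in(1,\infty)$: there the inner problem over $W$ is a \emph{supremum}, so the monotone map leaves an $\inf_{\rho,\sigma}\sup_W$ structure, and one must verify carefully that Sion's theorem still applies (using compactness of $\sA\times\sB$ for the minimizing block together with concavity in $W$ coming from the exponent $\frac{\alpha-1}{\alpha}\in(0,1)$) and track the sign of $1-\alpha$ through the scaling step so that the correct reverse support functions and polar constraints emerge.
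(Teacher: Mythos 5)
Your plan mirrors the paper's proof step for step: start from Eq.~\eqref{eq: DM alpha variational}, pull the outer infimum through the monotone map $x\mapsto\frac{1}{\alpha-1}\log x$, exchange the $(\rho,\sigma)$-block with the $W$-block via Sion's theorem, use one-homogeneity of the (reverse) support functions together with the scaling $(W,V)\mapsto(kW,k^{p}V)$ to collapse the weighted sum into a scale-invariant geometric mean, normalize the $\sB$-factor into a polar constraint, and make the auxiliary-variable relaxation exact by Loewner monotonicity. The only structural difference is cosmetic: you apply Sion before introducing $V$, supplying the needed (quasi-)convexity/concavity in $W$ from operator convexity/concavity of $t^{p}$, whereas the paper introduces $V$ first so that Sion is applied to a bilinear objective; both orderings are valid.

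However, the obstacle you flag for $\alpha\in(1,\infty)$ is real, your proposal leaves it unresolved, and resolving it honestly does \emph{not} produce the stated constraint $W\in\sB^{\star}$. After the minimax swap you face $\sup_{W,V}\inf_{\rho\in\sA,\,\sigma\in\sB}\big[\alpha\tr[\rho V]+(1-\alpha)\tr[\sigma W]\big]$, and since $1-\alpha<0$ the inner infimum over $\sigma$ equals $(1-\alpha)\sup_{\sigma\in\sB}\tr[\sigma W]=(1-\alpha)\,h_{\sB}(W)$ --- the support function, exactly the ``(or $\sup_\sigma$)'' you parenthetically noted --- not $(1-\alpha)\,\revh_{\sB}(W)$. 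Carrying your scaling step through then gives $\sup \revh_{\sA}(V)^{\alpha}\,h_{\sB}(W)^{1-\alpha}$, and because one can only rescale $W$ \emph{upward} while preserving $V\leq W^{\frac{\alpha-1}{\alpha}}$ (operator monotonicity of $t^{\frac{\alpha-1}{\alpha}}$), the exact relaxation of the normalization $h_{\sB}(W)=1$ is $h_{\sB}(W)\leq 1$, i.e.\ $W\in\sB^{\circ}$. The program as printed in the lemma is in fact unbounded already for scalar singletons $\sA=\sB=\{1\}$: taking $W=c\to\infty$ and $V=c^{\frac{\alpha-1}{\alpha}}$ satisfies $\revh_{\sB}(W)=c\geq 1$ and drives $\revh_{\sA}(V)\to\infty$, whereas $D_{\Meas,\alpha}=0$; equivalently, the relaxation of $\revh_{\sB}(W)=1$ to $\revh_{\sB}(W)\geq 1$ fails because rescaling $W$ downward breaks the constraint $V\leq W^{\frac{\alpha-1}{\alpha}}$ unless $V$ is shrunk too, which strictly lowers the objective. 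So your approach, executed with the sign-tracking you promise, proves the lemma for $\alpha\in(0,1)$ but yields $W\in\sB^{\circ}$ in place of $W\in\sB^{\star}$ for $\alpha>1$. It is worth noting that the paper's own appendix derivation commits precisely this slip (it factors $\inf_{\sigma}$ through the negative coefficient $1-\alpha$, writing $(1-\alpha)\revh_{\sB}(W)$), so the defect sits in the target statement's third case as much as in your plan --- but as written, your proposal does not establish that case, and no correct execution of it can.
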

\end{shaded}
\begin{proof}
We prove the result for $\alpha \in [1/2,1)$ here. The other parameter ranges are similar, and they are detailed in Appendix~\ref{app sec: variational formula for other alpha} for completeness.
For any fixed $\rho \in \sA$ and $\sigma \in \sB$, we have by Eq.~\eqref{eq: DM alpha variational} that
\begin{align}
Q_{\Meas,\alpha}(\rho\|\sigma) & := \inf_{W \in \PD}\ \alpha \tr \left[\rho W^{1-\frac{1}{\alpha}}\right] + (1-\alpha) \tr [\sigma W]\\
& = \inf_{\substack{W \in \PD\\W^{1-\frac{1}{\alpha}} \leq V}}\ \alpha \tr \left[\rho V\right] + (1-\alpha) \tr [\sigma W],
\end{align}
where the second equality follows by introducing an additional variable $V$.
Then we have
\begin{align}
\sup_{\substack{\rho \in \sA\\ \sigma \in \sB}} Q_\alpha^\Meas(\rho\|\sigma)
& = \sup_{\substack{\rho \in \sA\\ \sigma \in \sB}}  \; \inf_{\substack{W \in \PD\\W^{1-\frac{1}{\alpha}} \leq V}}\ \alpha \tr \left[\rho V\right] + (1-\alpha) \tr [\sigma W].
\end{align}
Note that all $\sA,\sB$ and $\{(W,V): W \in \PD, W^{1-\frac{1}{\alpha}} \leq V\}$ are convex sets, with $\sA,\sB$ being compact. Moreover, the objective function is linear in $(\rho,\sigma)$, and also linear in $(W,V)$. So we can apply Sion's minimax theorem~\cite[Corollary 3.3]{Sion1958} to exchange the infimum and supremum and get
\begin{align}
\sup_{\substack{\rho \in \sA\\ \sigma \in \sB}} Q_\alpha^\Meas(\rho\|\sigma) & {=} \inf_{\substack{W \in \PD\\W^{1-\frac{1}{\alpha}} \leq V}} \; \sup_{\substack{\rho \in \sA\\ \sigma \in \sB}} \ \alpha \tr \left[\rho V\right] + (1-\alpha) \tr [\sigma W]\\
& {=} \inf_{\substack{W \in \PD\\W^{1-\frac{1}{\alpha}} \leq V}} \;  \ \alpha \sup_{\rho \in \sA} \tr \left[\rho V\right] + (1-\alpha) \sup_{\sigma \in \sB} \tr [\sigma W].
\end{align}
By using the definition of the support function, we have
\begin{align}
\sup_{\substack{\rho \in \sA\\ \sigma \in \sB}} Q_\alpha^\Meas(\rho\|\sigma) & = \inf_{\substack{W \in \PD\\W^{1-\frac{1}{\alpha}} \leq V}} \alpha h_{\sA}(V) + (1-\alpha) h_{\sB}(W)\\
& = \inf_{\substack{W \in \PD\\W^{1-\frac{1}{\alpha}} \leq V}} h_{\sA}(V)^\alpha  h_{\sB}(W)^{1-\alpha},
\end{align}
where the second line follows from the weighed arithmetic-geometric mean inequality $\alpha x + (1-\alpha)y \leq x^\alpha y^{1-\alpha}$ (with equality if and only if $x = y$) and the fact that $(W,V)$ is a feasible solution implies $(kW, k^{1-1/\alpha} V)$ is also a feasible solution for any $k \geq 0$. Therefore, we can choose $k = (h_{\sA}(V))^\alpha (h_{\sB}(W))^{-\alpha}$, which implies $h_{\sA}(k^{1-1/\alpha}V) = h_{\sB}(kW)$ and therefore the equality of the weighed arithmetic-geometric mean is achieved. Similarly, for any feasible solution $(W, V)$ we can always construct a new solution $(W/h_{\sB}(W),\allowbreak Vh_{\sB}(W)^{1/\alpha - 1})$ achieves the same objective value. This implies 
\begin{align}
    \sup_{\substack{\rho \in \sA\\ \sigma \in \sB}} Q_\alpha^\Meas(\rho\|\sigma)  = \inf_{\substack{W \in \PD\\W^{1-\frac{1}{\alpha}} \leq V\\h_{\sB}(W) = 1}} (h_{\sA}(V))^\alpha = \inf_{\substack{W \in \PD\\W^{1-\frac{1}{\alpha}} \leq V\\h_{\sB}(W) \leq 1}} (h_{\sA}(V))^\alpha,
\end{align}
where the second equality follows by the same reasoning. 
Finally, noting that $h_{\sB}(W) \leq 1$ if and only if $W \in \sB^{\circ}$, we have the asserted result in Eq.~\eqref{eq: DM alpha variational sets}. It is easy to check that the objective function $h_{\sA}(V)$ is convex in $V$ and the feasible set is also a convex set.
\end{proof}

The variational formula helps establish the following superadditivity of the divergence between two sets of quantum states, which serves as a key technical tool for the applications discussed in the subsequent sections.

\begin{shaded}
    \begin{lemma}\label{lem: generalized supadditivity DM alpha}
    Let $\cH_{1}, \cH_{2}$ be finite-dimensional Hilbert spaces. Consider convex and compact sets $\sA_1 \subseteq \density(\cH_1)$, $\sA_2 \subseteq \density(\cH_2)$, and $\sA_{12} \subseteq \density(\cH_1 \otimes \cH_2)$ and also $\sB_1 \subseteq \PSD(\cH_1)$, $\sB_{2} \subseteq \PSD(\cH_2)$, and $\sB_{12} \subseteq \PSD(\cH_{1} \otimes \cH_{2})$. For  $\alpha \in (0,1)$, assume that $\polarPSD{(\sA_{1})} \ox \polarPSD{(\sA_{2})} \subseteq \polarPSD{(\sA_{12})}$ and $\polarPSD{(\sB_1)} \ox \polarPSD{(\sB_2)} \subseteq \polarPSD{(\sB_{12})}$  then,
    \begin{align}\label{eq: generalized supadditivity DM alpha}
        D_{\Meas,\alpha}(\sA_{12}\|\sB_{12}) \geq D_{\Meas,\alpha}(\sA_{1}\|\sB_{1}) + D_{\Meas,\alpha}(\sA_{2}\|\sB_{2}). 
    \end{align}
For $\alpha > 1$, Eq.~\eqref{eq: generalized supadditivity DM alpha} holds provided $\polarPSDre{(\sA_{1})} \ox \polarPSDre{(\sA_{2})} \subseteq \polarPSDre{(\sA_{12})}$ and $\polarPSDre{(\sB_1)} \ox \polarPSDre{(\sB_2)} \subseteq \polarPSDre{(\sB_{12})}$.
    \end{lemma}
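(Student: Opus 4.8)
The plan is to substitute the variational formula of Lemma~\ref{lem: DM alpha variational sets} and reduce the claimed superadditivity to a multiplicativity statement for the inner convex program, which I then establish by a single tensor-product construction that works uniformly across all parameter windows. Write $(*)_1,(*)_2,(*)_{12}$ for the optimal values of the inner programs attached to $(\sA_1,\sB_1)$, $(\sA_2,\sB_2)$ and $(\sA_{12},\sB_{12})$, so that $D_{\Meas,\alpha}(\sA_\bullet\|\sB_\bullet)=\frac{\alpha}{\alpha-1}\log(*)_\bullet$. Since $\frac{\alpha}{\alpha-1}<0$ for $\alpha\in(0,1)$ and $\frac{\alpha}{\alpha-1}>0$ for $\alpha>1$, the target inequality $D_{\Meas,\alpha}(\sA_{12}\|\sB_{12})\geq D_{\Meas,\alpha}(\sA_1\|\sB_1)+D_{\Meas,\alpha}(\sA_2\|\sB_2)$ is equivalent, after taking logarithms, to the submultiplicativity $(*)_{12}\leq (*)_1(*)_2$ in the infimum regime $\alpha\in(0,1)$ and to the supermultiplicativity $(*)_{12}\geq (*)_1(*)_2$ in the supremum regime $\alpha>1$. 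This sign flip, governed entirely by $\frac{\alpha}{\alpha-1}$, is the only structural difference between the two cases.

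It remains to prove the appropriate multiplicativity, and here I would take arbitrary feasible points $(W_1,V_1)$ and $(W_2,V_2)$ for the two subsystem programs and argue that $(W_1\otimes W_2,\,V_1\otimes V_2)$ is feasible for the combined program. There are three checks. First, $W_1\otimes W_2\in\PD$ is immediate. Second, the power constraint (relating $W$ raised to the relevant exponent, $\frac{\alpha}{\alpha-1}$ or $\frac{\alpha-1}{\alpha}$, to $V$) tensorizes because $(W_1\otimes W_2)^p=W_1^p\otimes W_2^p$ and because $0\leq A_i\leq B_i$ for $i=1,2$ implies $A_1\otimes A_2\leq B_1\otimes B_2$ on positive semidefinite operators, as one sees from $B_1\otimes B_2-A_1\otimes A_2=(B_1-A_1)\otimes B_2+A_1\otimes(B_2-A_2)\geq 0$. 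Third, exactly one of the tensor variables carries the $\sB$-membership constraint in each window (the variable $V$ when $\alpha\in(0,1/2)$, the variable $W$ when $\alpha\in[1/2,1)\cup(1,\infty)$); whichever it is, it is positive semidefinite and lies in $\polarPSD{(\sB_i)}$ (resp. $\polarPSDre{(\sB_i)}$ for $\alpha>1$), so the stability hypothesis $\polarPSD{(\sB_1)}\otimes\polarPSD{(\sB_2)}\subseteq\polarPSD{(\sB_{12})}$ (resp. its reverse-polar analogue) places its tensor product in $\polarPSD{(\sB_{12})}\subseteq\sB_{12}^\circ$ (resp. $\polarPSDre{(\sB_{12})}\subseteq\sB_{12}^{\star}$).

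For the objective I would invoke Lemma~\ref{lema: polar set and support function}: the inclusion $\polarPSD{(\sA_1)}\otimes\polarPSD{(\sA_2)}\subseteq\polarPSD{(\sA_{12})}$ is equivalent to submultiplicativity $h_{\sA_{12}}(X_1\otimes X_2)\leq h_{\sA_1}(X_1)h_{\sA_2}(X_2)$ on positive semidefinite inputs, and the reverse-polar inclusion is equivalent to supermultiplicativity of $\revh_{\sA}$. The combined objective is $h_{\sA_{12}}$ (for $\alpha<1$) or $\revh_{\sA_{12}}$ (for $\alpha>1$) evaluated at the tensor of the two objective variables, which is positive semidefinite in every window (for a $V$-objective either because $V\geq W^{(\alpha-1)/\alpha}\geq 0$ when $\alpha\in[1/2,1)$ or because the program imposes $V\geq 0$ when $\alpha>1$; for the $W$-objective because $W\in\PD$), so the relevant (sub/super)multiplicativity bounds the combined objective by the product of the two subsystem objectives. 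Since support and reverse support functions are nonnegative on positive semidefinite inputs, taking the infimum over subsystem feasible points for $\alpha\in(0,1)$ yields $(*)_{12}\leq(*)_1(*)_2$, while taking the supremum for $\alpha>1$ yields $(*)_{12}\geq(*)_1(*)_2$; combined with the first paragraph this is exactly the asserted superadditivity.

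I expect the main obstacle to be organizational rather than analytic: across the windows $(0,1/2)$, $[1/2,1)$, $(1,\infty)$ the roles of $W$ and $V$ permute — which variable carries the $\sB$-membership and which feeds the $\sA$ support function — and one must check in each window that the correct variable is positive semidefinite so that both the stability hypotheses and Lemma~\ref{lema: polar set and support function} apply. The only genuinely substantive ingredients are the factorization of operator powers and preservation of operator inequalities under tensor products, together with the polar/support-function equivalence already recorded; no continuity or limiting argument is needed.
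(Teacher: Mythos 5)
Your proposal is correct and follows essentially the same route as the paper's proof: tensor a pair of feasible solutions $(W_1,V_1)$, $(W_2,V_2)$ of the variational programs from Lemma~\ref{lem: DM alpha variational sets}, use the polar (resp.\ reverse-polar) stability of the $\sB$-sets for feasibility, the factorization of operator powers together with tensorization of operator inequalities for the power constraint, and Lemma~\ref{lema: polar set and support function} to bound the objective, with the sign of $\frac{\alpha}{\alpha-1}$ converting sub/super-multiplicativity of $(*)$ into superadditivity of $D_{\Meas,\alpha}$. The only difference is one of presentation: the paper works out the window $\alpha\in[1/2,1)$ and declares the other ranges analogous, while you track explicitly how the roles of $W$ and $V$ (which variable carries the $\sB$-membership and which feeds $h_{\sA}$ or $\revh_{\sA}$) permute across $(0,1/2)$, $[1/2,1)$ and $(1,\infty)$ — a useful bookkeeping check, but not a different argument.
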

\end{shaded}

\begin{proof}
Let $\alpha \in [1/2,1)$. Consider the convex program in Lemma~\ref{lem: DM alpha variational sets} for $D_{\Meas,\alpha}(\sA_1\|\sB_1)$ and  $D_{\Meas,\alpha}(\sA_2\|\sB_2)$. Let $(W_1, V_1)$ and $(W_2, V_2)$ be any feasible solutions for these programs, respectively. Then we will see that $(W_1\ox W_2, V_1\ox V_2)$ is also a feasible solution for $D_{\Meas,\alpha}(\sA_{12}\|\sB_{12})$ with a higher objective value. To see this, using the assumption, we have that $W_1 \in (\sB_1)_{\pl}^\circ$ and $W_2 \in (\sB_2)_\pl^\circ$ implies that $W_1 \ox W_2 \in (\sB_{12})_\pl^\circ$. Then by the multiplicativity of the power function, we have 
\begin{align}
    (W_1\ox W_2)^{\frac{\alpha-1}{\alpha}} = (W_1)^{\frac{\alpha-1}{\alpha}} \ox (W_2)^{\frac{\alpha-1}{\alpha}} \leq V_1 \ox V_2.
\end{align}
This confirms that $(W_1\ox W_2, V_1\ox V_2)$ is a feasible solution for the optimization of $D_{\Meas,\alpha}(\sA_{12}\|\sB_{12})$. By the assumption of $\polarPSD{(\sA_{1})} \ox \polarPSD{(\sA_{2})} \subseteq \polarPSD{(\sA_{12})}$ and Lemma~\ref{lema: polar set and support function}, we have $ h_{\sA_{12}}(V_1\ox V_2) \leq h_{\sA_1}(V_1) h_{\sA_2}(V_2)$, which implies
\begin{align}
    \frac{\alpha}{\alpha - 1} \log h_{\sA_{12}}(V_1\ox V_2) \geq \frac{\alpha}{\alpha - 1} \log h_{\sA_1}(V_1) + \frac{\alpha}{\alpha - 1} \log h_{\sA_2}(V_2).
\end{align} 
As the above relation holds for any feasible solutions, we have the asserted result in Eq.~\eqref{eq: generalized supadditivity DM alpha}. The other ranges for $\alpha$ follow the same reasoning.
\end{proof}

Next, we demonstrate the superadditivity of the measured relative entropy between two sets of quantum states (i.e., the case $\alpha = 1$). Instead of relying directly on the variational formula, we employ a continuity argument. This approach allows us to avoid the assumption that the set is closed under partial trace. In fact, the natural direct proof of superadditivity via the variational formula seems to require this assumption (see more details in Appendix~\ref{sec: Proving the superadditivity from its variational formula}).

\begin{shaded}
\begin{lemma}\label{lem: general DM alpha continuitiy}
Let $\sA \subseteq \density$ and $\sB \subseteq \PSD$ be two compact convex sets. Then
\begin{align}\label{eq: general continuitiy measured channel worst case}
    \lim_{\alpha \to 1^-} D_{\Meas,\alpha}(\sA\|\sB) = \sup_{\alpha \in (0,1)} D_{\Meas,\alpha}(\sA\|\sB) = D_{\Meas}(\sA\|\sB).
\end{align}
\end{lemma}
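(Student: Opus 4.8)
The plan is to prove the two equalities separately: the first is essentially formal and follows from monotonicity, while the second requires a genuine compactness argument. Since the measured \Renyi divergence $D_{\Meas,\alpha}(\rho\|\sigma)$ is monotone increasing in $\alpha$ (as used in the proof of Lemma~\ref{thm: comparison of quantum divergence}), taking the infimum over $\rho\in\sA$ and $\sigma\in\sB$ preserves monotonicity, so $\alpha\mapsto D_{\Meas,\alpha}(\sA\|\sB)$ is nondecreasing and the first equality $\lim_{\alpha\to1^-}D_{\Meas,\alpha}(\sA\|\sB)=\sup_{\alpha\in(0,1)}D_{\Meas,\alpha}(\sA\|\sB)$ is immediate. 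Combining the same monotonicity with the pointwise convergence $\lim_{\alpha\to1^-}D_{\Meas,\alpha}(\rho\|\sigma)=D_{\Meas}(\rho\|\sigma)$ gives $D_{\Meas,\alpha}(\rho\|\sigma)\le D_{\Meas}(\rho\|\sigma)$ for every $\alpha\in(0,1)$, and taking the infimum over the two sets yields $\sup_{\alpha\in(0,1)}D_{\Meas,\alpha}(\sA\|\sB)\le D_{\Meas}(\sA\|\sB)$, which is one of the two inequalities needed for the second equality.

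The nontrivial content is the reverse inequality $\sup_{\alpha\in(0,1)}D_{\Meas,\alpha}(\sA\|\sB)\ge D_{\Meas}(\sA\|\sB)$. The difficulty is precisely that this is an interchange of the limit $\alpha\to1^-$ with the infimum defining the divergence between sets: the minimax inequality alone gives only the already-established $\le$ direction, so I cannot simply swap the two operations. To obtain the reverse direction I would exploit compactness of $\sA\times\sB$ together with semicontinuity at a fixed order. For fixed $\beta\in(0,1)$, the function $D_{\Meas,\beta}(\cdot\|\cdot)$ is a supremum over measurements $(\cX,M)$ of the classical \Renyi divergences $D_\beta(P_{\rho,M}\|P_{\sigma,M})$; for $\beta\in(0,1)$ each such term depends continuously on $(\rho,\sigma)$ as a map into $(-\infty,+\infty]$, since $(\rho,\sigma)\mapsto\sum_x \tr[M_x\rho]^\beta\tr[M_x\sigma]^{1-\beta}$ is continuous and is composed with the continuous, extended-real-valued map $q\mapsto\tfrac{1}{\beta-1}\log q$. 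Hence $D_{\Meas,\beta}$ is lower semicontinuous in $(\rho,\sigma)$, and in particular the infimum defining $D_{\Meas,\alpha}(\sA\|\sB)$ is attained at some $(\rho_\alpha,\sigma_\alpha)\in\sA\times\sB$.

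With these optimizers in hand, I would fix a sequence $\alpha_n\uparrow1$ and extract, by compactness of $\sA\times\sB$, a convergent subsequence $(\rho_{\alpha_n},\sigma_{\alpha_n})\to(\rho^\star,\sigma^\star)\in\sA\times\sB$ (relabelling so the full sequence converges). Fixing an auxiliary $\beta\in(0,1)$ and restricting to $n$ with $\alpha_n\ge\beta$, monotonicity in $\alpha$ gives $D_{\Meas,\alpha_n}(\rho_{\alpha_n}\|\sigma_{\alpha_n})\ge D_{\Meas,\beta}(\rho_{\alpha_n}\|\sigma_{\alpha_n})$; taking $\liminf_n$ and using lower semicontinuity of $D_{\Meas,\beta}$ together with the fact that the monotone quantity $D_{\Meas,\alpha_n}(\rho_{\alpha_n}\|\sigma_{\alpha_n})=D_{\Meas,\alpha_n}(\sA\|\sB)$ converges to its supremum yields
\begin{align}
\sup_{\alpha\in(0,1)}D_{\Meas,\alpha}(\sA\|\sB)=\lim_{n\to\infty} D_{\Meas,\alpha_n}(\rho_{\alpha_n}\|\sigma_{\alpha_n})\ge D_{\Meas,\beta}(\rho^\star\|\sigma^\star).
\end{align}
Finally I would let $\beta\to1^-$ and use $\lim_{\beta\to1^-}D_{\Meas,\beta}(\rho^\star\|\sigma^\star)=D_{\Meas}(\rho^\star\|\sigma^\star)$ with $(\rho^\star,\sigma^\star)\in\sA\times\sB$ to conclude $\sup_{\alpha\in(0,1)}D_{\Meas,\alpha}(\sA\|\sB)\ge D_{\Meas}(\rho^\star\|\sigma^\star)\ge D_{\Meas}(\sA\|\sB)$, which closes the second equality. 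The main obstacle is exactly this limit--infimum interchange, and the crux of overcoming it is the pairing of compactness (to produce a limit point $(\rho^\star,\sigma^\star)$ of the optimizers) with lower semicontinuity of $D_{\Meas,\beta}$ at a \emph{fixed} order $\beta<1$, which lets the auxiliary parameter $\beta$ absorb the loss before being sent to $1$; notably this route never invokes the variational formula of Lemma~\ref{lem: DM alpha variational sets} and so avoids any hypothesis that the sets be closed under partial trace.
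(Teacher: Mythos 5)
Your proof is correct, and it differs from the paper's only in that it unpacks a cited black box. The paper handles the hard direction in one stroke: since $D_{\Meas,\alpha}(\rho\|\sigma)$ is lower semicontinuous in $(\rho,\sigma)$ (cited from \cite[Proposition III.11]{mosonyi2023continuitypropertiesquantumrenyi}) and monotone increasing in $\alpha$, the minimax theorem of \cite[Corollary A.2]{Mosonyi_2011} permits exchanging $\sup_{\alpha\in(0,1)}$ with $\inf_{(\rho,\sigma)\in\sA\times\sB}$, after which pointwise convergence $D_{\Meas,\alpha}\to D_{\Meas}$ at $\alpha\to 1^-$ finishes the argument. Your compactness--extraction argument --- optimizers $(\rho_{\alpha_n},\sigma_{\alpha_n})$, a limit point $(\rho^\star,\sigma^\star)$, lower semicontinuity at a fixed auxiliary order $\beta<1$, then $\beta\to 1^-$ --- is essentially the standard proof of that minimax corollary, so the two routes rest on identical ingredients: monotonicity in $\alpha$, lower semicontinuity in the states, compactness of $\sA\times\sB$, and pointwise convergence at $\alpha=1$. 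What your version buys is self-containedness: you also derive the lower semicontinuity directly from the measurement definition (for $\beta\in(0,1)$ each fixed-$M$ term is continuous into $(-\infty,+\infty]$ because $(p,q)\mapsto p^\beta q^{1-\beta}$ is jointly continuous on $[0,\infty)^2$ and $q\mapsto\frac{1}{\beta-1}\log q$ is continuous into the extended reals, and a supremum of such functions is lower semicontinuous), rather than citing it, and you correctly isolate that only the direction $\sup_\alpha\inf_{\rho,\sigma}\geq\inf_{\rho,\sigma}\sup_\alpha$ requires work, the other being trivial. Two minor remarks: attainment of the infimum at $(\rho_\alpha,\sigma_\alpha)$ is convenient but not necessary (near-optimizers within $1/n$ would run through the same liminf argument), and your closing observation about never invoking Lemma~\ref{lem: DM alpha variational sets} matches the paper's own stated motivation for proving superadditivity via this continuity lemma rather than via the variational formula, which would require closure under partial trace.
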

\end{shaded}
\begin{proof}
Since $D_{\Meas,\alpha}$ is monotonically increasing in $\alpha$, we have for any $\alpha > \beta$, 
\begin{align}
D_{\Meas,\alpha}(\sA\|\sB) = \inf_{\substack{\rho\in\sA\\\sigma\in\sB}} D_{\Meas,\alpha}(\rho\|\sigma) \geq \inf_{\substack{\rho\in\sA\\\sigma\in\sB}} D_{\Meas,\beta}(\rho\|\sigma) = D_{\Meas,\beta}(\sA\|\sB).
\end{align}
So the divergence $D_{\Meas,\alpha}(\sA\|\sB)$ is also monotonically increasing in $\alpha$. This implies the first equality in Eq.~\eqref{eq: general continuitiy measured channel worst case}. Now we prove the second equality. Note that
$D_{\Meas,\alpha}(\rho\|\sigma)$ is lower semi-continous on $\PSD\times \PSD$ for every $\alpha > 0$~\cite[Proposition III.11]{mosonyi2023continuitypropertiesquantumrenyi} and it is montonic increasing in $\alpha$ for every $(\rho,\sigma) \in \density \times \density$. Therefore, we can apply the minimax theorem given by~\cite[Corollary A.2]{Mosonyi_2011} and get
\begin{align}
\sup_{\alpha \in (0,1)} D_{\Meas,\alpha}(\sA\|\sB)
& = \sup_{\alpha \in (0,1)} \inf_{\substack{\rho\in\sA\\\sigma\in\sB}} D_{\Meas,\alpha}(\rho\|\sigma)\\
& = \inf_{\substack{\rho\in\sA\\\sigma\in\sB}} \sup_{\alpha \in (0,1)}  D_{\Meas,\alpha}(\rho\|\sigma)\\
& = \inf_{\substack{\rho\in\sA\\\sigma\in\sB}} D_{\Meas}(\rho\|\sigma)\\
& = D_{\Meas}(\sA\|\sB),
\end{align}
where the third line follows from the continuity of $D_{\Meas,\alpha}$ at $\alpha = 1$ for quantum states.    
\end{proof}

Combining Lemma~\ref{lem: generalized supadditivity DM alpha} and Lemma~\ref{lem: general DM alpha continuitiy}, we have the superadditivity for measured relative entropy as follows.

\begin{shaded}
    \begin{lemma}\label{lem: generalized supadditivity DM}
    Let $\cH_{1}, \cH_{2}$ be finite-dimensional Hilbert spaces. Consider convex and compact sets $\sA_1 \subseteq \density(\cH_1)$, $\sA_2 \subseteq \density(\cH_2)$, and $\sA_{12} \subseteq \density(\cH_1 \otimes \cH_2)$ and also $\sB_1 \subseteq \PSD(\cH_1)$, $\sB_{2} \subseteq \PSD(\cH_2)$, and $\sB_{12} \subseteq \PSD(\cH_{1} \otimes \cH_{2})$. Assume that $\polarPSD{(\sA_{1})} \ox \polarPSD{(\sA_{2})} \subseteq \polarPSD{(\sA_{12})}$ and $\polarPSD{(\sB_1)} \ox \polarPSD{(\sB_2)} \subseteq \polarPSD{(\sB_{12})}$. Then, it holds that
    \begin{align}
        D_{\Meas}(\sA_{12}\|\sB_{12}) \geq D_{\Meas}(\sA_1\|\sB_1) + D_{\Meas}(\sA_2\|\sB_2). 
    \end{align}
    \end{lemma}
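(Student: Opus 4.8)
The plan is to obtain the $\alpha=1$ superadditivity as a limiting case of the $\alpha$-divergence superadditivity already proved in Lemma~\ref{lem: generalized supadditivity DM alpha}, using the continuity statement of Lemma~\ref{lem: general DM alpha continuitiy} to pass to the limit $\alpha \to 1^-$. The guiding observation is that the two hypotheses of the present lemma are exactly the polar-set stability conditions required to invoke the $\alpha \in (0,1)$ branch of Lemma~\ref{lem: generalized supadditivity DM alpha}, so no new structural assumption is needed.

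First I would fix any $\alpha \in (0,1)$. Since by assumption $\polarPSD{(\sA_{1})} \ox \polarPSD{(\sA_{2})} \subseteq \polarPSD{(\sA_{12})}$ and $\polarPSD{(\sB_1)} \ox \polarPSD{(\sB_2)} \subseteq \polarPSD{(\sB_{12})}$, Lemma~\ref{lem: generalized supadditivity DM alpha} applies for this $\alpha$ and gives
\[
    D_{\Meas,\alpha}(\sA_{12}\|\sB_{12}) \geq D_{\Meas,\alpha}(\sA_{1}\|\sB_{1}) + D_{\Meas,\alpha}(\sA_{2}\|\sB_{2}).
\]
Next I would take the limit $\alpha \to 1^-$ on both sides. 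Each pair of sets $(\sA_i,\sB_i)$ and the joint pair $(\sA_{12},\sB_{12})$ consists of compact convex sets, so Lemma~\ref{lem: general DM alpha continuitiy} yields $\lim_{\alpha\to 1^-} D_{\Meas,\alpha}(\sA_i\|\sB_i) = D_{\Meas}(\sA_i\|\sB_i)$ for each of them. Because the displayed inequality holds for every $\alpha$ in an interval approaching $1$ and each of the three terms converges, the weak inequality is preserved under the limit, giving
\[
    D_{\Meas}(\sA_{12}\|\sB_{12}) \geq D_{\Meas}(\sA_1\|\sB_1) + D_{\Meas}(\sA_2\|\sB_2),
\]
as claimed.

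There is no genuine obstacle here; the only point to check is the compatibility of the two ingredients. The polar-set hypotheses imposed in the statement are precisely those needed for the $\alpha\in(0,1)$ case of Lemma~\ref{lem: generalized supadditivity DM alpha}, and the compactness and convexity of the sets are exactly what licenses the continuity provided by Lemma~\ref{lem: general DM alpha continuitiy}. Since we approach $1$ strictly from below, we never need the reverse-polar hypotheses that Lemma~\ref{lem: generalized supadditivity DM alpha} requires for $\alpha > 1$. An equivalent formulation would exploit the monotonicity of $D_{\Meas,\alpha}$ in $\alpha$ and take a supremum over $\alpha \in (0,1)$ via the first equality in Lemma~\ref{lem: general DM alpha continuitiy}, but the termwise limit is the most transparent route.
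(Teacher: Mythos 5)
Your proposal is correct and is exactly the paper's argument: the paper proves this lemma by combining Lemma~\ref{lem: generalized supadditivity DM alpha} (for $\alpha \in (0,1)$, whose polar-set hypotheses match the present ones) with the continuity statement $\lim_{\alpha \to 1^-} D_{\Meas,\alpha}(\sA\|\sB) = D_{\Meas}(\sA\|\sB)$ of Lemma~\ref{lem: general DM alpha continuitiy}, passing the inequality through the limit. Your write-up simply makes explicit the termwise limit (or, equivalently, the supremum over $\alpha \in (0,1)$) that the paper leaves implicit.
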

\end{shaded}

\begin{remark}
The above superadditivity is different than the result by Piani~\cite[Theorem 1]{Piani2009} under compatible assumptions. Here we do not put constrains on the performed measurements. Therefore, the measurements and the set of quantum states may not be compatible in the case here.
\end{remark}

\section{Generalized quantum AEP}
\label{sec: Generalized asymptotic equipartition property}

The asymptotic equipartition property (AEP) is a fundamental concept in information theory that describes the behavior of sequences of random variables as the sequence length increases~\cite{cover1999elements,holenstein2011randomness}. It essentially states that, for a large number of independent and identically distributed (i.i.d.) random variables, the sequences exhibit regular and predictable behavior when considered collectively, despite individual randomness. This concept has been extended to quantum information theory, where the quantum version of the AEP applies to quantum states and quantum entropy (e.g.~\cite{tomamichel2009fully,Tomamichel2015b}). This extension is particularly useful in the study of quantum source coding and other quantum communication protocols. In this work, we further generalize the quantum AEP beyond the i.i.d. framework by considering quantum states drawn from two sets that satisfy the following assumptions.

\begin{assumption}\label{ass: steins lemma assumptions}
Consider a family of sets $\{\sA_n\}_{n\in \NN}$ satisfying the following properties,
\begin{itemize}
    \item (A.1) Each $\sA_n$ is convex and compact;
    \item (A.2) Each $\sA_n$ is permutation-invariant; 
    \item (A.3) $\sA_m \ox \sA_k \subseteq \sA_{m+k}$, for all $m,k \in \NN$;
    \item (A.4) $\polarPSD{(\sA_m)} \ox \polarPSD{(\sA_k)} \subseteq \polarPSD{(\sA_{m+k})}$, for all $m,k \in \NN$.
\end{itemize}
\end{assumption}

Note that the first three assumptions are standard in most existing literature (e.g.~\cite{Brand_o_2010,brandao2020adversarial,hayashi2024generalized,lami2024solutiongeneralisedquantumsteins}), and the new assumption (A.4) is indeed satisfied by many cases of interest (see Table~\ref{tab: A list of sets that satisfy Assumption}).

\setlength\extrarowheight{4pt}
\begin{table}[H]
\centering
\begin{tabular}{l|l}
\toprule[2pt]
Sets & Mathematical descriptions \\
\hline
Singleton & $\{\rho^{\ox n}\}$ with $\rho \in \density(\cH)$\\
Conditional states & $\{I_n \ox \rho_n: \rho_n \in \density(\cH^{\ox n})\}$\\
Channel image & $\{\cN^{\ox n}(\rho_n): \rho_n \in \density(\cH^n)\}$ with a quantum channel $\cN$\\
Recovery set & $\{\cN_{B^n\to C^n}(\rho_{AB}^{\ox n}): \cN \in \CPTP(B^n:C^n)\}$ with $\rho \in \density(AB)$\\
Extensions set & $\{\omega_n \in \density(A^nB^n): \tr_{B^n} \omega_n = \rho_{A}^{\ox n}\}$ with $\rho_A \in \density(A)$\\
Incoherent states & $\{\rho_n \in \density(\cH^{\ox n}): \rho_n = \Delta(\rho_n)\}$ with the completely dephasing channel $\Delta$ \\
Rains set &  $\{\rho_n \in \PSD(A^nB^n): \|\rho_n^{\sfT_{B_1\cdots B_n}}\|_1 \leq 1\}$ with the partial transpose $\sfT_{B_i}$ \\
Nonpositive mana & $\{\rho_n \in \PSD(\cH^{\ox n}): \|\rho_n\|_{W,1} \leq 1\}$ with the Wigner trace norm$\|\cdot\|_{W,1}$\\
\bottomrule[2pt]  
\end{tabular}
\caption{\small A list of sets that satisfy Assumption~\ref{ass: steins lemma assumptions}.}
\label{tab: A list of sets that satisfy Assumption}
\end{table}

For instance, the assumptions hold when the set is a singleton of tensor product i.i.d. states, as in the existing quantum AEP; when it consists of the identity operator tensored with the set of all density operators, which is relevant for conditional quantum entropies; or when it is the image of a quantum channel or a set of quantum channels, a scenario that naturally arises in adversarial quantum channel discrimination. The assumptions are also satisfied by the set of recovery states used in quantum state redistribution~\cite{brandao2015quantum,fawzi2015quantum,berta2016fidelity,Berta2017} and the extensions set used in Uhlmann's theorem~\cite{uhlmann1976transition,fang2025variational} and the set of incoherent states used in quantum coherence theory~\cite{baumgratz2014quantifying,regula2018one,diaz2018using,fang2018probabilistic,hayashi2021finite}. 

It is also worth mentioning that even in cases where assumption (A.4) is not directly met, we can relax the original set to fulfill the required conditions. For example, the set of separable states does not satisfy the polar assumption, but we can relax it to the Rains set~\cite{rains2001semidefinite,audenaert2002asymptotic}, which is widely used in entanglement theory and satisfies all required assumptions. We can also relax the set of stabilizer states to those with non-positive mana~\cite{Wang2018magicstates}, a technique used in fault-tolerant quantum computing. Exploiting this idea, we can obtain efficient and improved bounds for quantifying quantum resources in several tasks of interest~\cite{fang2025efficient}.

Now we proceed to present our main result in this work as follows.

\begin{shaded}
\begin{theorem}(Generalized AEP.)\label{thm: generalized AEP}
Let $\{\sA_n\}_{n\in\NN}$ and $\{\sB_n\}_{n\in\NN}$ be two sequences of sets satisfying Assumption~\ref{ass: steins lemma assumptions} and $\sA_n \subseteq \density(\cH^{\ox n})$, $\sB_n \subseteq \PSD(\cH^{\ox n})$. Let $d = \dim \cH$. 
Assume moreover that $D_{\max}(\sA_n\|\sB_n) \leq cn$, for all $n \in \NN$ and a constant $c \in \RR_{\pl}$.
Then for any $\ve \in (0,1)$, it holds that
\begin{align}
\label{eq: asymptotic AEP}
    \lim_{n \to \infty} \frac{1}{n}D_{\Hypo,\ve}(\sA_n\|\sB_n) = \lim_{n \to \infty} \frac{1}{n} D_{\max, \ve}(\sA_n \| \sB_n) = D^{\reg}(\sA \| \sB).
\end{align}
In addition, $D^{\reg}(\sA \| \sB)$ can be estimated using the following bounds: for any $m \geq 1$, we have
\begin{align}\label{eq: generalized AEP finite estimate}
    &\frac{1}{m} D_{\Meas}(\sA_m \| \sB_m)\leq D^{\reg}(\sA \| \sB) \leq \frac{1}{m} D(\sA_m \| \sB_m)
\end{align}
with explicit convergence guarantees
\begin{align}\label{eq: generalized AEP finite estimate convergence}
        \frac{1}{m} D(\sA_m \| \sB_m) - \frac{1}{m} D_{\Meas}(\sA_m \| \sB_m) \leq 
        \frac{1}{m} 2(d^2 + d) \log (m+d).
\end{align}

If there is a constant $C$ such that for any $n \geq 1$ and $\rho_n \in \sA_n, \sigma_n \in \sB_n$, $D_{\max}(\rho_n \| \sigma_n) \leq Cn/4$ and $\log \tr(\sigma_n) \leq Cn/4$,\footnotemark we can obtain explicit bounds of the form:
\begin{align}\label{eq: hyp testing rel entropy explicit}
    - C' f(n, \ve)
    \leq D_{\Hypo,\ve}(\sA_n\|\sB_n) - n D^{\reg}(\sA \| \sB)  \leq C' f(n, 1-\ve),
\end{align}
 and 
\begin{align}
      - C'f(n,1-2\ve) - \log \frac{1} {\ve}
      \leq D_{\max,\ve}(\sA_n\|\sB_n) - n D^{\reg} (\sA \| \sB)  
      \leq C' f(n, \ve) + \log \frac{2}{\ve^2},\label{eq: dmax explicit}
\end{align}
where $f(n, \ve) = n^{2/3}\log n \log^{1/3} \frac{1}{\ve}$ and $C'$ only depends on $C$ and the local dimension $d$.
\end{theorem}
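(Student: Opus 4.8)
The plan is to establish the theorem in several layers, building from the regularized quantity toward the smoothed operational divergences. First I would establish existence of $D^{\reg}(\sA\|\sB)$ and the sandwich bound~\eqref{eq: generalized AEP finite estimate}. The superadditivity of $D_{\Meas}$ from Lemma~\ref{lem: generalized supadditivity DM} (applied inductively via (A.4) for both $\{\sA_n\}$ and $\{\sB_n\}$) gives that $\{D_{\Meas}(\sA_n\|\sB_n)\}_n$ is a superadditive sequence, while the subadditivity of $D$ follows directly from the data-processing inequality together with (A.3) and additivity of $D$ under tensor products. By Fekete's lemma both $\frac{1}{m}D_{\Meas}(\sA_m\|\sB_m)$ and $\frac{1}{m}D(\sA_m\|\sB_m)$ converge, to their supremum and infimum respectively, and the ordering $D_{\Meas}\leq D$ from Lemma~\ref{thm: comparison of quantum divergence} forces them to have the same limit $D^{\reg}(\sA\|\sB)$, yielding~\eqref{eq: generalized AEP finite estimate}. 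The convergence guarantee~\eqref{eq: generalized AEP finite estimate convergence} comes from bounding the gap $D(\sA_m\|\sB_m)-D_{\Meas}(\sA_m\|\sB_m)$: using permutation invariance (A.2) and Lemma~\ref{lem: permutation invariant spec}, the optimal $\sigma_m\in\sB_m$ can be taken with only $\poly(m)$ distinct eigenvalues, so Lemma~\ref{lem: DM and Sandwiched relation} (at $\alpha\to 1$, where $D_{\Sand,\alpha}\to D$) controls the difference by $2\log|\spec(\sigma_m)| \leq 2(d^2+d)\log(m+d)$.

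Next I would prove the operational limits~\eqref{eq: asymptotic AEP} and the explicit finite-$n$ bounds~\eqref{eq: hyp testing rel entropy explicit}. The lower bound on $D_{\Hypo,\ve}$ is the \emph{achievability} direction: here I would use the one-shot equivalence in Lemma~\ref{lem: DH petz sandwiched} to lower-bound $D_{\Hypo,\ve}(\sA_n\|\sB_n)$ by $D_{\Meas,\alpha}(\sA_n\|\sB_n)$ (via $D_{\Petz,\alpha}\geq D_{\Meas,\alpha}$ and the measured-quantity lower bound), then invoke superadditivity of $D_{\Meas,\alpha}$ from Lemma~\ref{lem: generalized supadditivity DM alpha} to get $D_{\Meas,\alpha}(\sA_n\|\sB_n)\geq n\cdot\frac{1}{m}D_{\Meas,\alpha}(\sA_m\|\sB_m)$ type bounds. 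Choosing $\alpha = \alpha(n)\to 1$ at a carefully tuned rate balances the penalty term $\frac{\alpha}{\alpha-1}\log\frac{1}{\ve}$ against the gap between $D_{\Meas,\alpha}$ and $D_{\Meas}$; taking $1-\alpha \sim n^{-1/3}$ is what produces the $n^{2/3}\log n$ second-order term. The upper bound (\emph{strong converse}) on $D_{\Hypo,\ve}$ runs symmetrically through the $\alpha'>1$ branch of Lemma~\ref{lem: DH petz sandwiched}, using $D_{\Sand,\alpha'}\leq D_{\Meas,\alpha'}+2\log|\spec(\sigma)|$ (Lemma~\ref{lem: DM and Sandwiched relation}) with the polynomial spectrum bound, plus superadditivity of $D_{\Meas,\alpha'}$ for $\alpha'>1$ under the reverse-polar hypothesis. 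The two sides together pin $\frac{1}{n}D_{\Hypo,\ve}$ to $D^{\reg}$.

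For the max-relative entropy bounds~\eqref{eq: dmax explicit} I would relate $D_{\max,\ve}$ to $D_{\Hypo,\ve'}$ for suitably shifted smoothing parameters. The standard chain of inequalities (e.g.\ via the relation between smoothed max-entropy and hypothesis testing, converting a near-optimal test into a feasible $\rho'$ in the purified-distance ball) gives two-sided bounds $D_{\Hypo,2\ve}-\log\frac{1}{\ve} \lesssim D_{\max,\ve} \lesssim D_{\Hypo,\ve}+\log\frac{2}{\ve^2}$, and then~\eqref{eq: dmax explicit} follows by substituting the already-proven finite-$n$ estimates for $D_{\Hypo}$. The technical role of the two extra hypotheses ($D_{\max}(\rho_n\|\sigma_n)\leq Cn/4$ and $\log\tr(\sigma_n)\leq Cn/4$) is to keep the Petz/sandwiched remainder terms and the fidelity-to-hypothesis-testing conversion uniformly controlled, so that all error terms collapse into a single constant $C'$ depending only on $C$ and $d$.

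The main obstacle I anticipate is the \emph{simultaneous} optimization of the Rényi parameter: unlike the i.i.d.\ Stein's lemma where one can regularize first and then send $\alpha\to1$, here the superadditivity bounds on $D_{\Meas,\alpha}$ only control the $n$-blocklength quantity relative to a fixed block size $m$, and the approximation gap between $D_{\Meas,\alpha}(\sA_m\|\sB_m)$ and $D_{\Meas}(\sA_m\|\sB_m)$ must be made quantitative in both $\alpha$ and $m$. Getting an explicit (rather than asymptotic) second-order rate therefore requires a joint choice $\alpha = \alpha(n)$, $m = m(n)$ with a genuinely quantitative continuity estimate for $D_{\Meas,\alpha}$ near $\alpha=1$ — this is where the unusual $O(n^{2/3}\log n)$ rate, rather than the customary $O(\sqrt n)$, originates, and keeping every penalty term's dependence on $\ve$ and $C$ explicit throughout the tuning is the most delicate bookkeeping in the argument.
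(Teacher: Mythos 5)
Your plan tracks the paper's proof closely in most respects: the sandwich \eqref{eq: generalized AEP finite estimate} via superadditivity of $D_{\Meas}$ (Lemma~\ref{lem: generalized supadditivity DM}) and subadditivity of $D$ (Lemma~\ref{lem: generalized subadditivity}) plus Fekete; the gap bound \eqref{eq: generalized AEP finite estimate convergence} via twirling, the pinching comparison of Lemma~\ref{lem: DM and Sandwiched relation}, and the polynomial spectrum bound of Lemma~\ref{lem: permutation invariant spec}; the achievability direction via $D_{\Hypo,\ve}\geq D_{\Petz,\alpha}-\tfrac{\alpha}{1-\alpha}\log\tfrac1\ve\geq D_{\Meas,\alpha}-\cdots$ combined with block superadditivity of $D_{\Meas,\alpha}$ and a joint choice of $(m,\alpha)$; and the $D_{\max,\ve}$ statements by two-sided reduction to $D_{\Hypo}$. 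You also correctly identify where the quantitative continuity in $\alpha$ is needed and what the hypotheses $D_{\max}(\rho_n\|\sigma_n)\leq Cn/4$, $\log\tr(\sigma_n)\leq Cn/4$ are for (they bound $\log\eta_m\leq(2+C)m$ in the appendix continuity lemmas).

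However, your strong-converse branch as described does not work. You propose to use ``superadditivity of $D_{\Meas,\alpha'}$ for $\alpha'>1$ under the reverse-polar hypothesis,'' but (i) the reverse-polar stability $\polarPSDre{(\sA_m)}\ox\polarPSDre{(\sA_k)}\subseteq\polarPSDre{(\sA_{m+k})}$ is \emph{not} among the theorem's hypotheses --- Assumption~\ref{ass: steins lemma assumptions} contains only the polar condition (A.4), and the $\alpha>1$ case of Lemma~\ref{lem: generalized supadditivity DM alpha} is never used in the AEP proof --- and (ii) even granting it, superadditivity bounds the $n$-copy quantity from \emph{below}, which is useless for an upper bound on $D_{\Hypo,\ve}(\sA_n\|\sB_n)$. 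The paper's converse instead uses only \emph{sub}additivity of $D_{\Sand,\alpha'}$ between sets (Lemma~\ref{lem: generalized subadditivity}, which needs just (A.3) since $D_{\Sand,\alpha'}$ is additive on product states) to reduce to blocks of size $m$, and then bounds the block term by
$\tfrac1m D_{\Sand,\alpha'}(\sA_m\|\sB_m)\leq D^{\reg}(\sA\|\sB)+(\alpha'-1)(2+C)^2 m+\tfrac{2(d^2+d)\log(m+d)}{m}$,
chaining the explicit continuity estimate $D_{\Sand,\alpha'}\leq D+(\alpha'-1)(\log\eta)^2$ (Lemma~\ref{lem: continuity alpha = 1 from above}), the pinching gap, and $D_{\Meas}(\sA_m\|\sB_m)\leq mD^{\reg}(\sA\|\sB)$ --- i.e., the $\alpha\leq 1$ superadditivity again, via Lemma~\ref{lem: main theorem computation}. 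Two further corrections: your tuning $1-\alpha\sim n^{-1/3}$ is off --- the validity window forces $|1-\alpha|\leq\tfrac{1}{(2+C)m}$ and the optimal choice is $|1-\alpha|=\Theta(m^{-2}\log m)=\Theta(n^{-2/3}\log n)$ with $m=\Theta\bigl((n/\log\tfrac1\ve)^{1/3}\bigr)$; with $1-\alpha\sim n^{-1/3}$ the accumulated continuity error $n(1-\alpha)(2+C)^2m$ is $\Theta(n)$ and destroys the $n^{2/3}\log n$ rate. And the upper bound on $D_{\max,\ve}$ must go through $D_{\Hypo,1-\ve^2/2}+\log\tfrac{2}{\ve^2}$, not $D_{\Hypo,\ve}$; note also that the paper obtains the asymptotic statement \eqref{eq: asymptotic AEP} under the weaker hypothesis $D_{\max}(\sA_n\|\sB_n)\leq cn$ via the regularized continuity Lemma~\ref{lem: composite sandwiched regularized continuitiy} (fixed $\alpha$, then optimize), whereas deriving it from your explicit bounds would needlessly import the stronger condition \eqref{eq: assumptionC}.
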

\end{shaded}
\footnotetext{A slightly weaker condition is sufficient, see~\eqref{eq: assumptionC}.}

\begin{shaded}
\begin{remark}(Approximating $D^{\reg}(\sA \| \sB) $ efficiently.)
\label{rem:efficient computation}
    Eq.~\eqref{eq: generalized AEP finite estimate} provides converging upper and lower bounds on $D^{\reg}(\sA \| \sB)$ in terms of convex optimization programs as both $D$ and $D_{\Meas}$ are convex. In addition, if $\sA_m$ and $\sB_m$ are semidefinite representable (i.e., can be written as the feasible sets of a semidefinite program), then the upper bound $\frac{1}{m} D(\sA_m \| \sB_m)$ is a quantum relative entropy program~\cite{chandrasekaran2017relative,fawzi2018efficient} that can be solved using interior point methods~\cite{fawzi2023optimal}. For the lower bound $\frac{1}{m} D_{\Meas}(\sA_m \| \sB_m)$, we use Lemma~\ref{lem: DM variational sets} to write
    \begin{align}
        D_{\Meas}(\sA_m \| \sB_m) 
        &= 
        \sup_{W, X} \left\{ -h_{\sA_m}(X) :  W\in \polarPD{(\sB_m)}, X \geq -\log W \right\}\\
        &=\sup_{W,X,t,s} \left\{ 
        -t+s : t > 0, \; W \in \polarPD{(\sB_m)}, X + sI \in t\polarPSD{(\sA_m)}, X \geq -\log W  \right\}  \label{eq:reformulation DM sets}
    \end{align}
    where we used the fact that $h_{\sA_m}$ is monotone (i.e., $h_{\sA_m}(X) \leq h_{\sA_m}(Y)$ if $X\leq Y$) since $\sA_m \subseteq \PSD$, and that $h_{\sA_m}(X + sI) = h_{\sA_m}(X) + s$ as $\sA_m \subseteq \density$ so that $h_{\sA_m}(X) = \inf_{s} \{h_{\sA_m}(X + sI) - s : X + sI \geq 0\} = \inf\{ t - s : t > 0 \text{ and } X + sI \in t{\sA_m^\circ}, X+sI \geq 0 \}$.
    If $\sA_m$ and $\sB_m$ are semidefinite representable, then, using semidefinite programming duality, we know that the sets $(\sB_m)^{\circ}_{\pl}$ and
    $\{(t,s,X) : t > 0 \text{ and } X+sI \in t \polarPSD{(\sA_m)}\}$
    are semidefinite representable. The last constraint in \eqref{eq:reformulation DM sets} is convex since the matrix logarithm is operator concave, and the whole maximization problem \eqref{eq:reformulation DM sets} can also be solved using interior-point methods~\cite{fawzi2023optimal}. Further details on this computational aspect are provided in the accompanying paper~\cite{fang2025efficient}.
\end{remark}
\end{shaded}

The proof of Theorem~\ref{thm: generalized AEP} is provided in the following section.

\subsection{Lemmas required to prove the generalized AEP}

The proof of Theorem~\ref{thm: generalized AEP} requires the following lemmas. 

\begin{shaded}
\begin{lemma}(Subadditivity.)\label{lem: generalized subadditivity}
Let $\DD$ be a quantum divergence that is additive or subaddtive under tensor product of quantum states. Let $\{\sA_n\}_{n\in\NN}$ and $\{\sB_n\}_{n\in\NN}$ be two sequences of sets satisfying (A.1) and (A.3) in Assumption~\ref{ass: steins lemma assumptions} and $\sA_n \subseteq \density(\cH^{\ox n})$, $\sB_n \subseteq \PSD(\cH^{\ox n})$, and $\DD(\sA_n\|\sB_n) \leq c n$ for all $n \in \NN$ and a constant $c\in \RR_{\pl}$. Then for any $m,k \in \NN$, it holds that
\begin{align}
    \DD(\sA_{m+k}\|\sB_{m+k}) \leq \DD(\sA_m\|\sB_{m}) + \DD(\sA_{k}\|\sB_{k}). 
\end{align}
\end{lemma}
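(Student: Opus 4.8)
The plan is to exploit the single structural fact that makes this lemma almost immediate: the stability assumption (A.3) guarantees that the infimum defining $\DD(\sA_{m+k}\|\sB_{m+k})$ ranges over a feasible set that \emph{contains} all tensor products of optimizers for the two smaller problems. So the strategy is to restrict that infimum to product operators and then collapse it using the (sub)additivity of $\DD$ on product states. No use of permutation invariance (A.2) or the polar assumption (A.4) is required here.

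Concretely, I would first fix $\ve > 0$ and, since $\DD(\sA_m\|\sB_m)$ and $\DD(\sA_k\|\sB_k)$ are themselves infima, pick near-optimizers $\rho_m \in \sA_m,\ \sigma_m \in \sB_m$ with $\DD(\rho_m\|\sigma_m) \leq \DD(\sA_m\|\sB_m) + \ve$, and likewise $\rho_k \in \sA_k,\ \sigma_k \in \sB_k$ with $\DD(\rho_k\|\sigma_k) \leq \DD(\sA_k\|\sB_k) + \ve$. (If one prefers, compactness from (A.1) together with lower semicontinuity of $\DD$ lets one take exact minimizers and drop $\ve$.) Here the boundedness hypothesis $\DD(\sA_n\|\sB_n) \leq cn$ is what guarantees that both $\DD(\sA_m\|\sB_m)$ and $\DD(\sA_k\|\sB_k)$ are finite, so these near-optimizers are meaningful and no indeterminate $\infty - \infty$ can appear.

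Next, by applying (A.3) to both sequences, $\rho_m \ox \rho_k \in \sA_{m+k}$ and $\sigma_m \ox \sigma_k \in \sB_{m+k}$, so the pair $(\rho_m\ox\rho_k,\ \sigma_m\ox\sigma_k)$ is feasible for the infimum defining $\DD(\sA_{m+k}\|\sB_{m+k})$. The chain would then read
\begin{align}
\DD(\sA_{m+k}\|\sB_{m+k})
&\leq \DD(\rho_m\ox\rho_k\,\|\,\sigma_m\ox\sigma_k)\\
&\leq \DD(\rho_m\|\sigma_m) + \DD(\rho_k\|\sigma_k)\\
&\leq \DD(\sA_m\|\sB_m) + \DD(\sA_k\|\sB_k) + 2\ve,
\end{align}
where the middle step is exactly the assumed additivity/subadditivity of $\DD$ under tensor products. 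Letting $\ve \to 0$ delivers the stated inequality.

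The hard part is essentially bookkeeping rather than mathematics: the one thing to handle with care is finiteness, which is precisely what the boundedness hypothesis supplies. It ensures that splitting the infimum over product operators into a sum of two independent infima (equivalently, sending $\ve \to 0$) is legitimate and never produces an $\infty - \infty$ ambiguity. Everything else is a direct consequence of feasibility via (A.3) and the tensor-product (sub)additivity built into the hypothesis on $\DD$.
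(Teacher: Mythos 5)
Your proof is correct and follows essentially the same route as the paper: use (A.3) to show tensor products of (near-)optimizers are feasible for the $(m+k)$-problem, apply the (sub)additivity of $\DD$, and invoke the boundedness hypothesis to rule out $\infty-\infty$ issues. The only cosmetic difference is that the paper takes the inequality for arbitrary $\rho_m,\sigma_m,\rho_k,\sigma_k$ and then infimizes both sides, whereas you work with explicit $\ve$-near-optimizers and send $\ve\to 0$ — the two phrasings are interchangeable.
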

\end{shaded}
\begin{proof}
Note that the assumption $\DD(\sA_n\|\sB_n) \leq c n$ ensures that all divergences considered here are finite. For any $\rho_m \in \sA_m$, $\sigma_m \in \sB_m$, $\rho_k \in \sA_k$ and $\sigma_k \in \sB_k$, we know that $\rho_m \ox \rho_k \in \sA_{m+k}$ and $\sigma_m \ox \sigma_k \in \sB_{m+k}$ by the assumption (A.3). This gives
\begin{align}
    \DD(\sA_{m+k}\|\sB_{m+k}) \leq \DD(\rho_m \ox \rho_k\|\sigma_m \ox \sigma_k) \leq \DD(\rho_m\|\sigma_m) + \DD(\rho_k\|\sigma_k),
\end{align}
where the second inequality follows by the additivity or subadditivity assumption of $\DD$.
As the above holds for any $\rho_m,\sigma_m,\rho_k,\sigma_k$, we have
\begin{align}
    \DD(\sA_{m+k}\|\sB_{m+k}) \leq \inf_{\substack{\rho_m \in \sA_m\\ \sigma_m \in \sB_m}} \DD(\rho_m\|\sigma_m) + \inf_{\substack{\rho_k \in \sA_k\\ \sigma_k \in \sB_k}} \DD(\rho_k\|\sigma_k) = \DD(\sA_m\|\sB_m) + \DD(\sA_k\|\sB_k),
\end{align}
which completes the proof.
\end{proof}

\begin{remark}
Note that $D_{\Meas,1/2}(\rho\|\sigma) = D_{\Sand,1/2}(\rho\|\sigma) = - \log F(\rho\|\sigma)$. Since the recovery set satisfies Assumption~\ref{ass: steins lemma assumptions}, Lemmas~\ref{lem: generalized supadditivity DM alpha} and~\ref{lem: generalized subadditivity} together imply that the fidelity of recovery is multiplicative, as previously proven in~\cite{berta2016fidelity}.
\end{remark}

Note that if a sequence of numbers $\{a_n\}_{n\in\NN}$ satisfies $a_{n+m} \leq a_{n} + a_{m}$, then $a_n/n$ is convergent~\cite[Lemma 4.1.2]{davies2007linear}. Therefore, if $\DD$ satisifies the above assumptions, we have
\begin{align}\label{eq: general subadditivity limit exist}
    \DD^\reg(\sA\|\sB) = \inf_{n>1} \frac{1}{n} \DD(\sA_n\|\sB_n).
\end{align}

\begin{shaded}
\begin{lemma}(Continuity.)\label{lem: composite sandwiched regularized continuitiy}
Let $\{\sA_n\}_{n\in\NN}$ and $\{\sB_n\}_{n\in\NN}$ be two sequences of sets such that $\sA_n \subseteq \density(\cH^{\ox n})$, $\sB_n \subseteq \PSD(\cH^{\ox n})$ and $D_{\max}(\sA_n\|\sB_n) \leq cn$, for all $n \in \NN$ and a constant $c \in \RR_{\pl}$. If these sets satisfy (A.1), (A.3) in  Assumption~\ref{ass: steins lemma assumptions}, it holds that
\begin{alignat}{2}
    \inf_{\alpha > 1} \; \; & D_{\Sand,\alpha}^{\reg}(\sA\|\sB) & =  D^\reg(\sA\|\sB).
\end{alignat}
If these sets satisfy (A.1), (A.4) in  Assumption~\ref{ass: steins lemma assumptions}, it holds that
\begin{alignat}{2}
    \sup_{\alpha \in (0,1)} & D_{\Meas,\alpha}^\reg(\sA\|\sB) & = D_{\Meas}^\reg(\sA\|\sB).
\end{alignat}
\end{lemma}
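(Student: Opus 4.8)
The plan is to establish both identities by the same three-step recipe: use (sub/super)additivity together with Fekete's lemma to rewrite each regularized divergence as a single optimization over the block length $n$; then freely interchange the optimization over the \Renyi parameter $\alpha$ with the optimization over $n$ (an interchange that is always legitimate, since an infimum of infima, or a supremum of suprema, may be taken in either order); and finally invoke a one-shot continuity statement in $\alpha$ to collapse the inner optimization to the target divergence. Throughout, the hypothesis $D_{\max}(\sA_n\|\sB_n)\le cn$ guarantees finiteness of all quantities involved, since $D_{\Sand,\alpha}\le D_{\max}$ for every finite $\alpha$ and $D_{\Meas,\alpha}\le D_{\Meas}\le D\le D_{\max}$ by Lemma~\ref{thm: comparison of quantum divergence}; this finiteness is exactly what makes Fekete's lemma applicable and the regularized limits finite.

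For the first identity I would first note that $D_{\Sand,\alpha}$ is additive under tensor products of states, so Lemma~\ref{lem: generalized subadditivity}, applied under (A.1) and (A.3), yields subadditivity of $n\mapsto D_{\Sand,\alpha}(\sA_n\|\sB_n)$. The subadditive form of Fekete's lemma (Eq.~\eqref{eq: general subadditivity limit exist}) then gives $D_{\Sand,\alpha}^\reg(\sA\|\sB)=\inf_n \frac1n D_{\Sand,\alpha}(\sA_n\|\sB_n)$, and likewise $D^\reg(\sA\|\sB)=\inf_n \frac1n D(\sA_n\|\sB_n)$. Interchanging the two infima gives $\inf_{\alpha>1}D_{\Sand,\alpha}^\reg(\sA\|\sB)=\inf_n \frac1n \inf_{\alpha>1}D_{\Sand,\alpha}(\sA_n\|\sB_n)$. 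The inner one-shot continuity is elementary here: writing $D_{\Sand,\alpha}(\sA_n\|\sB_n)=\inf_{\rho_n\in\sA_n,\sigma_n\in\sB_n}D_{\Sand,\alpha}(\rho_n\|\sigma_n)$, interchanging infima once more and using that $\alpha\mapsto D_{\Sand,\alpha}(\rho_n\|\sigma_n)$ is monotone increasing with $\lim_{\alpha\to1^+}D_{\Sand,\alpha}=D$ (Eq.~\eqref{eq: state Renyi continuous}) gives $\inf_{\alpha>1}D_{\Sand,\alpha}(\sA_n\|\sB_n)=D(\sA_n\|\sB_n)$. Substituting back yields $\inf_{\alpha>1}D_{\Sand,\alpha}^\reg(\sA\|\sB)=\inf_n\frac1n D(\sA_n\|\sB_n)=D^\reg(\sA\|\sB)$.

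For the second identity the roles of subadditivity and infimum are replaced by superadditivity and supremum. Under (A.1) and (A.4), Lemma~\ref{lem: generalized supadditivity DM alpha} gives superadditivity of $D_{\Meas,\alpha}(\sA_n\|\sB_n)$ for $\alpha\in(0,1)$, while Lemma~\ref{lem: generalized supadditivity DM} gives superadditivity of $D_{\Meas}(\sA_n\|\sB_n)$. The superadditive form of Fekete's lemma then expresses both regularizations as suprema over $n$, namely $D_{\Meas,\alpha}^\reg(\sA\|\sB)=\sup_n\frac1n D_{\Meas,\alpha}(\sA_n\|\sB_n)$ and $D_{\Meas}^\reg(\sA\|\sB)=\sup_n\frac1n D_{\Meas}(\sA_n\|\sB_n)$. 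Interchanging the supremum over $\alpha\in(0,1)$ with the supremum over $n$ gives $\sup_{\alpha\in(0,1)}D_{\Meas,\alpha}^\reg(\sA\|\sB)=\sup_n\frac1n\sup_{\alpha\in(0,1)}D_{\Meas,\alpha}(\sA_n\|\sB_n)$, and the already-established one-shot continuity of Lemma~\ref{lem: general DM alpha continuitiy} identifies the inner supremum as $D_{\Meas}(\sA_n\|\sB_n)$, which completes the argument.

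I expect the genuinely delicate work to live not in this lemma but in Lemma~\ref{lem: general DM alpha continuitiy}, whose proof needed a minimax interchange justified by lower semicontinuity of $D_{\Meas,\alpha}$ and monotonicity in $\alpha$. At the regularized level the passage from single-letter to regularized quantities is free precisely because both relevant optimizations point in the same direction (both infima in the sandwiched case, both suprema in the measured case), so no minimax theorem is invoked a second time. The only steps requiring care are verifying the finiteness bound so that Fekete's lemma applies and every limit is finite, and keeping straight which structural hypothesis powers each (sub/super)additivity statement, namely (A.3) for the sandwiched case and (A.4) for the measured case.
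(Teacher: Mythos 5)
Your proposal is correct and follows essentially the same route as the paper's proof: both use Lemma~\ref{lem: generalized subadditivity} (resp.\ Lemmas~\ref{lem: generalized supadditivity DM alpha} and~\ref{lem: generalized supadditivity DM}) with Fekete's lemma to express the regularizations as $\inf_n$ (resp.\ $\sup_n$), interchange the same-direction optimizations over $\alpha$ and $n$, and collapse the inner optimization via the one-shot continuity at $\alpha=1$ (for the measured case, via Lemma~\ref{lem: general DM alpha continuitiy}). Your closing observation is also accurate: the only minimax interchange lives inside Lemma~\ref{lem: general DM alpha continuitiy}, not in this lemma itself.
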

\end{shaded}
\begin{proof}
Note that the assumption on $D_{\max}(\sA_n\|\sB_n)$ ensures that all the divergences considered here are finite. Then the following chain of relations hold
\begin{align}
    \inf_{\alpha > 1} D_{\Sand,\alpha}^{\reg}(\sA\|\sB) & = \inf_{\alpha > 1} \inf_{n>1} \frac{1}{n} D_{\Sand,\alpha}(\sA_n\|\sB_n)\\
& = \inf_{\alpha > 1} \inf_{n>1}  \inf_{\substack{\rho_n \in \sA_n\\ \sigma_n \in \sB_n}}\frac{1}{n} D_{\Sand,\alpha}(\rho_n\|\sigma_n)\\
& = \inf_{n>1}  \inf_{\substack{\rho_n \in \sA_n\\ \sigma_n \in \sB_n}} \inf_{\alpha > 1} \frac{1}{n} D_{\Sand,\alpha}(\rho_n\|\sigma_n)\\\
& = \inf_{n>1}  \inf_{\substack{\rho_n \in \sA_n\\ \sigma_n \in \sB_n}} \frac{1}{n} D(\rho_n\|\sigma_n)\\
& =  \inf_{n>1} \frac{1}{n} D(\sA_n\|\sB_n)\\
& = D^\reg(\sA\|\sB)
\end{align}
where the first and the last equalities follow from the assumptions (A.1) and (A.3) and therefore the subadditivity of $D_{\Sand,\alpha}(\sA_n\|\sB_n)$ and $D(\sA_n\|\sB_n)$ from Lemma~\ref{lem: generalized subadditivity}, and the fourth equality uses the continuity of the sandwiched \Renyi divergence at $\alpha = 1$ for quantum states. 

Similarly, we have the following 
\begin{align}
    \sup_{\alpha \in (0,1)} D_{\Meas,\alpha}^\reg(\sA\|\sB) & = \sup_{\alpha \in (0,1)} \sup_{n > 1} \frac{1}{n} D_{\Meas,\alpha}(\sA_n\|\sB_n)\\
    & = \sup_{n > 1} \sup_{\alpha \in (0,1)}\frac{1}{n} D_{\Meas,\alpha}(\sA_n\|\sB_n)\\
    & = \sup_{n > 1} \frac{1}{n} D_{\Meas}(\sA_n\|\sB_n)\\
    & = D_{\Meas}^\reg(\sA\|\sB)
\end{align}
where the first and the last equalities follow from the assumptions (A.1) and (A.4) and therefore the superadditivity from Lemma~\ref{lem: generalized supadditivity DM alpha} and Lemma~\ref{lem: generalized supadditivity DM}, and the third equality uses the continuity of $D_{\Meas,\alpha}(\sA_n\|\sB_n)$ at $\alpha = 1$ from Lemma~\ref{lem: general DM alpha continuitiy}. This completes the proof.
\end{proof}

\begin{shaded}
\begin{lemma}(Asymptotic equivalence.)\label{lem: DM alpha DS alpha finite relation}
Let $\{\sA_n\}_{n\in\NN}$ and $\{\sB_n\}_{n\in\NN}$ be two sequences of sets satisfying (A.1) and (A.2) in Assumption~\ref{ass: steins lemma assumptions} and $\sA_n \subseteq \density(\cH^{\ox n})$, $\sB_n \subseteq \PSD(\cH^{\ox n})$ and $D_{\max}(\sA_n\|\sB_n) \leq cn$, for all $n \in \NN$ and a constant $c \in \RR_{\pl}$.  Let $d = \dim(\cH)$ and $\alpha \in [1/2,\infty)$. Then it holds that
\begin{align}\label{eq: DM alpha DS alpha finite relation}
\frac{1}{n} D_{\Meas,\alpha}(\sA_n\|\sB_n) \leq \frac{1}{n} D_{\Sand,\alpha}(\sA_n\|\sB_n) \leq \frac{1}{n} D_{\Meas,\alpha}(\sA_n\|\sB_n) + \frac{1}{n} 2(d^2 + d) \log (n+d).
\end{align}
As a consequence, it holds that
\begin{align}\label{eq: DM alpha DS alpha regularization}
D_{\Meas,\alpha}^\reg(\sA\|\sB) = D_{\Sand,\alpha}^\reg(\sA\|\sB).
\end{align}
\end{lemma}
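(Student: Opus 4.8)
The plan is to obtain the two-sided estimate \eqref{eq: DM alpha DS alpha finite relation} by feeding the pointwise comparison of Lemma~\ref{lem: DM and Sandwiched relation} into a symmetrization argument, and then to read off \eqref{eq: DM alpha DS alpha regularization} by letting the error term vanish. First I would dispose of the lower bound, which is immediate: the inequality $D_{\Meas,\alpha}(\rho_n\|\sigma_n)\le D_{\Sand,\alpha}(\rho_n\|\sigma_n)$ from Lemma~\ref{lem: DM and Sandwiched relation} holds for every $\rho_n\in\sA_n$ and $\sigma_n\in\sB_n$, and taking the infimum over such pairs on both sides preserves it, giving $D_{\Meas,\alpha}(\sA_n\|\sB_n)\le D_{\Sand,\alpha}(\sA_n\|\sB_n)$.

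For the upper bound the difficulty is that the second inequality in Lemma~\ref{lem: DM and Sandwiched relation} carries the factor $2\log|\spec(\sigma_n)|$, and an arbitrary minimizer $\sigma_n$ of $D_{\Meas,\alpha}(\sA_n\|\sB_n)$ need not be permutation invariant, so the polynomial spectrum bound of Lemma~\ref{lem: permutation invariant spec} cannot be applied to it directly. The key step, and the main obstacle, is to replace the optimizers by permutation-invariant ones without increasing $D_{\Meas,\alpha}$. Let $(\rho_n,\sigma_n)$ attain $D_{\Meas,\alpha}(\sA_n\|\sB_n)$, which exists since the sets are compact and $D_{\Meas,\alpha}$ is lower semicontinuous. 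Writing $U_\pi$ for the permutation unitaries on $\cH^{\ox n}$, I would set $\bar\rho_n:=\frac{1}{n!}\sum_\pi U_\pi\rho_n U_\pi^\dagger$ and $\bar\sigma_n:=\frac{1}{n!}\sum_\pi U_\pi\sigma_n U_\pi^\dagger$. By (A.2) each conjugate lies in the corresponding set and by (A.1) so does the average, hence $\bar\rho_n\in\sA_n$ and $\bar\sigma_n\in\sB_n$; moreover both are permutation-invariant operators. Since $D_{\Meas,\alpha}$ is invariant under the unitary channels $U_\pi(\cdot)U_\pi^\dagger$ (apply the data-processing inequality in both directions), every term $D_{\Meas,\alpha}(U_\pi\rho_n U_\pi^\dagger\|U_\pi\sigma_n U_\pi^\dagger)$ equals $D_{\Meas,\alpha}(\rho_n\|\sigma_n)$; joint convexity of $D_{\Meas,\alpha}$ for $\alpha\in[1/2,1]$, and joint quasi-convexity for $\alpha\in(1,\infty)$, then both yield $D_{\Meas,\alpha}(\bar\rho_n\|\bar\sigma_n)\le D_{\Meas,\alpha}(\rho_n\|\sigma_n)$.

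With the permutation-invariant pair $\bar\rho_n,\bar\sigma_n$ in hand I would apply the second inequality of Lemma~\ref{lem: DM and Sandwiched relation} to it and then invoke Lemma~\ref{lem: permutation invariant spec}, which gives $|\spec(\bar\sigma_n)|\le(n+1)^d(n+d)^{d^2}$ and hence $2\log|\spec(\bar\sigma_n)|\le 2(d^2+d)\log(n+d)$. Combining these with $D_{\Sand,\alpha}(\sA_n\|\sB_n)\le D_{\Sand,\alpha}(\bar\rho_n\|\bar\sigma_n)$, which is valid since $\bar\rho_n\in\sA_n$ and $\bar\sigma_n\in\sB_n$, yields
\begin{align*}
D_{\Sand,\alpha}(\sA_n\|\sB_n) &\le D_{\Sand,\alpha}(\bar\rho_n\|\bar\sigma_n) \le D_{\Meas,\alpha}(\bar\rho_n\|\bar\sigma_n)+2(d^2+d)\log(n+d)\\
&\le D_{\Meas,\alpha}(\rho_n\|\sigma_n)+2(d^2+d)\log(n+d) = D_{\Meas,\alpha}(\sA_n\|\sB_n)+2(d^2+d)\log(n+d),
\end{align*}
and dividing by $n$ gives the upper bound in \eqref{eq: DM alpha DS alpha finite relation}. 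Throughout, the hypothesis $D_{\max}(\sA_n\|\sB_n)\le cn$ together with $D_{\Sand,\alpha}\le D_{\max}$ guarantees that all quantities are finite, so the manipulations are legitimate.

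Finally, for \eqref{eq: DM alpha DS alpha regularization} I would simply observe that the additive gap in \eqref{eq: DM alpha DS alpha finite relation}, namely $\frac{1}{n}2(d^2+d)\log(n+d)$, tends to $0$ as $n\to\infty$; hence the two normalized sequences $\frac{1}{n}D_{\Meas,\alpha}(\sA_n\|\sB_n)$ and $\frac{1}{n}D_{\Sand,\alpha}(\sA_n\|\sB_n)$ differ by a vanishing amount, so their regularized limits coincide whenever either exists. I expect the symmetrization step to be the only substantive point, namely recognizing that joint (quasi-)convexity combined with unitary invariance projects onto the permutation-invariant sector without increasing $D_{\Meas,\alpha}$, while (A.1)--(A.2) keep the symmetrized operators feasible; the remainder is bookkeeping on top of the cited lemmas.
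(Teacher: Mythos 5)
Your proposal is correct and follows essentially the same route as the paper: the paper likewise symmetrizes via the permutation twirl $\cT$ (your $\bar\rho_n,\bar\sigma_n$ are exactly $\cT(\rho_n),\cT(\sigma_n)$), applies Lemma~\ref{lem: DM and Sandwiched relation} together with the polynomial spectrum bound of Lemma~\ref{lem: permutation invariant spec}, and concludes by taking the infimum and letting the $\frac{1}{n}2(d^2+d)\log(n+d)$ gap vanish. The only cosmetic difference is that the paper bounds $D_{\Meas,\alpha}(\cT(\rho_n)\|\cT(\sigma_n))\leq D_{\Meas,\alpha}(\rho_n\|\sigma_n)$ directly by the data-processing inequality for the CPTP twirling channel, working with arbitrary feasible pairs rather than minimizers, whereas you rederive the same monotonicity from unitary invariance plus joint (quasi-)convexity, which also makes your appeal to existence of optimizers unnecessary.
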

\end{shaded}
\begin{proof}
Since $D_{\Meas,\alpha}(\rho\|\sigma) \leq D_{\Sand,\alpha}(\rho\|\sigma)$, it is clear that
\begin{align}\label{eq: DM alpha D alpha tmp1}
    \frac{1}{n} D_{\Meas,\alpha}(\sA_n\|\sB_n) \leq \frac{1}{n} D_{\Sand,\alpha}(\sA_n\|\sB_n).
\end{align}
Now we prove the second inequality. Let $\cT$ be the twirling operation under permutation group. Then for any $\rho_n \in \sA_n$ and $\sigma_n \in \sB_n$, we have $\cT(\rho_n) \in \sA_n$ and $\cT(\sigma_n) \in \sB_n$ by the permutation-invariant assumption (A.2) of $\{\sA_n\}_{n\in \NN}$ and $\{\sB_n\}_{n\in \NN}$. This gives
\begin{align}
D_{\Sand,\alpha}(\sA_n\|\sB_n) & \leq D_{\Sand,\alpha}(\cT(\rho_n)\|\cT(\sigma_n))\\
    & \leq D_{\Meas,\alpha}(\cT(\rho_n)\|\cT(\sigma_n)) + 2 \log |\spec(\cT(\sigma_n))|\\
    & \leq D_{\Meas,\alpha}(\cT(\rho_n)\|\cT(\sigma_n)) + 2 \log \left[(n+1)^d(n+d)^{d^2}\right]\\
    & \leq D_{\Meas,\alpha}(\rho_n\|\sigma_n) + 2 \log \left[(n+1)^d(n+d)^{d^2}\right],
\end{align}
where the second inequality follows from Lemma~\ref{lem: DM and Sandwiched relation}, the third inequality follows from Lemma~\ref{lem: permutation invariant spec} and the fact that $\cT(\sigma_n)$ is permutation invariant, the last inequality follows from the data-processing inequality of $D_{\Meas}$. As the above holds for any $\rho_n \in \sA_n$ and $\sigma_n \in \sB_n$, we have
\begin{align}
    \frac{1}{n}D_{\Sand,\alpha}(\sA_n\|\sB_n) \leq \frac{1}{n} D_{\Meas,\alpha}(\sA_n\|\sB_n) + \frac{2}{n} \log \left[(n+1)^d(n+d)^{d^2}\right].
\end{align} 
Finally, noting that $\log [(n+1)^d(n+d)^{d^2}] \leq (d^2 + d) \log (n+d)$, we have the result in Eq.~\eqref{eq: DM alpha DS alpha finite relation}. Taking $\lim_{n\to \infty}$ in Eq.~\eqref{eq: DM alpha DS alpha finite relation}, we have the relation in Eq.~\eqref{eq: DM alpha DS alpha regularization}.
\end{proof}

\begin{shaded}
\begin{lemma}(Finite estimation.)\label{lem: main theorem computation}
    Let $\{\sA_n\}_{n\in\NN}$ and $\{\sB_n\}_{n\in\NN}$ be two sequences of sets satisfying Assumption~\ref{ass: steins lemma assumptions} and $\sA_n \subseteq \density(\cH^{\ox n})$, $\sB_n \subseteq \PSD(\cH^{\ox n})$ and $D_{\max}(\sA_n\|\sB_n) \leq cn$, for all $n \in \NN$ and a constant $c \in \RR_{\pl}$. Then it holds that
\begin{align}\label{eq: main theorem computation tmp1}
    \frac{1}{m} D_{\Meas}(\sA_m\|\sB_m) \leq D^\reg(\sA\|\sB) \leq \frac{1}{m} D(\sA_m \| \sB_m).
\end{align}
\end{lemma}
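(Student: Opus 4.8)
The plan is to establish the two inequalities separately, relying on the subadditivity of the Umegaki relative entropy and the superadditivity of the measured relative entropy between two sets that are already proved earlier in this section. The upper bound will come from $D$, and the lower bound from $D_{\Meas}$ together with the pointwise ordering $D_{\Meas} \leq D$.

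For the upper bound $D^\reg(\sA\|\sB) \leq \frac{1}{m} D(\sA_m\|\sB_m)$, I would first observe that the Umegaki relative entropy is additive under tensor products, hence in particular subadditive, and that the hypothesis $D_{\max}(\sA_n\|\sB_n) \leq cn$ combined with $D \leq D_{\max}$ yields $D(\sA_n\|\sB_n) \leq cn$, which guarantees finiteness. Lemma~\ref{lem: generalized subadditivity} (needing only (A.1) and (A.3)) then gives $D(\sA_{m+k}\|\sB_{m+k}) \leq D(\sA_m\|\sB_m) + D(\sA_k\|\sB_k)$. By the Fekete-type identity recorded in Eq.~\eqref{eq: general subadditivity limit exist}, the regularized limit coincides with the infimum, $D^\reg(\sA\|\sB) = \inf_{n>1} \frac{1}{n} D(\sA_n\|\sB_n)$, which is at most $\frac{1}{m} D(\sA_m\|\sB_m)$ for every $m$.

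For the lower bound $\frac{1}{m} D_{\Meas}(\sA_m\|\sB_m) \leq D^\reg(\sA\|\sB)$, I would invoke Lemma~\ref{lem: generalized supadditivity DM}, which uses (A.1) together with the polar stability (A.4) for \emph{both} families $\{\sA_n\}$ and $\{\sB_n\}$ to give superadditivity $D_{\Meas}(\sA_{m+k}\|\sB_{m+k}) \geq D_{\Meas}(\sA_m\|\sB_m) + D_{\Meas}(\sA_k\|\sB_k)$. Since $D_{\Meas}(\sA_n\|\sB_n) \leq D(\sA_n\|\sB_n) \leq cn$ is bounded above, the superadditive form of Fekete's lemma (apply the cited subadditive statement to $-D_{\Meas}(\sA_n\|\sB_n)$) shows that $D_{\Meas}^\reg(\sA\|\sB)$ exists and equals $\sup_{n>1} \frac{1}{n} D_{\Meas}(\sA_n\|\sB_n) \geq \frac{1}{m} D_{\Meas}(\sA_m\|\sB_m)$. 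Finally, the pointwise ordering $D_{\Meas}(\rho\|\sigma) \leq D(\rho\|\sigma)$ from Lemma~\ref{thm: comparison of quantum divergence}, propagated through the infima defining the set divergences and then through the regularized limits, gives $D_{\Meas}^\reg(\sA\|\sB) \leq D^\reg(\sA\|\sB)$. Chaining yields $\frac{1}{m} D_{\Meas}(\sA_m\|\sB_m) \leq D_{\Meas}^\reg(\sA\|\sB) \leq D^\reg(\sA\|\sB)$, as required.

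I do not expect a deep obstacle here, since the two substantive ingredients---superadditivity of $D_{\Meas}$ and subadditivity of $D$---are handled by the preceding lemmas; this statement is essentially their corollary. The only points demanding mild care are verifying finiteness (via $D \leq D_{\max}$ and the boundedness hypothesis) so that the Fekete arguments apply, and ensuring the polar assumption (A.4) is invoked for both $\{\sA_n\}$ and $\{\sB_n\}$, as the stability of the polar sets is precisely what drives the lower-bound direction. The resulting picture is that $\frac{1}{m} D_{\Meas}(\sA_m\|\sB_m)$ approaches $D_{\Meas}^\reg(\sA\|\sB)$ from below and $\frac{1}{m} D(\sA_m\|\sB_m)$ approaches $D^\reg(\sA\|\sB)$ from above, sandwiching the regularized relative entropy between the two computable quantities.
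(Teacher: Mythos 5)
Your proof is correct, and it differs from the paper's in one substantive step. Both arguments share the same skeleton: subadditivity of $D$ (Lemma~\ref{lem: generalized subadditivity}) with Fekete's lemma gives $D^\reg(\sA\|\sB) = \inf_{n} \frac{1}{n} D(\sA_n\|\sB_n) \leq \frac{1}{m} D(\sA_m\|\sB_m)$, and superadditivity of $D_{\Meas}$ (Lemma~\ref{lem: generalized supadditivity DM}, correctly invoked with (A.4) for both families) gives $\frac{1}{m} D_{\Meas}(\sA_m\|\sB_m) \leq D_{\Meas}^\reg(\sA\|\sB)$. Where you diverge is the middle link: the paper closes the chain with the \emph{equality} $D_{\Meas}^\reg(\sA\|\sB) = D^\reg(\sA\|\sB)$, obtained from the asymptotic equivalence in Lemma~\ref{lem: DM alpha DS alpha finite relation} at $\alpha = 1$, which rests on the pinching/twirling argument and hence on permutation invariance (A.2). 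You instead use only the one-sided pointwise ordering $D_{\Meas}(\rho\|\sigma) \leq D(\rho\|\sigma)$ from Lemma~\ref{thm: comparison of quantum divergence}, propagated through the infima and the limits to get $D_{\Meas}^\reg(\sA\|\sB) \leq D^\reg(\sA\|\sB)$, which is all the lemma's statement requires. Your route is more elementary for this particular claim, dispensing with the permutation-invariance machinery at this step; the paper's stronger equality is not wasted, however, since it is exactly what powers the convergence guarantee~\eqref{eq: generalized AEP finite estimate convergence} (the bounds sandwich the \emph{same} limit, so their gap controls the approximation error) and is reused in the proof of Theorem~\ref{thm: generalized AEP}. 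Your finiteness bookkeeping via $D_{\Meas} \leq D \leq D_{\max} \leq cn$ is also sound and matches the paper's hypotheses.
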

\end{shaded}
\begin{proof}
By the superadditivity of the measured relative entropy in Lemma~\ref{lem: generalized supadditivity DM} and the subadditivity of the Umegaki relative entropy in Lemma~\ref{lem: generalized subadditivity} (by choosing $\alpha = 1$), we have 
\begin{align}
    D_{\Meas}^{\reg}(\sA\|\sB) & = \sup_{m > 1} \frac{1}{m} D_{\Meas}(\sA_m\|\sB_m),\\
    D^{\reg}(\sA\|\sB) & = \inf_{m \geq 1} \frac{1}{m} \;\; D(\sA_m\|\sB_m).
\end{align}
This gives
\begin{align}
    \frac{1}{m} D_{\Meas}(\sA_m\|\sB_m) \leq  D_{\Meas}^{\reg}(\sA\|\sB) = D^{\reg}(\sA\|\sB) \leq \frac{1}{m} D(\sA_m\|\sB_m)
\end{align}
where the equality follows from the asymptotic equivalence in Eq.~\eqref{eq: DM alpha DS alpha regularization} by choosing $\alpha = 1$.
\end{proof}

In order to obtain the explicit error bound, we use the following lemma.
\begin{shaded}
\begin{lemma}
\label{eq: approx Dreg DMmalpha}
Let $\{\sA_n\}_{n\in\NN}$ and $\{\sB_n\}_{n\in\NN}$ be two sequences of sets satisfying Assumption~\ref{ass: steins lemma assumptions} and $\sA_n \subseteq \density(\cH^{\ox n})$, $\sB_n \subseteq \PSD(\cH^{\ox n})$. Let $m \geq 2$ and assume there is a constant $C > 0$ such that 
\begin{align}
\tag{$*$}
\label{eq: assumptionC}
     \text{For all } \alpha \in [1/2,1]: \; D_{\Petz,3/2}(\rho^{(\alpha)}_m \|\sigma^{(\alpha)}_m) \leq \frac{C}{4} m \quad \text{ and } \quad \text{for all } \sigma_m \in \sB_m : \; \log \tr(\sigma_m) \leq \frac{C}{4} m, 
\end{align}
where 
$\rho^{(\alpha)}_m \in \sA_m, \sigma^{(\alpha)}_m \in \sB_m$ satisfy $D_{\Petz,\alpha}(\rho_m^{(\alpha)} \| \sigma_m^{(\alpha)}) = D_{\Petz,\alpha}(\sA_m \| \sB_m)$. 

\vspace{0.2cm}
For $1 - \frac{1}{(2+C)m} < \alpha < 1$, we have
\begin{align}
\label{eq: approx Dreg from below explicit}
0 \leq D^{\reg}(\sA \| \sB) - \frac{1}{m} D_{\Meas,\alpha}(\sA_m \| \sB_m) \leq (1-\alpha) (2+C)^2 m + \frac{2(d^2 + d) \log (m+d)}{m},
\end{align}
and for $1 < \alpha < 1 + \frac{1}{(2+C)m}$, we have
\begin{align} 
\label{eq: approx Dreg from above explicit}
0 \leq \frac{1}{m} D_{\Sand, \alpha}(\sA_m \| \sB_m) - D^{\reg}(\sA \| \sB) \leq (\alpha - 1) (2+C)^2 m + \frac{2(d^2+d) \log(m+d)}{m}.
\end{align}
\end{lemma}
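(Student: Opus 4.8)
\emph{Overview and the two lower bounds.}
The plan is to obtain the two ``$0\le$'' inequalities directly from monotonicity in $\alpha$ combined with the (super/sub)additivity already in hand, and to reduce both upper bounds to a single second-order (variance) estimate on the window $|\alpha-1|<\tfrac{1}{(2+C)m}$. For $\alpha\in(1/2,1)$, superadditivity of the measured \Renyi divergence (Lemma~\ref{lem: generalized supadditivity DM alpha}) gives $\tfrac1m D_{\Meas,\alpha}(\sA_m\|\sB_m)\le D_{\Meas,\alpha}^{\reg}(\sA\|\sB)$, and since $D_{\Meas,\alpha}$ is monotone in $\alpha$ with $D_{\Meas}^{\reg}=D^{\reg}$ (Lemma~\ref{lem: main theorem computation}), this is exactly $D^{\reg}(\sA\|\sB)-\tfrac1m D_{\Meas,\alpha}(\sA_m\|\sB_m)\ge 0$. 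For $\alpha>1$ I would instead use $D_{\Sand,\alpha}\ge D$ pointwise together with $\tfrac1m D(\sA_m\|\sB_m)\ge D^{\reg}(\sA\|\sB)$ (subadditivity, Lemma~\ref{lem: generalized subadditivity}, and Lemma~\ref{lem: main theorem computation}) to conclude $\tfrac1m D_{\Sand,\alpha}(\sA_m\|\sB_m)\ge D^{\reg}(\sA\|\sB)$.

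\emph{Reduction of the upper bounds.}
Starting from $\tfrac1m D_{\Meas}(\sA_m\|\sB_m)\le D^{\reg}\le\tfrac1m D(\sA_m\|\sB_m)$ (Lemma~\ref{lem: main theorem computation}) and the measured-versus-sandwiched comparison of Lemma~\ref{lem: DM alpha DS alpha finite relation} (used both at the given $\alpha$ and at $\alpha=1$), I would peel off the spectral term $\tfrac1m 2(d^2+d)\log(m+d)$ and reduce each upper bound to controlling $\tfrac1m\big|D(\sA_m\|\sB_m)-D_{\Sand,\alpha}(\sA_m\|\sB_m)\big|\le|\alpha-1|(2+C)^2 m$. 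For $\alpha>1$ this is clean: bounding the minimum by the relative-entropy optimizer $(\rho^*,\sigma^*)$ and using the Araki--Lieb--Thirring ordering $D_{\Sand,\alpha}\le D_{\Petz,\alpha}$ gives $D_{\Sand,\alpha}(\sA_m\|\sB_m)-D(\sA_m\|\sB_m)\le D_{\Petz,\alpha}(\rho^*\|\sigma^*)-D(\rho^*\|\sigma^*)$, where $(\rho^*,\sigma^*)$ is precisely the Petz-$1$ optimizer controlled by \eqref{eq: assumptionC}. For $\alpha<1$ the quantity $D(\sA_m\|\sB_m)-D_{\Sand,\alpha}(\sA_m\|\sB_m)$ must be estimated at the minimizing pair of $D_{\Sand,\alpha}$, whose $D_{\Petz,3/2}$ and $\log\tr\sigma$ are $\le\tfrac{C}{4}m$ under \eqref{eq: assumptionC} (or its strengthening in Theorem~\ref{thm: generalized AEP}).

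\emph{The core second-order estimate.}
For a fixed pair $(\rho,\sigma)$ set $\Lambda(s):=\log\tr[\rho^{1+s}\sigma^{-s}]$, so that $D_{\Petz,1+s}(\rho\|\sigma)=\Lambda(s)/s$ with $\Lambda(0)=0$, $\Lambda'(0)=D(\rho\|\sigma)$, and $\Lambda$ convex. Taylor's theorem gives $\big|D(\rho\|\sigma)-D_{\Petz,1+s}(\rho\|\sigma)\big|=\tfrac12|s|\,\Lambda''(\xi)$ for some $\xi$ between $0$ and $s$, so everything reduces to bounding the variance $\Lambda''$ on $|s|\le\tfrac{1}{(2+C)m}\subseteq[-1,\tfrac12]$. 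The endpoints are exactly what the hypotheses control: $\Lambda(\tfrac12)=\tfrac12 D_{\Petz,3/2}(\rho\|\sigma)\le\tfrac{C}{8}m$ and $\Lambda(-1)=\log\tr[\Pi_\rho\sigma]\le\log\tr\sigma\le\tfrac{C}{4}m$. These bound the positive and negative tails of the log-likelihood operator $\log\rho-\log\sigma$ (up to entropy terms of order $m\log d$), and a Popoviciu-type estimate $\Lambda''\le\tfrac14(\text{range})^2$ then yields $\Lambda''=O((2+C)^2 m^2)$ uniformly on the window, hence $\tfrac1m\big|D-D_{\Petz,1+s}\big|\le|\alpha-1|(2+C)^2 m$. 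The same second-order expansion applies to the sandwiched divergence (the leading variance at $\alpha=1$ is common to all \Renyi families and the remainder is controlled by the same $D_{\Petz,3/2}$ and $\log\tr\sigma$), which is what one evaluates at the sandwiched minimizer in the $\alpha<1$ case; the endpoint bound $\tfrac{1}{(2+C)m}$ on $|\alpha-1|$ is chosen precisely so that the whole first term is of constant order.

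\emph{Main obstacle.}
The difficult part is the uniform variance bound together with the optimizer bookkeeping. Turning ``endpoints of order $Cm$'' into ``$\Lambda''=O((2+C)^2m^2)$'' rigorously is a statement about a tilted, non-commutative variance of $\log\rho-\log\sigma$: one must control the positive tail through $D_{\Petz,3/2}$ (and $D_{\max}$) and the negative tail through $\log\tr\sigma$, absorb the entropy corrections of order $m\log d$ into constants depending on $d$, and—crucially—do so uniformly for every $s$ in the window rather than only at $s=0$. The second subtlety is that, because $D_{\Sand,\alpha}\le D_{\Petz,\alpha}$ on both sides of $\alpha=1$, the $\alpha<1$ gap $D-D_{\Sand,\alpha}$ is the \emph{larger} one and cannot be bounded through a Petz optimizer; one is forced onto the sandwiched minimizer, so the proof must verify that \eqref{eq: assumptionC} (or the stronger all-pairs condition $D_{\max}(\rho_n\|\sigma_n)\le Cn/4$ of Theorem~\ref{thm: generalized AEP}) indeed supplies the required $D_{\Petz,3/2}$ and $\log\tr\sigma$ bounds at that pair.
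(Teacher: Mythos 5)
Your overall architecture matches the paper's: the two ``$0\le$'' bounds from monotonicity plus super/subadditivity, the peeling of the spectral term $\tfrac{2(d^2+d)\log(m+d)}{m}$ via Lemma~\ref{lem: DM alpha DS alpha finite relation} and Lemma~\ref{lem: main theorem computation}, and, for $\alpha>1$, the evaluation of a continuity estimate at the relative-entropy (Petz-$1$) optimizer $(\rho_m^{(1)},\sigma_m^{(1)})$, which is exactly what the paper does through Lemma~\ref{lem: continuity alpha = 1 from above}. The genuine gap is in your $\alpha<1$ case. You propose to estimate $D(\sA_m\|\sB_m)-D_{\Sand,\alpha}(\sA_m\|\sB_m)$ at the minimizing pair of $D_{\Sand,\alpha}$, and you yourself flag that one must then ``verify'' that \eqref{eq: assumptionC} supplies the moment bounds at that pair. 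It does not: hypothesis \eqref{eq: assumptionC} controls $D_{\Petz,3/2}$ and $\log\tr\sigma$ only at the \emph{Petz}-$\alpha$ optimizers $(\rho_m^{(\alpha)},\sigma_m^{(\alpha)})$ for $\alpha\in[1/2,1]$, and says nothing about sandwiched minimizers, so under \eqref{eq: assumptionC} alone this step cannot close. Your claim that ``one is forced onto the sandwiched minimizer'' is precisely what the paper circumvents: it uses the pointwise comparison $D_{\Sand,\alpha}(\rho\|\sigma)\ge \alpha D_{\Petz,\alpha}(\rho\|\sigma)+(1-\alpha)\left(\log\tr\rho-\log\tr\sigma\right)$ for $\alpha\in(0,1)$ (Eq.~\eqref{eq: relation sand petz}, from~\cite{iten2016pretty}), so that after taking the infimum the bound is evaluated at the Petz-$\alpha$ optimizer — the very pair \eqref{eq: assumptionC} was tailored to — and then applies Lemma~\ref{lem: continuity alpha = 1 from below} to that pair. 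Your parenthetical fallback to the all-pairs condition $D_{\max}(\rho_n\|\sigma_n)\le Cn/4$ would repair your route (indeed $\rho\le 2^{D_{\max}}\sigma$ gives $D_{\Petz,3/2}\le D_{\max}$ via operator monotonicity), but it proves a strictly weaker statement than the lemma, whose whole point — see the footnote to Theorem~\ref{thm: generalized AEP} — is that the weaker condition \eqref{eq: assumptionC} suffices.

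A secondary incompleteness: your ``core second-order estimate'' is sketched, not proved. Bounding $\Lambda''(\xi)$ by a Popoviciu-type range argument does not work as stated, since the log-likelihood $\log\rho-\log\sigma$ has no a priori bounded range expressible in terms of $D_{\Petz,3/2}$ and $\log\tr\sigma$; one needs the exponential-moment ($\cosh$) argument of~\cite[Lemma 8]{tomamichel2009fully}, transferred to the quantum case via Nussbaum--Szko{\l}a distributions. The paper has already packaged exactly this as Lemmas~\ref{lem: continuity alpha = 1 from below} and~\ref{lem: continuity alpha = 1 from above} (including the factor $D(\rho\|\sigma)+\log\tr\sigma$ correction on the sandwiched side for $\alpha<1$), so rather than re-deriving a uniform tilted-variance bound you should invoke those lemmas; with that substitution and the Petz-routing fix above, your outline becomes the paper's proof.
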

\end{shaded}
\begin{proof}
We start by establishing~\eqref{eq: approx Dreg from above explicit}. For that, we use the continuity of $D_{\Sand, \alpha}$ as $\alpha \to 1^+$ as stated in Lemma~\ref{lem: continuity alpha = 1 from above} (Appendix~\ref{sec: app Useful properties}). More specifically,
\begin{align}
    \frac{1}{m} D_{\Sand,\alpha}(\sA_m \| \sB_m)
    &= \frac{1}{m} \inf_{\substack{\rho_m \in \sA_m\\ \sigma_m \in \sB_m}} D_{\Sand, \alpha}(\rho_m \| \sigma_m) \\
    &\leq \frac{1}{m} \inf_{\substack{\rho_m \in \sA_m\\ \sigma_m \in \sB_m}}
    D(\rho_m \| \sigma_m) + (\alpha - 1) (\log \eta_m)^2 \\
    &\leq \frac{1}{m}
    D(\rho_m^{(1)} \| \sigma_m^{(1)}) + (\alpha - 1) (\log \eta^{(1)}_m)^2
\end{align}
for $\alpha \in (1, 1+1/\log \eta_m)$ where $\sqrt{\eta_m} = \max(4, 2^{2D_{\Petz,3/2}(\rho_m \| \sigma_m)} + 2^{-2D_{\Petz, 1/2}(\rho_m \| \sigma_m)} + 1)$ and $\sqrt{\eta_m^{(1)}} =  \max(4, 2^{2D_{\Petz,3/2}(\rho^{(1)}_m \| \sigma^{(1)}_m)} + 2^{-2D_{\Petz, 1/2}(\rho^{(1)}_m \| \sigma^{(1)}_m)} + 1)$. The assumption~\eqref{eq: assumptionC} gives $D_{\Petz,3/2}(\rho^{(1)}_m \| \sigma^{(1)}_m) \leq \frac{C}{4}m$ and $-D_{\Petz, 1/2}(\rho^{(1)}_m \| \sigma^{(1)}_m) \leq \log \tr(\sigma^{(1)}_m) \leq \frac{C}{4} m$. As a result, $\eta^{(1)}_m \leq (3 \cdot 2^{Cm/2})^2$, which implies $\log \eta^{(1)}_m \leq 2 \log 3 + Cm \leq (2+C) m$. Finally, for $\alpha \in \left(1, 1+\frac{1}{(2+C)m}\right)$, we have 
\begin{align}
    \frac{1}{m} D_{\Sand,\alpha}(\sA_m \| \sB_m)
    &\leq \frac{1}{m} D(\sA_m \| \sB_m) + (\alpha - 1) (2+C)^2 m \\
    &\leq \frac{1}{m} D_{\Meas}(\sA \| \sB) + \frac{1}{m}2(d^2+d) \log(m+d) + (\alpha - 1) (2+C)^2m \\
    &\leq D^{\reg}(\sA \| \sB) + \frac{1}{m}2(d^2+d) \log(m+d) + (\alpha - 1) (2+C)^2 m
\end{align}
using Lemma~\ref{lem: DM alpha DS alpha finite relation} then Lemma~\ref{lem: main theorem computation}.

Now, we move to the approximation from below in~\eqref{eq: approx Dreg from below explicit}. We have
\begin{align}
    D^{\reg}(\sA \| \sB) - \frac{1}{m} D_{\Meas,\alpha}(\sA_m \| \sB_m)  
    &\leq \frac{1}{m} D(\sA_m \| \sB_m) - \frac{1}{m} D_{\Sand, \alpha}(\sA_m \| \sB_m) \\
    &\quad + \frac{1}{m} D_{\Sand, \alpha}(\sA_m \| \sB_m) - \frac{1}{m} D_{\Meas,\alpha}(\sA_m \| \sB_m).
\end{align}
For the term $\frac{1}{m} D_{\Sand, \alpha}(\sA_m \| \sB_m) - \frac{1}{m} D_{\Meas,\alpha}(\sA_m \| \sB_m)$, we use Lemma~\ref{lem: DM alpha DS alpha finite relation} and obtain the bound
\begin{align}
    \frac{1}{m} D_{\Sand, \alpha}(\sA_m \| \sB_m) - \frac{1}{m} D_{\Meas,\alpha}(\sA_m \| \sB_m) \leq 2\frac{(d^2+d)\log (m + d)}{m}.
\end{align}
For the term $\frac{1}{m} D(\sA_m \| \sB_m) - \frac{1}{m} D_{\Sand, \alpha}(\sA_m \| \sB_m)$, we use the continuity bound for $\alpha \to 1^-$ in Lemma~\ref{lem: continuity alpha = 1 from below} (Appendix~\ref{sec: app Useful properties}). For $\alpha \in (1/2,1)$, we use the relation between the sandwiched and Petz R\'enyi divergences
\begin{align}
    D_{\Sand, \alpha}(\sA_m \| \sB_m)
    &\geq \inf_{\rho_m \in \sA_m, \sigma_m \in \sB_m} \alpha D_{\Petz, \alpha}(\rho_m \| \sigma_m) - (1-\alpha) \log \tr \sigma_m \\
    &\geq D_{\Petz, \alpha}(\rho_m^{(\alpha)} \| \sigma^{(\alpha)}_m)) - (1-\alpha)(D_{\Petz, \alpha}(\rho_m^{(\alpha)} \| \sigma^{(\alpha)}_m)) + \frac{C}{4}m) \\
    &\geq D(\rho_m^{(\alpha)} \| \sigma^{(\alpha)}_m) - (1-\alpha) \left((\log \eta^{(\alpha)}_m)^2 + D(\rho^{(\alpha)}_m \| \sigma^{(\alpha)}_m) + \frac{C}{4}m \right),
\end{align}
provided $\alpha \in (1-1/\log \eta^{(\alpha)}_m, 1)$ with $\sqrt{\eta^{(\alpha)}_m} = \max(4, 1+2^{D_{\Petz,3/2}(\rho^{(\alpha)}_m \| \sigma^{(\alpha)}_m)} + 2^{-D_{\Petz,1/2}(\rho^{(\alpha)}_m \| \sigma^{(\alpha)}_m)})$. Again we have $\log \eta^{(\alpha)}_m \leq 2\log 3 + Cm$. As a result, we get
\begin{align}
    \frac{1}{m} D_{\Sand, \alpha}(\sA_m \| \sB_m) 
    &\geq \frac{1}{m} D(\sA_m \| \sB_m) - (1-\alpha) \left((2\log 3+Cm)^2/m  + C/2 \right) \\
    &\geq \frac{1}{m} D(\sA_m \| \sB_m) - (1-\alpha) (2+C)^2 m,
\end{align}
for $m \geq 2$ and where we used Lemma~\ref{lem: main theorem computation}. 
\end{proof}

Now we are ready to prove the generalized AEP in Theorem~\ref{thm: generalized AEP}.

\subsection{Proof of the generalized AEP}

We prove the results first for $D_{\Hypo, \ve}$. The results for $D_{\max,\ve}$ will follow easily at the end. 
By using the relation of hypothesis testing relative entropy and the sandwiched \Renyi divergence (see Lemma~\ref{lem: DH petz sandwiched}), it holds that for any $\ve \in (0,1)$ and $\alpha > 1$,
\begin{align}
    D_{\Hypo,\ve}(\rho\|\sigma) \leq D_{\Sand,\alpha}(\rho\|\sigma) + \frac{\alpha}{\alpha-1} \log \frac{1}{1-\ve}.
\end{align}
This implies 
\begin{align}
\label{eq: hypo to sand in proof AEP}
    \frac{1}{n} D_{\Hypo,\ve}(\sA_n\|\sB_n) \leq \frac{1}{n} D_{\Sand,\alpha}(\sA_n\|\sB_n) + \frac{1}{n}  \frac{\alpha}{\alpha-1} \log \frac{1}{1-\ve}.
\end{align}
We start by establishing the upper bound on the asymptotic statement~\eqref{eq: asymptotic AEP} for $D_{\Hypo, \ve}$. Taking $n$ go to infinity, we have
\begin{align}
    \limsup_{n\to \infty} \frac{1}{n} D_{\Hypo,\ve}(\sA_n\|\sB_n) \leq \limsup_{n\to \infty}\frac{1}{n} D_{\Sand,\alpha}(\sA_n\|\sB_n) = D_{\Sand,\alpha}^\reg(\sA\|\sB).
\end{align}
As the above relation holds for any $\alpha >1$, we can take the infimum of $\alpha$ on the right-hand side and get
\begin{align}
\limsup_{n\to \infty} \frac{1}{n} D_{\Hypo,\ve}(\sA_n\|\sB_n) & \leq \inf_{\alpha > 1} D_{\Sand,\alpha}^\reg(\sA\|\sB) = D^\reg(\sA\|\sB),\label{eq: proof generalized AEP tmp1}
\end{align}
where the equality follows from the continuity in Lemma~\ref{lem: composite sandwiched regularized continuitiy}. 

We now establish the explicit upper bound~\eqref{eq: hyp testing rel entropy explicit}. We start from~\eqref{eq: hypo to sand in proof AEP} and first use the subadditivity of $D_{\Sand, \alpha}$ (Lemma~\ref{lem: generalized subadditivity}) and write for some $m \geq 2$ to be chosen later and $n = m \floor{n/m} + r$ with $0\leq r < m$:
\begin{align}
    D_{\Hypo,\ve}(\sA_n\|\sB_n) 
    &\leq \floor{n/m} D_{\Sand,\alpha}(\sA_m \| \sB_m) + D_{\Sand, \alpha}(\sA_r \| \sB_r) + \frac{\alpha}{\alpha - 1} \log\frac{1}{1-\ve} \\
    &\leq \floor{n/m} D_{\Sand,\alpha}(\sA_m \| \sB_m) + C m + \frac{\alpha}{\alpha - 1} \log\frac{1}{1-\ve}.
    \label{eq: AEP proof sandwiched floor}
\end{align}
We now use Lemma~\ref{eq: approx Dreg DMmalpha} for $m \geq 2$ and $\alpha \leq 1 + \frac{1}{(2+C) m}$ to get
\begin{align}
\frac{1}{m} D_{\Sand,\alpha}(\sA_m \| \sB_m) \leq D^{\reg}(\sA \| \sB) + (\alpha - 1)(2+C)^2 m + \frac{2(d^2+d)\log(m+d)}{m}.
\end{align}
Let us now choose $\alpha - 1 = \frac{8d^2 \log m}{(2+C)^2 m^2}$ and assume that $m \geq \max\left(d,\left(\frac{16 d^2}{2+C}\right)^2\right)$ so that the condition $\alpha \leq 1 + \frac{1}{(2+C)m}$ is satisfied and $\log(m+d) \leq 2 \log m$. With this choice, we have
\begin{align}
D_{\Sand,\alpha}(\sA_m \| \sB_m) \leq m D^{\reg}(\sA \| \sB) + 16 d^2 \log m.
\end{align}
Getting back to~\eqref{eq: AEP proof sandwiched floor}, we get 
\begin{align}
    D_{\Hypo,\ve}(\sA_n\|\sB_n) 
    &\leq \floor{n/m} m D^{\reg}(\sA \| \sB) + \floor{n/m} 16d^2 \log m + C m + \frac{2 (2+C)^2 m^2}{8d^2 \log m} \log\frac{1}{1-\ve}.
\end{align}
Note that if $D^{\reg}(\sA \| \sB) \geq 0$, we have
\begin{align}
\floor{n/m} m D^{\reg}(\sA \| \sB) \leq n D^{\reg}(\sA \| \sB) 
\end{align}
and otherwise,
\begin{align}
\floor{n/m} m D^{\reg}(\sA \| \sB) \leq n D^{\reg}(\sA \| \sB) - m D^{\reg}(\sA \| \sB) \leq n D^{\reg}(\sA \| \sB) + C m,
\end{align}
using the condition $\frac{1}{n} \log \tr(\sigma_n) \leq C$ for any $n \geq 1$ and $\sigma_n \in \sB_n$.
As a result,
\begin{align}
    D_{\Hypo,\ve}(\sA_n\|\sB_n) 
    &\leq n D^{\reg}(\sA \| \sB) + \frac{n 16d^2}{m} \log m + 2 C m + \frac{2 (2+C)^2 m^2}{8d^2 \log m} \log\frac{1}{1-\ve}.
\end{align}
We now choose $m = \left(\frac{64 d^4 n }{(2+C)^2 \log \frac{1}{1-\ve} }\right)^{1/3}$ and get 
\begin{align}
    D_{\Hypo,\ve}(\sA_n\|\sB_n) 
    &\leq n D^{\reg}(\sA \| \sB) + C' n^{2/3} \log n \left(\log \frac{1}{1-\ve}\right)^{1/3}.
\end{align}
for some constant $C'$ that only depends on $d$ and $C$. 

Now we prove the other direction starting first with the asymptotic statement~\eqref{eq: asymptotic AEP}. Using the relation of hypothesis testing relative entropy and the Petz \Renyi divergence (see Lemma~\ref{lem: DH petz sandwiched}), it holds that for any $\ve \in (0,1)$ and $\alpha \in (0,1)$,
\begin{align}
D_{\Hypo,\ve}(\rho\|\sigma) & \geq D_{\Petz,\alpha}(\rho\|\sigma) + \frac{\alpha}{\alpha - 1} \log \frac{1}{\ve}\geq D_{\Meas,\alpha}(\rho\|\sigma)  + \frac{\alpha}{\alpha - 1} \log \frac{1}{\ve},
\end{align}
where the second inequality follows from the data-processing inequality of the Petz \Renyi divergence. Applying this to two sets of quantum states, we get
\begin{align}
\label{eq: AEP proof lower bound measured}
   \frac{1}{n} D_{\Hypo,\ve}(\sA_n\|\sB_n) \geq \frac{1}{n} D_{\Meas,\alpha}(\sA_n\|\sB_n)  +  \frac{1}{n}  \frac{\alpha}{\alpha - 1} \log \frac{1}{\ve}.
\end{align}
Taking $n$ to infinity, we get
\begin{align}
\liminf_{n\to \infty}\frac{1}{n} D_{\Hypo,\ve}(\sA_n\|\sB_n) \geq \liminf_{n\to \infty} \frac{1}{n} D_{\Meas,\alpha}(\sA_n\|\sB_n) = D_{\Meas,\alpha}^\reg(\sA\|\sB). 
\end{align}
As the above relation holds for any $\alpha \in (0,1)$, we take the supremum of $\alpha$ and get
\begin{align}
\liminf_{n\to \infty}\frac{1}{n} D_{\Hypo,\ve}(\sA_n\|\sB_n) & \geq \sup_{\alpha \in (0,1)} D_{\Meas,\alpha}^\reg(\sA\|\sB) = D_{\Meas}^\reg(\sA\|\sB) = D^\reg(\sA\|\sB)
 \label{eq: proof generalized AEP tmp2}
\end{align}
where the first equality follows from the continuity in Lemma~\ref{lem: composite sandwiched regularized continuitiy} and the second equality follows from the asymptotic equivalence in Lemma~\ref{lem: DM alpha DS alpha finite relation}. Putting Eqs.~\eqref{eq: proof generalized AEP tmp1} and~\eqref{eq: proof generalized AEP tmp2} together, we have the asserted result in Eq.~\eqref{eq: asymptotic AEP} about the hypothesis testing relative entropy.

We now prove the explicit lower bound statement in~\eqref{eq: hyp testing rel entropy explicit}. Starting from~\eqref{eq: AEP proof lower bound measured} and using the superadditivity  statement in Lemma~\ref{lem: generalized supadditivity DM} with $n = m \floor{n/m} + r$, we have
\begin{align}
\label{eq: AEP proof measured floor}
D_{\Hypo,\ve}(\sA_n\|\sB_n) 
&\geq \floor{n/m} D_{\Meas,\alpha}(\sA_m \| \sB_m) + D_{\Meas,\alpha}(\sA_r \| \sB_r) + \frac{\alpha}{\alpha - 1}\log\frac{1}{\ve} \\
&\geq \floor{n/m} D_{\Meas,\alpha}(\sA_m \| \sB_m) - Cm + \frac{\alpha}{\alpha - 1}\log\frac{1}{\ve}
\end{align}
where we used the fact that $D_{\Meas,\alpha}(\sA_r \| \sB_r) \geq D_{\Petz,0}(\sA_r \| \sB_r) \geq \inf_{\sigma_r \in \sB_r} -\log \tr \sigma_r \geq -Cr$ as per Lemma~\ref{thm: comparison of quantum divergence}.

We now apply Lemma~\ref{eq: approx Dreg DMmalpha} for $m \geq 2$ and $\alpha \geq 1 - \frac{1}{(2+C) m}$ to get
\begin{align}
\frac{1}{m} D_{\Meas,\alpha}(\sA_m \| \sB_m) \geq D^{\reg}(\sA \| \sB) - (1 - \alpha)(2+C)^2 m + \frac{2(d^2+d)\log(m+d)}{m}.
\end{align}
Let us now choose $1-\alpha = \frac{8d^2 \log m}{(2+C)^2 m^2}$ and assume that $m \geq \max\left(d,\left(\frac{16 d^2}{2+C}\right)^2\right)$ so that the condition $\alpha \geq 1 - \frac{1}{(2+C)m}$ is satisfied and $\log(m+d) \leq 2 \log m$. With this choice, we have
\begin{align}
D_{\Meas,\alpha}(\sA_m \| \sB_m) \geq m D^{\reg}(\sA \| \sB) - 16 d^2 \log m.
\end{align}
Getting back to~\eqref{eq: AEP proof measured floor}, we get 
\begin{align}
    D_{\Hypo,\ve}(\sA_n\|\sB_n) 
    &\geq \floor{n/m} m D^{\reg}(\sA \| \sB) - \floor{n/m} 16d^2 \log m + C m - \frac{2 (2+C)^2 m^2}{8d^2 \log m} \log\frac{1}{\ve}.
\end{align}
Note that if $D^{\reg}(\sA \| \sB) \geq 0$, we have
\begin{align}
\floor{n/m} m D^{\reg}(\sA \| \sB) \geq n D^{\reg}(\sA \| \sB) - m D^{\reg}(\sA \| \sB) \geq  n D^{\reg}(\sA \| \sB) - m C
\end{align}
because for all $n \geq 1$, $D(\sA_n \| \sB_n) \geq D_{\Petz,0}(\sA_n \| \sB_n) \geq \inf_{\sigma_n \in \sB_n} -\log \tr(\sigma_n) \geq -Cn$.
Otherwise, if $D^{\reg}(\sA \| \sB) < 0$, then
\begin{align}
\floor{n/m} m D^{\reg}(\sA \| \sB) \geq n D^{\reg}(\sA \| \sB)
\end{align}
As a result,
\begin{align}
    D_{\Hypo,\ve}(\sA_n\|\sB_n) 
    &\geq n D^{\reg}(\sA \| \sB) - \frac{n 16d^2}{m} \log m - 2 C m - \frac{2 (2+C)^2 m^2}{8d^2 \log m} \log\frac{1}{\ve}.
\end{align}
We now choose $m = \left(\frac{64 d^4 n }{(2+C)^2 \log \frac{1}{\ve} }\right)^{1/3}$ and get 
\begin{align}
    D_{\Hypo,\ve}(\sA_n\|\sB_n) 
    &\geq n D^{\reg}(\sA \| \sB) - C' n^{2/3} \log n \left(\log \frac{1}{\ve}\right)^{1/3}.
\end{align}
for some constant $C'$ that only depends on $d$ and $C$. 

The analogous statements for $D_{\max, \ve}$ follow directly from known relations between $D_{\max, \ve}$ and $D_{\Hypo, \ve}$. In fact, we have for any $\ve \in (0,1)$ and $\ve' \in (0,1-\ve)$,
\begin{align}
D_{\Hypo,\ve'}(\rho\|\sigma)+\log\left(1-\ve-\ve'\right) \leq D_{\max,\ve}(\rho\|\sigma)\leq D_{\Hypo,1-\frac{1}{2}\ve^2}(\rho\|\sigma)+\log\left(\frac{2}{\ve^2}\right).
\end{align}
The upper bound is from~\cite[Proposition 4.1]{dupuis2014generalized} and the lower bound from~\cite[Theorem 11]{datta2013smooth}. Taking the infimum over $\rho \in \sA_n$ and $\sigma_n \in \sB_n$, and then the limit as $n \to \infty$, we get the $D_{\max,\ve}$ statement in~\eqref{eq: asymptotic AEP}. Moreover, combining this bound with $\ve' = 1-2\ve$ and~\eqref{eq: hyp testing rel entropy explicit} gives~\eqref{eq: dmax explicit}.

\section{Generalized quantum Stein's lemma beyond i.i.d. and singleton}
\label{sec: Application 1: another generalized quantum Stein's lemma}

Given many i.i.d. copies of a quantum system described by either the state $\rho$ or $\sigma$ (referred to as the null and alternative hypotheses, respectively), what is the optimal measurement to determine the true state? In asymmetric hypothesis testing, the objective is to minimize the probability of mistakenly identifying $\rho$ as $\sigma$, while keeping the probability of identifying $\sigma$ as $\rho$ below a small, fixed threshold. The well-known quantum Stein’s lemma~\cite{hiai1991proper,Ogawa2000} establishes that the asymptotic exponential rate at which this error probability tends to zero is given by the quantum relative entropy between $\rho$ and $\sigma$. This result was later extended to distinguish between an i.i.d. state and a set of quantum states by Brandão and Plenio~\cite{Brand_o_2010}. However, a gap in the original proof was discovered~\cite{fang2021ultimate,berta2023gap,berta2024tangled}, which was recently resolved by~\cite{hayashi2024generalized}, with an alternative proof provided by~\cite{lami2024solutiongeneralisedquantumsteins}.

In this section, we propose a new generalization of the quantum Stein's lemma that distinguishes between two sets of quantum states beyond the i.i.d. framework. Our results allows both a composite null hypothesis and composite alternate hypothesis.

\subsection{Operational setting: quantum hypothesis testing between two sets of states}

Let us first introduce the operational setting for quantum hypothesis testing between two sets of quantum states. In this setting, a tester draws samples from two sets of quantum states, $\sA_n$ and $\sB_n$, and performs measurements to determine which set the sample belongs to. The null hypothesis assumes that the sample comes from $\sA_n$, while the alternative hypothesis assumes it is drawn from $\sB_n$. An illustration of this task is provided in Figure~\ref{fig: hypothesis testing between two sets}. Such a setting provides a very general framework. For instance, it includes the standard quantum Stein's lemma when $\sA_n$ and $\sB_n$ are two singletons of i.i.d. states~\cite{hiai1991proper,Ogawa2000}. It also encompasses the generalized quantum Stein's lemma~\cite{hayashi2024generalized,lami2024solutiongeneralisedquantumsteins}, where the null hypothesis $\sA_n$ is a singleton of i.i.d. states and $\sB_n$ is a set of quantum states. Additionally, it covers the composite quantum Stein's lemma setting, as explored in~\cite{berta2021composite}, which deals with mixtures of i.i.d. states in both $\sA_n$ and $\sB_n$. This framework can further accommodate the correlated states setting, as studied in~\cite{hiai2008error,mosonyi2015two}, where $\sA_n$ and $\sB_n$ are particular sets of correlated states on a finite spin chain.

\begin{figure}[H]
    \centering
    \includegraphics[width=12cm]{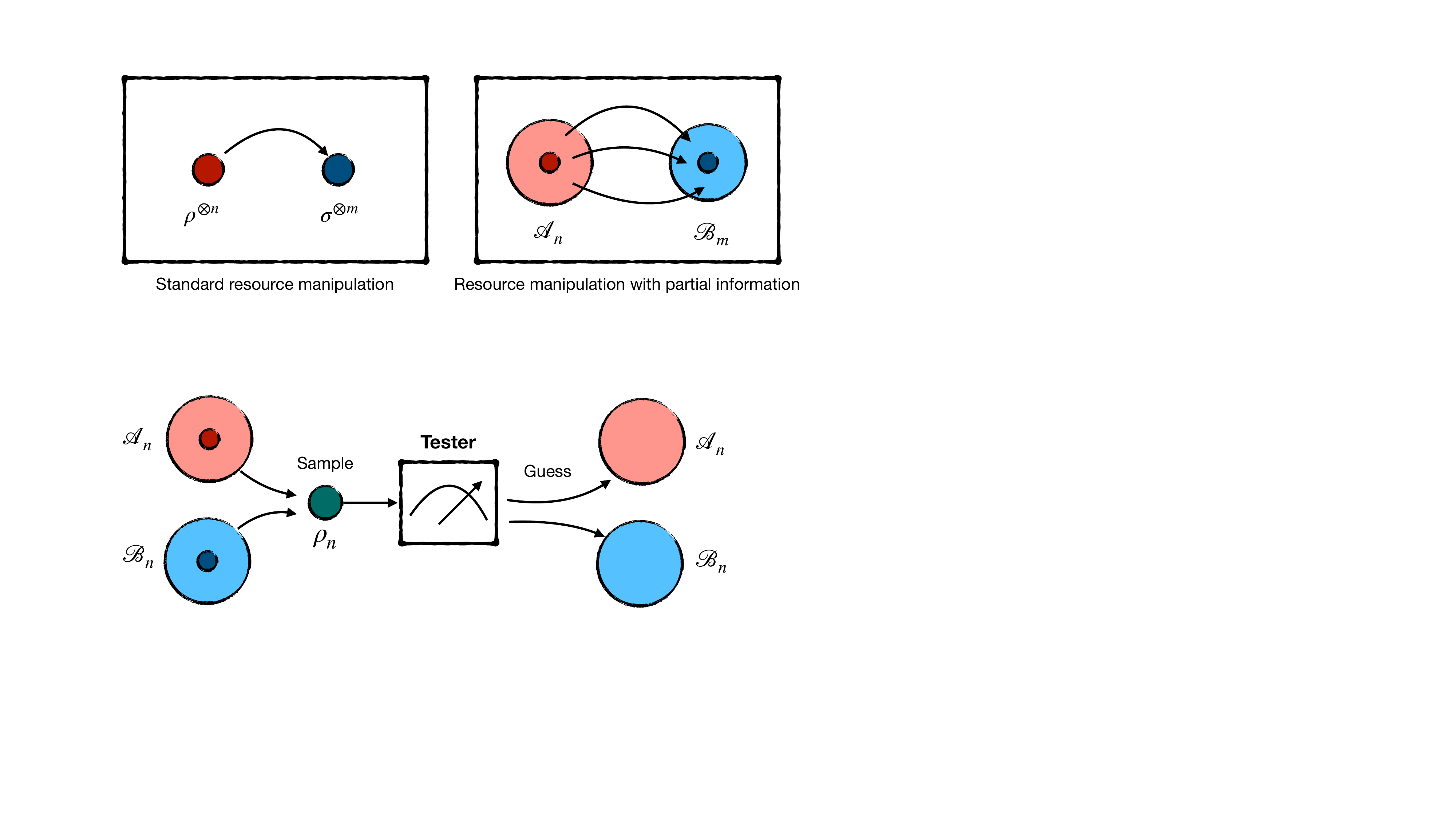}  
    \caption{\small An illustration of the quantum hypothesis testing between two sets of quantum states $\sA_n$ and $\sB_n$. A tester draws samples from the sets and performs quantum measurement to determine which class the sample belongs to. The null hypothesis is that the sample is from $\sA_n$, which typically represents the bad case, and the alternative hypothesis is that the sample is from $\sB_n$.}
    \label{fig: hypothesis testing between two sets}
\end{figure}

As in standard hypothesis testing, the tester will make two types of errors:

\begin{itemize}
    \item \textbf{Type-I error}: The sample is from $\sA_n$, but the tester incorrectly classifies it as from $\sB_n$,
    \item \textbf{Type-II error}: The sample is from $\sB_n$, but the tester incorrectly classifies it as from $\sA_n$.
\end{itemize}

In general, we take the null hypothesis as the more challenging case and aim to control the type-I error rate within a constant threshold. Under this constraint, the goal is to minimize the type-II error as much as possible. To distinguish between two sets $\sA_n$ and $\sB_n$, we use a quantum measurement $\{M_n, I-M_n\}$. Since we aim to control the discrimination errors for any state within the given sets, regardless of which one is drawn, the type-I error is defined by
\begin{align}
    \alpha(\sA_n, M_n): = \sup_{\rho_n \in \sA_n} \tr[\rho_n (I-M_n)],
\end{align}
and the type-II error is defined by
\begin{align}
    \beta(\sB_n, M_n): = \sup_{\sigma_n \in \sB_n} \tr[\sigma_n M_n].
\end{align}
The asymmetric setting then seeks to determine the optimal exponent at which the type-II error probability decays, known as the \emph{Stein's exponent}, while keeping the type-I error within a fixed threshold $\ve$. Specifically, the goal is to evaluate:
\begin{align}\label{eq: operational quantity in discrimination}
\beta_\ve(\sA_n\|\sB_n) := \inf_{0\leq M_n\leq I}\left\{\beta(\sB_n, M_n): \alpha(\sA_n, M_n) \leq \ve\right\}.
\end{align}

\subsection{Generalized quantum Stein's lemma}

Before presenting our generalized quantum Stein's lemma, we first show that the operational quantity $\beta_\ve(\sA_n\|\sB_n)$ in Eq.~\eqref{eq: operational quantity in discrimination} is given by the hypothesis testing relative entropy between two sets of quantum states $\sA_n$ and $\sB_n$, as mathematically defined in Definition~\ref{def: divergence between two sets}.

\begin{shaded}
\begin{lemma}\label{lem: hypothesis operational and DH}
Let $\sA \subseteq \density$ and $\sB \subseteq \PSD$ be two convex sets. For any $\ve \in (0,1)$, it holds that 
\begin{align}
-\log \beta_\ve(\sA\|\sB) = D_{\Hypo,\ve}(\sA\|\sB).
\end{align}
\end{lemma}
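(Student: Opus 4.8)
Since $-\log$ is strictly decreasing and $D_{\Hypo,\ve}(\rho\|\sigma)=-\log\beta_\ve(\rho\|\sigma)$, Definition~\ref{def: divergence between two sets} applied to the divergence $D_{\Hypo,\ve}$ gives
\begin{align*}
D_{\Hypo,\ve}(\sA\|\sB)=\inf_{\rho\in\sA,\,\sigma\in\sB}\bigl(-\log\beta_\ve(\rho\|\sigma)\bigr)=-\log\ \sup_{\rho\in\sA,\,\sigma\in\sB}\beta_\ve(\rho\|\sigma).
\end{align*}
Hence it suffices to establish the purely operational identity $\beta_\ve(\sA\|\sB)=\sup_{\rho\in\sA,\,\sigma\in\sB}\beta_\ve(\rho\|\sigma)$. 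The inequality ``$\geq$'' is elementary: any $M$ feasible for the composite test satisfies $\tr[\rho(I-M)]\le\alpha(\sA,M)\le\ve$ for \emph{every} $\rho\in\sA$, so $M$ is feasible for each single-state test, giving $\sup_{\sigma}\tr[\sigma M]\ge\tr[\sigma M]\ge\beta_\ve(\rho\|\sigma)$ for all $\rho,\sigma$; taking the supremum over $(\rho,\sigma)$ and then the infimum over $M$ yields $\beta_\ve(\sA\|\sB)\ge\sup_{\rho,\sigma}\beta_\ve(\rho\|\sigma)$. The reverse inequality (and thus equality) I would obtain by a minimax argument.

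\textbf{Minimax derivation of ``$\leq$''.} Write $\beta_\ve(\sA\|\sB)=\inf_{0\le M\le I}\{\sup_{\sigma\in\sB}\tr[\sigma M]:\sup_{\rho\in\sA}\tr[\rho(I-M)]\le\ve\}$. The objective $M\mapsto\sup_{\sigma}\tr[\sigma M]$ is convex, the single constraint $\sup_{\rho}\tr[\rho(I-M)]-\ve\le0$ is convex, the domain $\{0\le M\le I\}$ is compact convex, and $M=I$ is strictly feasible (its type-I error is $0<\ve$), so Slater's condition holds. Introducing a multiplier $\mu\ge0$ and invoking Lagrangian strong duality, then using that for fixed $M$ the two independent suprema add,
\begin{align*}
\beta_\ve(\sA\|\sB)=\sup_{\mu\ge0}\ \inf_{0\le M\le I}\ \sup_{\rho\in\sA,\,\sigma\in\sB}\Bigl(\tr[\sigma M]+\mu\tr[\rho(I-M)]-\mu\ve\Bigr).
\end{align*}
The integrand is affine in $M$ and affine in $(\rho,\sigma)$; the minimization domain $\{0\le M\le I\}$ is compact convex and $\sA\times\sB$ is convex, so Sion's minimax theorem~\cite[Corollary 3.3]{Sion1958} (in the form requiring only the $M$-domain to be compact) lets me exchange $\inf_M$ and $\sup_{\rho,\sigma}$. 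Using $\tr\rho=1$ to rewrite the bracket as $\tr[(\sigma-\mu\rho)M]+\mu(1-\ve)$ and reordering the suprema, this becomes
\begin{align*}
\beta_\ve(\sA\|\sB)=\sup_{\rho\in\sA,\,\sigma\in\sB}\ \sup_{\mu\ge0}\ \inf_{0\le M\le I}\Bigl(\tr[(\sigma-\mu\rho)M]+\mu(1-\ve)\Bigr).
\end{align*}
For each fixed $(\rho,\sigma)$ the inner $\sup_{\mu\ge0}\inf_M$ is exactly the Lagrange dual of the single-state problem $\beta_\ve(\rho\|\sigma)=\inf_{0\le M\le I}\{\tr[\sigma M]:\tr[\rho(I-M)]\le\ve\}$, which again has $M=I$ strictly feasible and hence equals $\beta_\ve(\rho\|\sigma)$ by strong duality. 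This gives the operational identity, and applying $-\log$ completes the proof.

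\textbf{Anticipated main obstacle.} The delicate point is the composite (``for all $\rho$'') type-I constraint: because it is a supremum of affine functions it must be treated as a single convex constraint whose Slater point has to be exhibited before dualizing, and one must verify that combining $\sup_{\sigma}$ with $\sup_{\rho}$ and exchanging them with $\inf_M$ is legitimate. Convexity of $\sA$ and $\sB$ together with compactness of $\{0\le M\le I\}$ is precisely what makes the Sion swap valid without assuming $\sA,\sB$ compact (note $\sA\subseteq\density$ is automatically bounded), and bilinearity of the Lagrangian after invoking $\tr\rho=1$ is what reduces the worst case over the whole set $\sA$ to a single dual variable $\mu$. The only genuinely new feature beyond the standard single-state $D_{\Hypo,\ve}$ duality is this reduction of the composite null, for which convexity of $\sA$ is essential.
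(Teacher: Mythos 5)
Your proof is correct, and it follows the same overall minimax strategy as the paper's: both reduce the lemma to the operational identity $\beta_\ve(\sA\|\sB)=\sup_{\rho\in\sA,\,\sigma\in\sB}\beta_\ve(\rho\|\sigma)$, and both obtain the crucial exchange of $\inf_M$ with $\sup_{(\rho,\sigma)}$ from Sion's theorem using exactly the ingredients you identify (compactness of $\{0\le M\le I\}$, convexity of $\sA\times\sB$, bilinearity of the Lagrangian after invoking $\tr\rho=1$). Where you differ is in how the type-I constraint is Lagrangified and un-Lagrangified. The paper's main-text proof never invokes a strong-duality theorem: it rewrites the constrained infimum \emph{exactly} as $\inf_M\sup_{\rho,z>0}$ via the penalty identity for $\sup_{z>0}\tr[\sigma M]+z\tr[\rho((1-\ve)I-M)]$, absorbs the multiplier into the cone variable $\rho'=z\rho\in\Cone(\sA)$, applies Sion once to this bilinear form, and then, after splitting $\rho'=z\rho$ again, applies Sion a \emph{second} time to swap $\sup_{z>0}$ with $\inf_M$ and recover $\beta_\ve(\rho\|\sigma)$ directly from its definition. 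You instead keep a scalar multiplier $\mu$ and invoke Slater-based strong duality twice, once at the composite level and once at the single-state level; your $\mu$ is precisely the paper's $z$, and your pair $(\mu,\rho)$ is the paper's cone variable $\mu\rho=\rho'$. Both routes are sound: yours uses standard textbook duality, avoids the change of variables, and is in the spirit of the paper's alternative appendix proof (which also rests on Slater plus SDP duality), while the paper's exact-penalty-plus-Sion route is slightly more self-contained, requiring no Slater point or dual-attainment discussion. The latter is a mild advantage when $\sB$ is an unbounded subset of $\PSD$ (the lemma only assumes convexity), since your first strong-duality step then needs $f(M)=\sup_{\sigma\in\sB}\tr[\sigma M]$ to be treated as an extended-real convex objective — harmless here, and the constraint side is always finite because $\sA\subseteq\density$ is bounded, but it is a caveat your write-up glosses over.
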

\end{shaded}
\begin{proof}
It is worth noting that, unless both $\sA$ and $\sB$ are singletons, the result does not follow trivially from their definitions. This is because the left-hand side represents an operational quantity, while the right-hand side is a mathematical generalization. To establish this result, it suffices to demonstrate that
\begin{align}
    \beta_\ve(\sA\|\sB) = \inf_{0 \leq M \leq I} \left\{\sup_{\sigma \in \sB} \tr[\sigma M]: \sup_{\rho \in \sA} \tr[\rho (I-M)] \leq \ve\right\} =  \sup_{\rho\in\sA} \sup_{\sigma \in \sB} \beta_\ve(\rho\|\sigma).
\end{align}
This requires pulling the supremum over $\sA$ in the condition and the supremum over $\sB$ in the objective function to the left-hand side of $\inf_{0 \leq M \leq I}$, which is a non-trivial task.
First observe that we can write:
\begin{align}
\beta_{\ve}(\sA,\sB) &= 
\inf_{0 \leq M \leq I} \left\{ \sup_{\sigma \in \sB} \tr[\sigma M] : \sup_{\rho \in \sA} \tr[\rho ((1-\ve)I - M)] \leq 0 \right\}\\
&= \inf_{0 \leq M \leq I} \sup_{\sigma \in \sB} \sup_{\rho \in \sA} \sup_{z > 0} \tr[\sigma M] + z \tr[\rho ((1-\ve)-M)]
\end{align}
since $\sup_{z > 0} \tr[\sigma M] + z\tr[\rho((1-\ve)I-M)] = \tr[\sigma M]$ if $\tr[\rho((1-\ve)I-M)] \leq 0$ and $+\infty$ otherwise.
By doing the change of variables $\rho'=z\rho \in \Cone(\sA)$ where $\Cone(\sA)$ is the convex cone generated by $\sA$, we can write:
\begin{align}
\beta_{\ve}(\sA,\sB) = \inf_{0 \leq M \leq I}\; \sup_{\substack{\rho' \in \Cone(\sA)\\ \sigma \in \sB}} \tr[\sigma M] + \tr[\rho' ((1-\ve)I-M)]
\end{align}
Define
\begin{align}
f(M,\rho',\sigma) = \tr[\sigma M] + z \tr[\rho' ((1-\ve)I-M)]
\end{align}
which is linear in $M$ for $(\rho',\sigma)$ fixed, and vice-versa. Furthermore the set $\{M : 0\leq M \leq I\}$ is convex compact, and $\Cone(\sA) \times \sB$ is convex. So Sion's minimax theorem~\cite[Corollary 3.3]{Sion1958} applies and we can write
\begin{align}
\beta_{\ve}(\sA,\sB) &= \sup_{\substack{\rho' \in \Cone(\sA)\\ \sigma \in \sB}}\; \inf_{0 \leq M \leq I} \tr[\sigma M] + \tr[\rho' ((1-\ve)I-M)]\\
&= \sup_{\substack{\rho \in \sA\\ \sigma \in \sB}} \; \sup_{z > 0} \;\inf_{0 \leq M \leq I} \tr[\sigma M] + z \tr[\rho ((1-\ve) - M)]\\
&= \sup_{\substack{\rho \in \sA\\ \sigma \in \sB}} \; \inf_{0 \leq M \leq I}\; \sup_{z > 0} \tr[\sigma M] + z \tr[\rho ((1-\ve)I - M)]\\
&= \sup_{\substack{\rho \in \sA\\ \sigma \in \sB}} \inf_{0\leq M\leq I} \{ \tr[\sigma M] : \tr[\rho((1-\ve)I-M)] \leq 0 \}\\
&= \sup_{\substack{\rho \in \sA\\ \sigma \in \sB}}\; \beta_{\ve}(\rho,\sigma)
\end{align}
where in the third line we applied Sion's minimax theorem again to interchange the $\sup_{z > 0}$ with the $\inf_{0\leq M\leq 1}$. This completes the proof. An alternative proof is given in Appendix~\ref{sec: appendix Alternative proof lem: hypothesis operational and DH}.
\end{proof}

Due to the above relation between the optimal type-II error $\beta_\ve(\sA\|\sB)$ and the hypothesis testing relative entropy $D_{\Hypo,\ve}(\sA\|\sB)$, we can now apply the generalized AEP in Theorem~\ref{thm: generalized AEP} to derive a generalized quantum Stein's lemma as follows. 

\begin{shaded}
\begin{theorem}(Generalized quantum Stein's lemma.)\label{thm: generalized Steins}
Let $\{\sA_n\}_{n\in\NN}$ and $\{\sB_n\}_{n\in\NN}$ be two sequences of sets satisfying Assumption~\ref{ass: steins lemma assumptions} and $\sA_n \subseteq \density(\cH^{\ox n})$, $\sB_n \subseteq \PSD(\cH^{\ox n})$ and $D_{\max}(\sA_n\|\sB_n) \leq cn$, for all $n \in \NN$ and a constant $c \in \RR_{\pl}$.
Then for any $\ve \in (0,1)$, it holds that
\begin{align}\label{eq: adversarial quantum Steins lemma}
    \lim_{n\to \infty} - \frac{1}{n} \log \beta_\ve(\sA_n\|\sB_n) = D^\reg(\sA\|\sB).
\end{align}
\end{theorem}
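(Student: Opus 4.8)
The plan is to reduce the operational statement to the entropic one that has already been established, so that no fundamentally new argument is needed. The key observation is that the quantity inside the limit in Eq.~\eqref{eq: adversarial quantum Steins lemma} is, up to sign and logarithm, exactly the optimal type-II error $\beta_\ve(\sA_n\|\sB_n)$ defined in Eq.~\eqref{eq: operational quantity in discrimination}. Since each $\sA_n$ and each $\sB_n$ is convex by assumption (A.1), Lemma~\ref{lem: hypothesis operational and DH} applies verbatim and identifies this operational quantity with the mathematical divergence between the two sets, giving $-\log\beta_\ve(\sA_n\|\sB_n) = D_{\Hypo,\ve}(\sA_n\|\sB_n)$ for every $n$ and every $\ve \in (0,1)$. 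Dividing by $n$, the claim becomes a statement purely about the regularized hypothesis testing relative entropy.

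Second, I would invoke the generalized AEP directly. The hypotheses of Theorem~\ref{thm: generalized AEP} coincide with the hypotheses assumed here, namely Assumption~\ref{ass: steins lemma assumptions} together with the bound $D_{\max}(\sA_n\|\sB_n) \leq cn$. Hence the first equality in Eq.~\eqref{eq: asymptotic AEP} yields $\lim_{n\to\infty}\tfrac1n D_{\Hypo,\ve}(\sA_n\|\sB_n) = D^\reg(\sA\|\sB)$ for every fixed $\ve\in(0,1)$. Combining this with the operational identity of the first step immediately gives $\lim_{n\to\infty} -\tfrac1n\log\beta_\ve(\sA_n\|\sB_n) = D^\reg(\sA\|\sB)$, which is precisely Eq.~\eqref{eq: adversarial quantum Steins lemma}. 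In this sense the theorem is a corollary of the operational characterization and the AEP, and the combination itself is routine.

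Accordingly, the genuinely hard content lives in the two prerequisites rather than in their assembly. The first obstacle, now discharged in Lemma~\ref{lem: hypothesis operational and DH}, was that one cannot naively pull the $\inf_{M}$ inside the suprema over $\sA$ and $\sB$: this required rewriting the type-I constraint through a Lagrange multiplier $z>0$, passing to the convex cone generated by $\sA$, and applying Sion's minimax theorem twice. The second and deeper obstacle is the AEP of Theorem~\ref{thm: generalized AEP} itself, whose lower bound relies on the superadditivity of the measured \Renyi divergence (Lemma~\ref{lem: generalized supadditivity DM alpha}) enabled by the polar stability assumption (A.4), and whose strong-converse upper bound relies on the subadditivity of the sandwiched \Renyi divergence together with the explicit continuity estimates of Lemma~\ref{eq: approx Dreg DMmalpha}. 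Once these are in hand, the generalized Stein's lemma follows with essentially no further work.
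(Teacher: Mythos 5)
Your proposal is correct and follows exactly the paper's own proof: the paper likewise derives Theorem~\ref{thm: generalized Steins} as a direct consequence of Lemma~\ref{lem: hypothesis operational and DH} (identifying $-\log\beta_\ve(\sA_n\|\sB_n)$ with $D_{\Hypo,\ve}(\sA_n\|\sB_n)$) combined with the generalized AEP of Theorem~\ref{thm: generalized AEP}. Your accompanying remarks on where the genuine difficulty lies (the double application of Sion's minimax theorem in the operational identity, and the superadditivity/continuity machinery behind the AEP) accurately reflect the structure of the paper's argument.
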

\end{shaded}
\begin{proof}
This is a direct consequence of Theorem~\ref{thm: generalized AEP} and Lemma~\ref{lem: hypothesis operational and DH}.
\end{proof}

Note that Theorem~\ref{thm: generalized Steins} is incomparable to the generalized quantum Stein lemma of~\cite{Brand_o_2010,hayashi2024generalized,lami2024solutiongeneralisedquantumsteins}. It is weaker in the sense that we have the additional assumption on the stability of the polar sets under tensor product (see (A.4) on Assumption~\ref{ass: steins lemma assumptions}) for the alternate hypothesis $\sB_n$, but it is stronger in that it allows for a composite null hypothesis and in addition we can obtain efficient and controlled approximations of the Stein exponent as presented in Theorem~\ref{thm: generalized AEP}.

\section{Second law of resource manipulation with partial information}
\label{sec: Quantum resource theory with partial information and its reversibility}

In this section, we introduce a new framework of quantum resource theory with partial information and prove that such a theory is reversible under asymptotically resource non-generating operations, identifying the regularized quantum relative entropy between two sets as the ``unique'' measure of resource in this new framework.

\subsection{Operational setting: resource manipulation with partial information}

Quantum resource theory is a framework that finds great success in studying different quantum resource features in recent years (see e.g.~\cite{ChitambarGour19} for an introduction). A standard resource theory is built upon a set of \emph{free operations} $\Omega$ and a set of \emph{free states} $\sF$ (in contrast to resource states).  Take the entanglement theory as an example, $\sF$ consists of the separable (unentangled) states, and local operations and classical communication (LOCC) is a standard choice of $\Omega$. In principle, $\sF$ and $\Omega$ can be adaptively defined, which give rise to a wide variety of meaningful resource theories, as long as they follow a \emph{golden rule}: any free operation can only map a free state to another free state, i.e.~$\Lambda(\rho)\in\sF, \forall\rho\in\sF, \forall\Lambda \in\Omega$. Whether some resource state can be (approximately) converted to another by certain free operations is a fundamental type of problem in quantum information~\cite{fang2020no,fang2022no,theurer2025single}.

Existing studies on quantum resource theory require precise characterization of the source and the target states, and impose an i.i.d. structure, that is, considering transformation $\rho^{\ox n}$ to $\sigma^{\ox m}$ and evaluating the rate of transformation $m/n$. A standard approach to obtain a precise characterization of the state is quantum tomography, which is a resource-intensive task. In practical scenarios, full knowledge of the source state may be unavailable, and different copies of the sources can exhibit correlation by nature. For instance, when establishing entangled photons over a remote distance in a quantum network~\cite{fang2023quantum}, the environmental noise affecting the states can vary over time for different copies of the entangled state, making the precise description of the shared state unclear.

Motivated by this, we propose a new framework of quantum resource theory in this section that works with resource manipulation with partial information. A particular case in quantum thermodynamics has been recently studied in~\cite{watanabe2024black}. More specifically, we consider the transformation of an unknown quantum state $\rho_n \in \density(A^{\ox n})$ to another quantum state $\sigma_m \in \density(B^{\ox m})$. In this case, we do not have complete information about the specific states $\rho_n$ and $\sigma_m$. All we know is that the source state lies within a certain set of quantum states $\sA_n$ (for example, we may know the range of noise affecting the source). Our goal is to design a resource manipulation protocol $\Lambda$ such that regardless of the specific state in $\sA_n$, it \emph{universally} transforms this state into a target state that falls within the target range $\sB_n$. A comparison of this setting with the standard resource manipulation is depicted in Figure~\ref{fig:QRTPI}.

\begin{figure}[H]
    \centering
    \includegraphics[width=14cm]{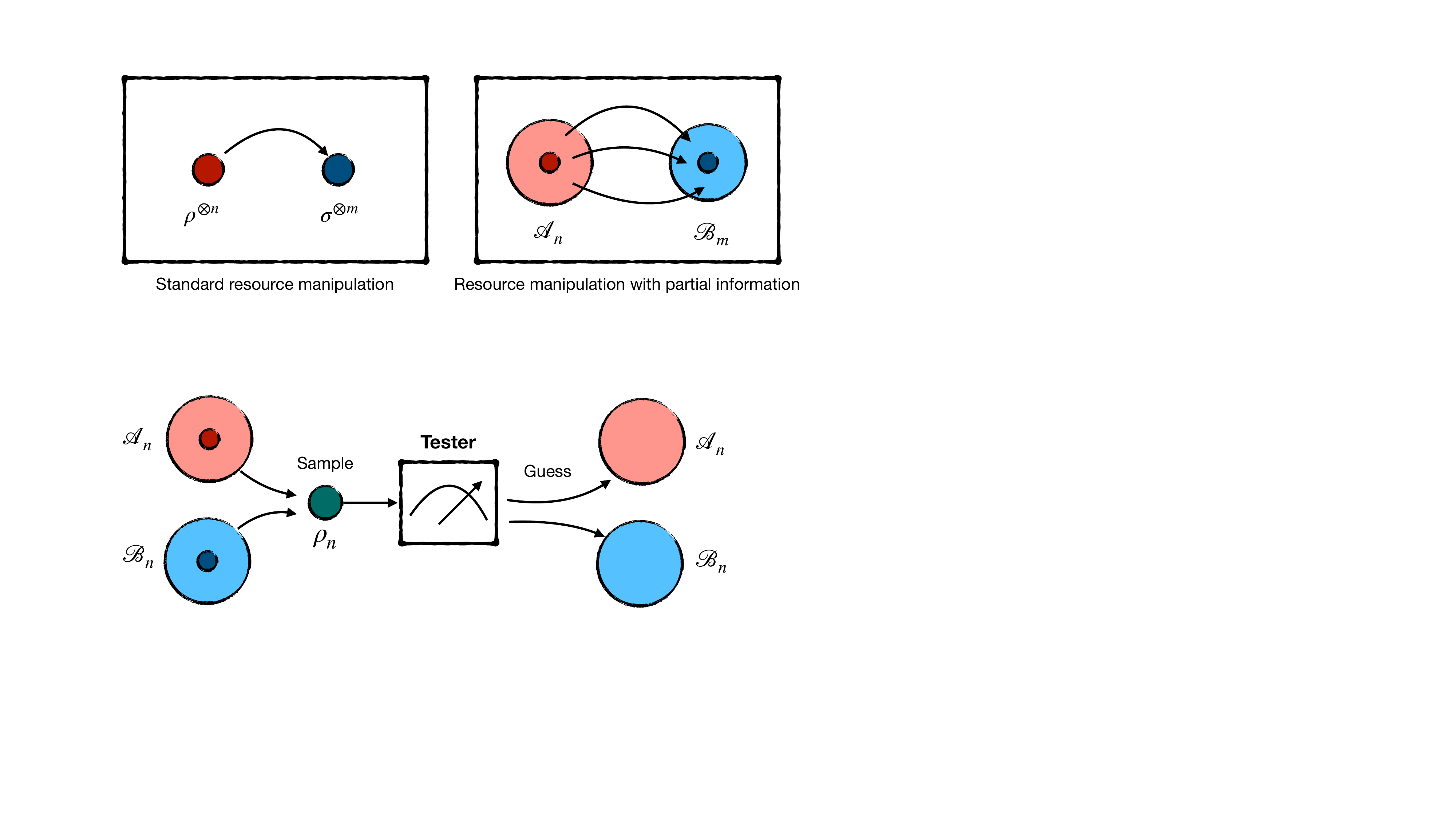}
    \caption{\small Illustration of the standard quantum resource manipulation (left) that transforms an i.i.d. state $\rho^{\ox n}$ to another i.i.d. state $\sigma^{\ox m}$ and the quantum resource manipulation with partial information (right) that transforms an unknown quantum state in $\sA_n$ to a quantum state in the target range $\sB_m$.}
    \label{fig:QRTPI}
\end{figure}

\subsection{Reversibility of resource manipulation}

Let $\Delta(\rho,\sigma):= \frac{1}{2}\|\rho-\sigma\|_1$ be the trace distance between two quantum states. Consider quantum state transformation from $\sA_n$ to $\sB_m$ via free operation class $\Omega_n$, and define the transformation error by 
\begin{align}
    \Delta\left(\sA_n\xrightarrow[]{\Omega_n}\sB_m\right) := \inf_{\Lambda_n \in \Omega_n}\sup_{\rho_n \in \sA_n} \inf_{\sigma_m\in \sB_m}  \Delta(\Lambda_n(\rho_n),\sigma_m).
\end{align}
That is, there exists a free operation $\Lambda_n \in \Omega_n$ such that it universally transforms any state in $\rho_n \in \sA_n$ to a quantum state $\sigma_m \in \sB_m$ within the error $ \Delta\left(\sA_n\xrightarrow[]{\Omega_n}\sB_m\right)$. Based on this notion of error measure, we can introduce the asymptotic rate of transformation by 
\begin{align}
    r\left(\sA\xrightarrow[]{\Omega} \sB\right):= \sup \left\{\limsup_{n\to \infty} \frac{m_n}{n}: \lim_{n\to \infty} \Delta\left(\sA_n\xrightarrow[]{\Omega_n}\sB_{m_n}\right) = 0\right\},
\end{align}
which is the optimal achievable rate of transformation with vanishing error.

As the previous generalized quantum Stein's lemma~\cite{Brand_o_2010,hayashi2024generalized,lami2024solutiongeneralisedquantumsteins} is deeply related to the reversibility of quantum resource manipulation, here we show an analogy result in this new framework. We say a quantum resource theory with partial information is reversible if the asymptotic rate satisfies,
\begin{align}
r\left(\sA\xrightarrow[]{\Omega} \sB\right) \cdot r\left(\sB\xrightarrow[]{\Omega} \sA\right) = 1,
\end{align}
which means we can transform from $\sA$ to $\sB$ and then back from $\sB$ to $\sA$ without compromising any resources.
For this, we consider the asymptotically resource non-generating operations $\RNG$ introduced in~\cite{Brand_o_2010}. A quantum operation $\Lambda$ is called $\delta$-resource nongenerating operation  if for every free state $\omega \in \sF$, we have $\cR(\Lambda(\omega)) \leq \delta$, where 
\begin{align}
    \cR(\rho) := \min_{\pi \in \density} \left\{s \geq 0: \frac{\rho + s\pi}{1+s} \in \sF\right\}
\end{align}
is the global robustness of $\rho$. Simple calculation tells that, $\log (\cR(\rho) + 1) = D_{\max}(\rho\|\sF)$. Then we denote the set of all $\delta$-resource nongenerating operations by $\RNG(\delta)$. An asymptotically resource nongenerating operation is a sequence of operations $\{\Lambda_n\}_{n\in \NN}$ such that $\Lambda_n \in \RNG(\delta_n)$ and $\lim_{n\to \infty}\delta_n = 0$. With this, the asymptotic rate of transformation under RNG operations is given by
\begin{align}
    r\left(\sA\xrightarrow[]{\RNG} \sB\right) = \sup \left\{\limsup_{n\to \infty} \frac{m_n}{n}: \lim_{n\to \infty} \Delta\left(\sA_n\xrightarrow[]{\RNG(\delta_n)}\sB_{m_n}\right) = 0, \lim_{n\to \infty}\delta_n = 0\right\}.
\end{align}

The following result gives a precise characterization of this asymptotic rate, which identifies the regularized quantum relative entropy between two sets as the ``unique'' measure of a resource in the asymptotic limit. It also implies the reversibility (also referred to as the second law) of the resource manipulation under RNG.

\begin{shaded}
\begin{theorem}
Let $\{\sA_n\}_{n\in\NN}$, $\{\sB_n\}_{n\in\NN}$ and  $\{\sF_n\}_{n\in\NN}$ be threes sequences of sets satisfying Assumption~\ref{ass: steins lemma assumptions} where $\sA_n \subseteq \density(\cH^{\ox n})$ and $\sB_n \subseteq \density(\cH^{\ox n})$ are the sets of source states and the sets of target states, respectively, and $\sF_n \subseteq \density(\cH^{\ox n})$ are the sets of free states, which defines the operation class $\RNG$. Let $\kappa := \limsup_{n\to \infty} \frac{1}{n} \max_{\omega \in \density} D(\omega\|\sF_n) < \infty$ and $D^\reg(\sB\|\sF) > 0$. Then it holds that
\begin{align}
    r\left(\sA\xrightarrow[]{\RNG} \sB\right) = \frac{D^\reg(\sA\|\sF)}{D^\reg(\sB\|\sF)}.
\end{align}
As a consequence, if $D^\reg(\sA\|\sF) > 0$, it holds that
\begin{align}
    r\left(\sA\xrightarrow[]{\RNG} \sB\right)r\left(\sB\xrightarrow[]{\RNG} \sA\right) = 1.
\end{align}
\end{theorem}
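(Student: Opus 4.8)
The plan is to prove the two matching bounds $r\!\left(\sA\xrightarrow[]{\RNG}\sB\right)\le D^\reg(\sA\|\sF)/D^\reg(\sB\|\sF)$ and $r\!\left(\sA\xrightarrow[]{\RNG}\sB\right)\ge D^\reg(\sA\|\sF)/D^\reg(\sB\|\sF)$; the product identity then follows by applying the formula once in each direction and cancelling. The two engines are the generalized quantum Stein's lemma (Theorem~\ref{thm: generalized Steins}) applied to the pairs $(\sA,\sF)$ and $(\sB,\sF)$, giving $-\tfrac1n\log\beta_\ve(\sA_n\|\sF_n)\to D^\reg(\sA\|\sF)$ and $-\tfrac1n\log\beta_\ve(\sB_n\|\sF_n)\to D^\reg(\sB\|\sF)$, together with the $D_{\max,\ve}$ half of the generalized AEP (Theorem~\ref{thm: generalized AEP}) for the pair $(\sB,\sF)$. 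Before invoking these I would record that $\kappa<\infty$ combined with permutation invariance supplies the finiteness hypotheses $D_{\max}(\sA_n\|\sF_n)\le cn$ and $D_{\max}(\sB_n\|\sF_n)\le cn$, via the polynomial comparison between $D_{\max}$-type and Umegaki quantities for permutation-invariant operators (Lemmas~\ref{lem: DM and Sandwiched relation} and~\ref{lem: permutation invariant spec}). Recall also the identity $\log(\cR(\rho)+1)=D_{\max}(\rho\|\sF)$ stated in the text, which will convert all robustness bounds into domination statements $\rho\le 2^{D_{\max}}\gamma$ with $\gamma\in\sF$.

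For the converse, suppose $\{\Lambda_n\}$ with $\Lambda_n\in\RNG(\delta_n)$, $\delta_n\to0$, transforms $\sA_n\to\sB_{m_n}$ with vanishing error $\epsilon_n$. Fix $\ve\in(0,1)$ and a threshold $\ve'<\ve$, and let $T_{m_n}$ be an optimal type-II test for $\sB_{m_n}$ versus $\sF_{m_n}$ at level $\ve'$. Define the pulled-back source test by $\tr[M_n\rho]:=\tr[T_{m_n}\Lambda_n(\rho)]$. Since each $\Lambda_n(\rho_n)$ is $\epsilon_n$-close in trace distance to some element of $\sB_{m_n}$, the type-I error of $M_n$ on $\sA_n$ is at most $\ve'+O(\epsilon_n)\le\ve$ for large $n$; while the $\delta_n$-RNG property gives $\Lambda_n(\omega_n)\le(1+\delta_n)\tau_{m_n}$ for some $\tau_{m_n}\in\sF_{m_n}$, whence $\tr[M_n\omega_n]\le(1+\delta_n)\beta_{\ve'}(\sB_{m_n}\|\sF_{m_n})$. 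Thus $\beta_\ve(\sA_n\|\sF_n)\le(1+\delta_n)\beta_{\ve'}(\sB_{m_n}\|\sF_{m_n})$; taking $-\tfrac1n\log$ and using Theorem~\ref{thm: generalized Steins} on both pairs yields $D^\reg(\sA\|\sF)\ge\limsup_n\tfrac{m_n}{n}\,D^\reg(\sB\|\sF)$, i.e.\ the claimed upper bound on the rate.

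For achievability, fix $R<D^\reg(\sA\|\sF)/D^\reg(\sB\|\sF)$, set $m_n=\lfloor Rn\rfloor$, and fix a smoothing parameter $\ve$. By the $D_{\max,\ve}$ statement of Theorem~\ref{thm: generalized AEP} for $(\sB,\sF)$, choose $\sigma_{m_n}\in\sB_{m_n}$, a free $\gamma_{m_n}\in\sF_{m_n}$, and a (subnormalized) $\sigma'_{m_n}$ with $P(\sigma'_{m_n},\sigma_{m_n})\le\ve$ and $\sigma'_{m_n}\le 2^{m_n D^\reg(\sB\|\sF)(1+o(1))}\gamma_{m_n}$ (completing $\sigma'_{m_n}$ to a normalized state $\ve$-close to $\sigma_{m_n}$ while preserving the domination is routine). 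Let $M_n$ be an optimal $\sA_n$-versus-$\sF_n$ test, so its type-I error on $\sA_n$ vanishes and $\sup_{\omega_n\in\sF_n}\tr[M_n\omega_n]\le 2^{-nD^\reg(\sA\|\sF)(1-o(1))}$ by Theorem~\ref{thm: generalized Steins}. Take the measure-and-prepare map $\Lambda_n(\rho):=\tr[M_n\rho]\,\sigma'_{m_n}+\tr[(I-M_n)\rho]\,\gamma_{m_n}$. On any $\rho_n\in\sA_n$ the output is $\ve$-close to $\sigma_{m_n}\in\sB_{m_n}$, so the transformation error is $\ve+o(1)$; on any $\omega_n\in\sF_n$, writing $p=\tr[M_n\omega_n]$ and using the domination gives $\Lambda_n(\omega_n)\le(1+p\,2^{m_nD^\reg(\sB\|\sF)(1+o(1))})\gamma_{m_n}$, hence $\cR(\Lambda_n(\omega_n))\le 2^{-nD^\reg(\sA\|\sF)(1-o(1))+m_nD^\reg(\sB\|\sF)(1+o(1))}\to0$ precisely because $m_n/n\to R<D^\reg(\sA\|\sF)/D^\reg(\sB\|\sF)$. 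Thus $\{\Lambda_n\}$ is asymptotically resource non-generating with error $\ve+o(1)$; letting $\ve\to0$ and $R\uparrow D^\reg(\sA\|\sF)/D^\reg(\sB\|\sF)$ gives the lower bound on the rate.

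I expect the achievability step to be the main obstacle, and specifically the requirement that the resource one must \emph{inject} to synthesize the target scale as $m_nD^\reg(\sB\|\sF)$ rather than as the generally far larger unsmoothed $D_{\max}(\sigma_{m_n}\|\sF_{m_n})$. This is exactly what the $D_{\max,\ve}$ half of the generalized AEP provides, and it is the reason the single-letter relative entropy — not a max-type quantity — governs the rate and makes the theory reversible. The only remaining care is the bookkeeping of the interleaved limits $\ve\to0$, $R\uparrow D^\reg(\sA\|\sF)/D^\reg(\sB\|\sF)$, and $n\to\infty$, which is legitimate because of the $\limsup$ in the definition of $r\!\left(\sA\xrightarrow[]{\RNG}\sB\right)$, and the verification that $D^\reg(\sB\|\sF)>0$ (respectively $D^\reg(\sA\|\sF)>0$) keeps the rate formula and the product identity well posed.
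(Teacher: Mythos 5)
Your proposal is correct in substance, and your achievability half is essentially the paper's own proof: the same measure-and-prepare map $\Lambda_n(X)=\tr[M_nX]\,\sigma'_{m_n}+\tr[(I-M_n)X]\,\omega_{m_n}$ built from an optimal Stein test for $(\sA_n,\sF_n)$ and a smoothed-$D_{\max}$ optimizer for $(\sB_{m_n},\sF_{m_n})$, with the same convexity-of-robustness bookkeeping $\cR(\Lambda_n(\eta_n))\le t_n\,2^{m_n D^\reg(\sB\|\sF)(1+o(1))}\to 0$ (your purified-distance smoothing of a subnormalized $\sigma'_{m_n}$ with a ``routine completion'' is sidestepped in the paper by simply defining the smoothing with trace distance, per its footnote; adopting that removes the loose end). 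Your converse, however, takes a genuinely different route. The paper's converse never touches hypothesis testing: it is a purely entropic chain --- asymptotic continuity of $D(\cdot\|\sF_{m_n})$ (this is exactly where $\kappa<\infty$ enters, through $c_{m_n}=\max_{\omega}D(\omega\|\sF_{m_n})$), the triangle-type inequality $D(\rho\|\sigma)\le D(\rho\|\omega)+D_{\max}(\omega\|\sigma)$, data processing, and $D_{\max}(\Lambda_n(\omega_n)\|\sF_{m_n})\le\log(1+\delta_n)$ --- yielding $D(\sB_{m_n}\|\sF_{m_n})\le D(\sA_n\|\sF_n)+\log(1+\delta_n)+c_{m_n}\ve+(1+\ve)h\bigl(\tfrac{\ve}{1+\ve}\bigr)$ and then a contradiction after regularization. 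You instead pull an optimal test $T_{m_n}$ for $(\sB_{m_n},\sF_{m_n})$ back through the channel, control its type-I error by $\ve'+\epsilon_n$ via trace-distance closeness of $\Lambda_n(\sA_n)$ to $\sB_{m_n}$, control its type-II error on free inputs by $(1+\delta_n)\beta_{\ve'}(\sB_{m_n}\|\sF_{m_n})$ via $\Lambda_n(\omega_n)\le(1+\delta_n)\tau_{m_n}$, and then invoke Theorem~\ref{thm: generalized Steins} on \emph{both} pairs. Both arguments are sound. Yours avoids the asymptotic-continuity lemma and the explicit appearance of $\kappa$ in the converse, at the price of needing the full generalized Stein's lemma (including its achievability direction) for the pair $(\sB,\sF)$; the paper's converse is self-contained at the level of entropy inequalities and uses the AEP only in the achievability part.

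One step you should repair: your claim that $\kappa<\infty$ plus permutation invariance supplies $D_{\max}(\sA_n\|\sF_n)\le cn$ ``via Lemmas~\ref{lem: DM and Sandwiched relation} and~\ref{lem: permutation invariant spec}'' does not go through. Those lemmas compare $D_{\Meas,\alpha}$ with $D_{\Sand,\alpha}$ up to $2\log|\spec(\sigma)|$; they do not bound the \emph{unsmoothed} $D_{\max}$ by an Umegaki-type quantity, and already classically $D_{\max}(P\|Q)$ can exceed $D(P\|Q)$ by an arbitrary amount, so finiteness of $\kappa$ alone cannot give a linear $D_{\max}$ bound (the naive route via the minimal eigenvalue of a free state obtained from $\kappa$ gives only an exponential-in-$n$ bound). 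The standard fix is structural: by assumption (A.3) one has $\sF_1^{\ox n}\subseteq\sF_n$, so if $\sF_1$ contains a full-rank state $\tau$ then $D_{\max}(\rho_n\|\sF_n)\le D_{\max}(\rho_n\|\tau^{\ox n})\le n\log\lambda_{\min}(\tau)^{-1}$ for every state $\rho_n$, which is the hypothesis Theorems~\ref{thm: generalized AEP} and~\ref{thm: generalized Steins} need. To be fair, the paper glosses over this same point when it invokes Theorem~\ref{thm: generalized AEP} inside its achievability proof, so this is a patch to your (and its) hypothesis bookkeeping, not a structural flaw in your argument.
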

\end{shaded}

\begin{proof}
The proof applies the generalized AEP in Theorem~\ref{thm: generalized AEP} and follows a similar argument in~\cite{brandao2015reversible} and also a more detailed version in~\cite[Chapter 11]{gour2024resources}. This proof contains the converse and the achievable parts, and we will prove the converse part first.

\textit{1) Proof of the converse part:} Suppose by contradiction that
\begin{align}
    r\left(\sA\xrightarrow[]{\RNG} \sB\right) > \frac{D^\reg(\sA\|\sF)}{D^\reg(\sB\|\sF)} + 2\delta,
\end{align} 
for some small positive $\delta$. By definition, this means in particular that for sufficiently small $\ve \in (0,1)$, there exists an achievable rate $r$ such that 
\begin{align}
    r > \frac{D^\reg(\sA\|\sF)}{D^\reg(\sB\|\sF)} + \delta.
\end{align}
That is, there exists a sequence $\{m_n\}_{n\in \NN}\subseteq \NN$ such that $r = \lim_{n\to \infty} \frac{m_n}{n}$ and there exists another sequence $\{\delta_n\}_{n \in \NN} \subseteq \RR_{\pl}$ with a limit of zero such that for every $n \in \NN$, 
\begin{align}
   \Delta\left(\sA_n\xrightarrow[]{\RNG(\delta_n)}\sB_{m_n}\right) \leq \ve.
\end{align}
This means that there exists $\Lambda_n \in \RNG(\delta_n)$ such that for any $\rho_n \in \sA_n$  
\begin{align}
 \Delta(\Lambda_n(\rho_n),\sigma_{m_n}) \leq \ve,
\end{align}
for some  $\sigma_{m_n} \in \sB_{m_n}$.
By using the continuity in~\cite[Theorem 10.2.2]{gour2024resources}, we have 
\begin{align}
    |D(\Lambda_n(\rho_n)\|\sF_{m_n}) - D(\sigma_{m_n}\|\sF_{m_n})| \leq c_{m_n} \ve + (1+\ve) h\left(\frac{\ve}{1+\ve}\right),
\end{align}
where $c_{m_n}:= \max_{\omega \in \density} D(\omega\|\sF_{m_n})$ and $h(x)$ is the binary Shannon entropy. This gives
\begin{align}
    D(\sigma_{m_n}\|\sF_{m_n}) \leq D(\Lambda_n(\rho_n)\|\sF_{m_n}) + c_{m_n} \ve + (1+\ve) h\left(\frac{\ve}{1+\ve}\right).
\end{align}
Since $D(\sB_{m_n}\|\sF_{m_n}) \leq D(\sigma_{m_n}\|\sF_{m_n})$ by definition, we get
\begin{align}\label{eq: GRT converse proof tmp1}
    D(\sB_{m_n}\|\sF_{m_n}) \leq D(\Lambda_n(\rho_n)\|\sF_{m_n}) + c_{m_n}  \ve + (1+\ve) h\left(\frac{\ve}{1+\ve}\right).
\end{align}
Now, let $\omega_{n} \in \sF_{n}$  be the optimizer satisfying $D(\rho_{n}\|\omega_{n}) = D(\rho_{n}\|\sF_{n})$.
Then we have 
\begin{align}
    D(\Lambda_n(\rho_n)\|\sF_{m_n}) & = \min_{\tau_{m_n} \in \sF_{m_n}} D(\Lambda_n(\rho_n)\|\tau_{m_n}) \\
    & \leq D(\Lambda_n(\rho_n)\|\Lambda_n(\omega_n)) + \min_{\tau_{m_n} \in \sF_{m_n}} D_{\max}(\Lambda_n(\omega_n)\|\tau_{m_n})\\
    & \leq D(\rho_n\|\omega_{n}) + D_{\max}(\Lambda_n(\omega_n)\|\sF_{m_n})
\end{align}
where the first inequality follows from the triangle inequality $D(\rho\|\sigma) \leq D(\rho\|\omega) + D_{\max}(\omega\|\sigma)$ (see e.g.~\cite[Theorem 6.3.3]{gour2024resources}), and the second inequality follows from the data-processing inequality of quantum relative entropy. Since $\omega_n$ is a free state and $\Lambda_n \in \RNG(\delta_n)$, the global robustness of $\Lambda_n(\omega_n)$ cannot exceed $\delta_n$ by definition and in particular $D_{\max}(\Lambda_n(\omega_n)\|\sF_{m_n}) \leq \log(1+\delta_n)$. Therefore,
\begin{align}
    D(\Lambda_n(\rho_n)\|\sF_{m_n}) \leq D(\rho_n\|\sF_n) + \log (1+\delta_n),
\end{align}
where we use the optimality assumption of $\omega_n$ here.
Taking this into Eq.~\eqref{eq: GRT converse proof tmp1}, we get
\begin{align}
     D(\sB_{m_n}\|\sF_{m_n}) \leq D(\rho_n\|\sF_n) + \log (1+\delta_n) + c_{m_n} \ve + (1+\ve) h\left(\frac{\ve}{1+\ve}\right).
\end{align}
As this holds for any $\rho_n \in \sA_n$, we have 
\begin{align}
    D(\sB_{m_n}\|\sF_{m_n}) \leq D(\sA_n\|\sF_n) + \log (1+\delta_n) + c_{m_n}  \ve + (1+\ve) h\left(\frac{\ve}{1+\ve}\right).
\end{align}
Noting that $\lim_{n\to \infty} m_n/n = r$ and $\lim_{n\to \infty} \frac{1}{m_n}c_{m_n} \leq \kappa$, we have 
\begin{align}
    \lim_{n\to \infty} \frac{1}{m_n} D(\sB_{m_n}\|\sF_{m_n})  \leq \lim_{n\to \infty} \frac{n}{m_n} \frac{1}{n} D(\sA_n\|\sF_n) + \kappa \ve.
\end{align}
That is,
\begin{align}
    D^\reg(\sB\|\sF) \leq \frac{1}{r} D^\reg(\sA\|\sF) + \kappa\ve.
\end{align}
However, since $r > \frac{D^\reg(\sA\|\sF)}{D^\reg(\sB\|\sF)} + \delta$ for sufficiently small $\ve \in (0,1)$, we get a contradiction that 
\begin{align}
    D^\reg(\sB\|\sF) \leq \frac{1}{r} D^\reg(\sA\|\sF) + \kappa\ve < D^\reg(\sB\|\sF),
\end{align}
by taking a small enough $\ve \in (-\infty, \delta D^\reg(\sB\|\sF)/\kappa) \cap (0,1)$. This concludes the converse part.

\textit{2) Proof of the achievable part:} For any
\begin{align}
    r < \frac{D^\reg(\sA\|\sF)}{D^\reg(\sB\|\sF)},
\end{align}
we aim to show that $r$ is an achievable rate. For this, we fix $\ve \in (0,1)$ and denote $m_n := \lceil nr\rceil$ so that $\lim_{n\to \infty} m_n/n = r$.
Let $M_n$ be the optimal test for $\beta_\ve(\sA_n\|\sF_n)$. That is, for any $\delta > 0$ (will be determined later) and sufficiently large $n$, we have from the generalized AEP in Theorem~\ref{thm: generalized AEP},
\begin{align}\label{eq: reversibility acheivable proof tmp1}
    \sup_{\rho_n \in \sA_n} \tr[\rho_n (I-M_n)] \leq \frac{\ve}{2}, \quad \text{and}\quad \sup_{\omega_n \in \sF_n} \tr[\omega_n M_n] \leq 2^{-n(D^\reg(\sA\|\sF) - \delta)}.
\end{align}
Let $\sigma_{m_n}^*$ and $\sigma_{m_n}'$ be the optimizers satisfying 
\begin{align}\label{eq: reversibility acheivable proof tmp2}
    D_{\max,\ve/2}(\sB_{m_n}\|\sF_{m_n}) = D_{\max,\ve/2}(\sigma_{m_n}^*\|\sF_{m_n}) = D_{\max}(\sigma_{m_n}'\|\sF_{m_n})
\end{align}
where $\Delta(\sigma_m',\sigma_m^*) \leq \ve/2$~\footnote{Here we use trace distance for the definition of the smoothed max-relative entropy for convenience.}. 
Define the sequence of maps 
\begin{align}
    \Lambda_n(X):= \tr[M_n X] \sigma_{m_n}' + \tr[(I-M_n)X] \omega_{m_n},
\end{align}
where $\omega_{m_n} \in \sF_{m_n}$ is a free state. Then we can check that these operations give the expected transformation within the error threshold, and they are indeed asymptotically resource nongenerating operations. For this, we can first check that for any $\rho_n \in \sA_n$,
\begin{align}
    \Delta(\Lambda_n(\rho_n),\sigma_{m_n}^*) & \leq \Delta(\Lambda_n(\rho_n),\sigma_{m_n}') + \Delta(\sigma_{m_n}',\sigma_{m_n}^*)\\
    & = (1-\tr[M_n\rho_n]) \Delta(\omega_{m_n}, \sigma_{m_n}') + \Delta(\sigma_{m_n}',\sigma_{m_n}^*)\\
    & \leq \ve,
\end{align}
where the first inequality follows from the triangle inequality of the trace distance, the equality follows by evaluating $\Lambda_n(\rho_n)$ using the definition and the second inequality follows by Eq.~\eqref{eq: reversibility acheivable proof tmp1} and the fact that $\Delta(\omega_{m_n}, \sigma_{m_n}') \leq 1$.
As this holds for any $\rho_n \in \sA_n$, we have
\begin{align}
  \sup_{\rho_n \in \sA_n} \inf_{\sigma_{m_n}\in \sB_{m_n}}  \Delta(\Lambda_n(\rho_n),\sigma_{m_n}) \leq \ve.
\end{align}
Then we check $\Lambda_n \in \RNG(\delta_n)$ with $\delta_n = 2^{-n\delta}$ and therefore $\lim_{n\to \infty} \delta_n = 0$ (note that $\delta$ will be determined later). We have the following
\begin{align}
D^\reg(\sB\|\sF) & = \lim_{n\to \infty} \frac{1}{m_n} D_{\max,\ve/2}(\sB_{m_n}\|\sF_{m_n})\\
&= \lim_{n\to \infty} \frac{1}{m_n} D_{\max}(\sigma_{m_n}'\|\sF_{m_n}) \\
&= \lim_{n\to \infty} \frac{n}{m_n} \frac{1}{n} D_{\max}(\sigma_{m_n}'\|\sF_{m_n}) \\
&= \frac{1}{r} \lim_{n\to \infty} \frac{1}{n} D_{\max}(\sigma_{m_n}'\|\sF_{m_n}),\label{eq: reversibility acheivable proof tmp3}
\end{align}
where the first line follows from the generalized AEP in Theorem~\ref{thm: generalized AEP}, the second line follows by the optimality assumption of $\sigma_{m_n}'$ in Eq.~\eqref{eq: reversibility acheivable proof tmp2} and the last equality follows since $\lim_{n\to \infty} m_n / n = r$.
Since $r < \frac{D^\reg(\sA\|\sF)}{D^\reg(\sB\|\sF)}$, there exists $\delta > 0$ such  that $r < (D^\reg(\sA\|\sF) - 2\delta)/D^\reg(\sB\|\sF)$, or equivalently,
\begin{align}
    r D^\reg(\sB\|\sF) < D^\reg(\sA\|\sF) - 2\delta.
\end{align}
Taking this into Eq.~\eqref{eq: reversibility acheivable proof tmp3}, we have for sufficiently large $n$,
\begin{align}
    D_{\max}(\sigma_{m_n}'\|\sF_{m_n}) \leq n (D^\reg(\sA\|\sF) - 2\delta).
\end{align}
That is, the global robustness of $\sigma_{m_n}'$ satisfies 
\begin{align}\label{eq: reversibility acheivable proof tmp4}
    \cR(\sigma_{m_n}') \leq 2^{n (D^\reg(\sA\|\sF) - 2\delta)} - 1.
\end{align}
To show that $\Lambda_n \in \RNG(\delta_n)$, let $\eta_n \in \sF_n$ and denote by $t_n := \tr[M_n \eta_n]$ and $r_{m_n} := \cR(\sigma_{m_n}')$. From the convexity of global robustness we get
\begin{align}
    \cR(\Lambda_n(\eta_n)) & \leq t_n R(\sigma_{m_n}') + (1-t_n) R(\omega_{m_n}) \\
    & = t_n r_{m_n}\\
    & \leq t_n(r_{m_n} +1) \\
    & \leq t_n 2^{n (D^\reg(\sA\|\sF) - 2\delta)}\\
    & \leq 2^{-n\delta}\\
    & = \delta_n
\end{align}
where the second line follows because $\omega_{m_n} \in \sF_{m_n}$ ant therefore $R(\omega_{m_n}) = 0$, the fourth line follows by Eq.~\eqref{eq: reversibility acheivable proof tmp4} and the fifth line follows as $t_n \leq 2^{-n(D^\reg(\sA\|\sF) - \delta)}$ by Eq.~\eqref{eq: reversibility acheivable proof tmp1}. This concludes that $\Lambda_n \in \RNG(\delta_n)$ and therefore concludes the proof of the achievable part.
\end{proof}

\section{Discussion}
\label{sec: conclusion}

The significance of this work is multifaceted. First, we established a very general quantum AEP with explicit convergence guarantees and computational efficiency. Given that the AEP is central to information theory and is evident in its broad applications in both classical and quantum information theory, we expect that our generalized quantum AEP plays a similar role and advances the study of quantum information science further. Second, our generalized quantum Stein's lemma addresses open questions raised by Brand\~{a}o et al. \cite[Section 3.5]{brandao2020adversarial} and Mosonyi et al. \cite[Section VI]{Mosonyi_2022}, which seek a Stein's lemma with computational efficiency. We also proposed a new framework for quantum resource theory in which state transformations are performed without requiring precise characterization of the states being manipulated, making it more robust to imperfections.
Finally, we believe that the technical tools established in this work—including the variational formula, superadditivity for measured relative entropy between two sets of states—are of independent interest and are likely to have further applications.

Many problems remain open for future investigation. While we have obtained explicit bounds on the second-order term for the generalized AEP of order $O(n^{2/3} \log n)$, such bounds are not tight in general. Can one bound the second-order term by $O(\sqrt{n})$ as in the AEP? Sharpening the second-order term would provide a more comprehensive understanding of the convergence behavior and enhance its accuracy in practical applications. 
We proved the asymptotic reversibility of quantum resource theory under asymptotic resource non-generating operations. But the systematic study of other different operational classes and settings are left for future exploration. 

\vspace{1cm}

\noindent \textbf{Acknowledgements.} 
O.F. would like to thank Roberto Rubboli and Marco Tomamichel for a discussion on the additivity properties of the measured relative entropy at the Centre for Quantum Technologies in Singapore. We would also like to thank David Sutter for a question about the relation to~\cite{mazzola2025uhlmann} which motivated us to add analogous superadditivity statements (e.g., Lemma~\ref{lem: generalized supadditivity DM alpha}) for $\alpha > 1$.
K.F. is supported by the National Natural Science Foundation of China (grant No. 92470113 and 12404569), the Shenzhen Science and Technology Program (grant No. JCYJ20240813113519025), the Shenzhen Fundamental Research Program (grant No. JCYJ2024\ 1202124023031), the 1+1+1 CUHK-CUHK(SZ)-GDST Joint Collaboration Fund (grant No. GRD\ P2025-022), and the University Development Fund (grant No. UDF01003565). O.F. acknowledges support by the European Research Council (ERC Grant AlgoQIP, Agreement No. 851716), by the European Union’s Horizon research and innovation programme under the project VERIqTAS (Grant Agreement No 101017733).

\bibliographystyle{alpha_abbrv}
\bibliography{Bib}

\newcommand{\etalchar}[1]{$^{#1}$}
\begin{thebibliography}{ADMVW02}

\bibitem[ADMVW02]{audenaert2002asymptotic}
K.~Audenaert, B.~De~Moor, K.~G.~H. Vollbrecht, and R.~F. Werner.
\newblock Asymptotic relative entropy of entanglement for orthogonally invariant states.
\newblock {\em Physical Review A}, 66(3):032310, 2002.

\bibitem[BBG{\etalchar{+}}23]{berta2023gap}
M.~Berta, F.~G. Brand{\~a}o, G.~Gour, L.~Lami, M.~B. Plenio, B.~Regula, and M.~Tomamichel.
\newblock On a gap in the proof of the generalised quantum {Stein's} lemma and its consequences for the reversibility of quantum resources.
\newblock {\em Quantum}, 7:1103, 2023.

\bibitem[BBG{\etalchar{+}}24]{berta2024tangled}
M.~Berta, F.~G. Brand{\~a}o, G.~Gour, L.~Lami, M.~B. Plenio, B.~Regula, and M.~Tomamichel.
\newblock The tangled state of quantum hypothesis testing.
\newblock {\em Nature Physics}, 20(2):172--175, 2024.

\bibitem[BBH21]{berta2021composite}
M.~Berta, F.~G. Brandão, and C.~Hirche.
\newblock On composite quantum hypothesis testing.
\newblock {\em Communications in Mathematical Physics}, 385(1):55--77, 2021.

\bibitem[BCP14]{baumgratz2014quantifying}
T.~Baumgratz, M.~Cramer, and M.~B. Plenio.
\newblock Quantifying coherence.
\newblock {\em Physical Review Letters}, 113(14):140401, 2014.

\bibitem[Ber09]{berta2009single}
M.~Berta.
\newblock Single-shot quantum state merging.
\newblock {\em arXiv preprint arXiv:0912.4495}, 2009.

\bibitem[BFT17]{Berta2017}
M.~Berta, O.~Fawzi, and M.~Tomamichel.
\newblock On variational expressions for quantum relative entropies.
\newblock {\em Letters in Mathematical Physics}, 107(12):2239--2265, 2017.

\bibitem[BG15]{brandao2015reversible}
F.~G. Brandao and G.~Gour.
\newblock Reversible framework for quantum resource theories.
\newblock {\em Physical Review Letters}, 115(7):070503, 2015.

\bibitem[BHLP20]{brandao2020adversarial}
F.~G. Brandão, A.~W. Harrow, J.~R. Lee, and Y.~Peres.
\newblock Adversarial hypothesis testing and a quantum {Stein’s} lemma for restricted measurements.
\newblock {\em IEEE Transactions on Information Theory}, 66(8):5037--5054, 2020.

\bibitem[BHOS15]{brandao2015quantum}
F.~G. Brandão, A.~W. Harrow, J.~Oppenheim, and S.~Strelchuk.
\newblock Quantum conditional mutual information, reconstructed states, and state redistribution.
\newblock {\em Physical Review Letters}, 115(5):050501, 2015.

\bibitem[BP10]{Brand_o_2010}
F.~G. S.~L. Brandão and M.~B. Plenio.
\newblock A generalization of quantum stein’s lemma.
\newblock {\em Communications in Mathematical Physics}, 295(3):791–828, February 2010.

\bibitem[Bre57]{breiman1957individual}
L.~Breiman.
\newblock The individual ergodic theorem of information theory.
\newblock {\em The Annals of Mathematical Statistics}, 28(3):809--811, 1957.

\bibitem[BT16]{berta2016fidelity}
M.~Berta and M.~Tomamichel.
\newblock The fidelity of recovery is multiplicative.
\newblock {\em IEEE Transactions on Information Theory}, 62(4):1758--1763, 2016.

\bibitem[BV04]{boyd2004convex}
S.~Boyd and L.~Vandenberghe.
\newblock {\em Convex optimization}.
\newblock Cambridge University Press, 2004.

\bibitem[Cac97]{cachin1997entropy}
C.~Cachin.
\newblock {\em Entropy measures and unconditional security in cryptography}.
\newblock PhD thesis, ETH Zurich, 1997.

\bibitem[CG19]{ChitambarGour19}
E.~Chitambar and G.~Gour.
\newblock Quantum resource theories.
\newblock {\em Reviews of Modern Physics}, 91:025001, Apr 2019.

\bibitem[CMW16]{cooney2016strong}
T.~Cooney, M.~Mosonyi, and M.~M. Wilde.
\newblock Strong converse exponents for a quantum channel discrimination problem and quantum-feedback-assisted communication.
\newblock {\em Communications in Mathematical Physics}, 344(3):797--829, 2016.

\bibitem[Cov99]{cover1999elements}
T.~M. Cover.
\newblock {\em Elements of information theory}.
\newblock John Wiley \& Sons, 1999.

\bibitem[CS17]{chandrasekaran2017relative}
V.~Chandrasekaran and P.~Shah.
\newblock Relative entropy optimization and its applications.
\newblock {\em Mathematical Programming}, 161:1--32, 2017.

\bibitem[Dat09]{datta2009min}
N.~Datta.
\newblock Min-and max-relative entropies and a new entanglement monotone.
\newblock {\em IEEE Transactions on Information Theory}, 55(6):2816--2826, 2009.

\bibitem[Dav07]{davies2007linear}
E.~B. Davies.
\newblock {\em Linear operators and their spectra}, volume 106.
\newblock Cambridge University Press, 2007.

\bibitem[DF19]{dupuis2019entropySecondOrder}
F.~Dupuis and O.~Fawzi.
\newblock Entropy accumulation with improved second-order term.
\newblock {\em IEEE Transactions on Information Theory}, 65(11):7596--7612, 2019.

\bibitem[DFR20]{dupuis2020entropy}
F.~Dupuis, O.~Fawzi, and R.~Renner.
\newblock Entropy accumulation.
\newblock {\em Communications in Mathematical Physics}, 379(3):867--913, 2020.

\bibitem[DFW{\etalchar{+}}18]{diaz2018using}
M.~G. D{\'\i}az, K.~Fang, X.~Wang, M.~Rosati, M.~Skotiniotis, J.~Calsamiglia, and A.~Winter.
\newblock Using and reusing coherence to realize quantum processes.
\newblock {\em Quantum}, 2:100, 2018.

\bibitem[DKF{\etalchar{+}}14]{dupuis2014generalized}
F.~Dupuis, L.~Kraemer, P.~Faist, J.~M. Renes, and R.~Renner.
\newblock Generalized entropies.
\newblock In {\em XVIIth International Congress on Mathematical Physics}, pages 134--153. World Scientific, 2014.

\bibitem[DMHB13]{datta2013smooth}
N.~Datta, M.~Mosonyi, M.-H. Hsieh, and F.~G. Brandão.
\newblock A smooth entropy approach to quantum hypothesis testing and the classical capacity of quantum channels.
\newblock {\em IEEE Transactions on Information Theory}, 59(12):8014--8026, 2013.

\bibitem[Don86]{donald1986relative}
M.~J. Donald.
\newblock On the relative entropy.
\newblock {\em Communications in Mathematical Physics}, 105:13--34, 1986.

\bibitem[FF18]{fawzi2018efficient}
H.~Fawzi and O.~Fawzi.
\newblock Efficient optimization of the quantum relative entropy.
\newblock {\em Journal of Physics A: Mathematical and Theoretical}, 51(15):154003, 2018.

\bibitem[FF21]{fawzi2021defining}
H.~Fawzi and O.~Fawzi.
\newblock Defining quantum divergences via convex optimization.
\newblock {\em Quantum}, 5:387, 2021.

\bibitem[FFF25a]{fang2025adversarial}
K.~Fang, H.~Fawzi, and O.~Fawzi.
\newblock Adversarial quantum channel discrimination.
\newblock {\em arXiv preprint arXiv:2505.xxxxx}, 2025.

\bibitem[FFF25b]{fang2025efficient}
K.~Fang, H.~Fawzi, and O.~Fawzi.
\newblock Efficient approximation of regularized relative entropies and applications.
\newblock {\em arXiv preprint arXiv:2502.15659}, 2025.

\bibitem[FFF25c]{fang2025variational}
K.~Fang, H.~Fawzi, and O.~Fawzi.
\newblock {Variational expressions and Uhlmann theorem for measured divergences}.
\newblock {\em arXiv preprint arXiv:2502.07745}, 2025.

\bibitem[FGW21]{fang2021ultimate}
K.~Fang, G.~Gour, and X.~Wang.
\newblock Towards the ultimate limits of quantum channel discrimination.
\newblock {\em arXiv preprint arXiv:2110.14842v2}, 2021.

\bibitem[FL20]{fang2020no}
K.~Fang and Z.-W. Liu.
\newblock No-go theorems for quantum resource purification.
\newblock {\em Physical Review Letters}, 125(6):060405, 2020.

\bibitem[FL22]{fang2022no}
K.~Fang and Z.-W. Liu.
\newblock No-go theorems for quantum resource purification: New approach and channel theory.
\newblock {\em PRX Quantum}, 3(1):010337, 2022.

\bibitem[FR15]{fawzi2015quantum}
O.~Fawzi and R.~Renner.
\newblock Quantum conditional mutual information and approximate markov chains.
\newblock {\em Communications in Mathematical Physics}, 340(2):575--611, 2015.

\bibitem[FS23]{fawzi2023optimal}
H.~Fawzi and J.~Saunderson.
\newblock Optimal self-concordant barriers for quantum relative entropies.
\newblock {\em SIAM Journal on Optimization}, 33(4):2858--2884, 2023.

\bibitem[FWL{\etalchar{+}}18]{fang2018probabilistic}
K.~Fang, X.~Wang, L.~Lami, B.~Regula, and G.~Adesso.
\newblock Probabilistic distillation of quantum coherence.
\newblock {\em Physical Review Letters}, 121(7):070404, 2018.

\bibitem[FWTD19]{fang2019non}
K.~Fang, X.~Wang, M.~Tomamichel, and R.~Duan.
\newblock Non-asymptotic entanglement distillation.
\newblock {\em IEEE Transactions on Information Theory}, 65(10):6454--6465, 2019.

\bibitem[FZL{\etalchar{+}}23]{fang2023quantum}
K.~Fang, J.~Zhao, X.~Li, Y.~Li, and R.~Duan.
\newblock {Quantum NETwork: from theory to practice}.
\newblock {\em Science China Information Sciences}, 66(8):180509, 2023.

\bibitem[Gou24]{gour2024resources}
G.~Gour.
\newblock Resources of the quantum world.
\newblock {\em arXiv preprint arXiv:2402.05474v1}, 2024.

\bibitem[HFW21]{hayashi2021finite}
M.~Hayashi, K.~Fang, and K.~Wang.
\newblock Finite block length analysis on quantum coherence distillation and incoherent randomness extraction.
\newblock {\em IEEE Transactions on Information Theory}, 67(6):3926--3944, 2021.

\bibitem[HMO08]{hiai2008error}
F.~Hiai, M.~Mosonyi, and T.~Ogawa.
\newblock Error exponents in hypothesis testing for correlated states on a spin chain.
\newblock {\em Journal of Mathematical Physics}, 49(3), 2008.

\bibitem[HP91]{hiai1991proper}
F.~Hiai and D.~Petz.
\newblock The proper formula for relative entropy and its asymptotics in quantum probability.
\newblock {\em Communications in Mathematical Physics}, 143:99--114, 1991.

\bibitem[HR11]{holenstein2011randomness}
T.~Holenstein and R.~Renner.
\newblock On the randomness of independent experiments.
\newblock {\em IEEE Transactions on Information Theory}, 57(4):1865--1871, 2011.

\bibitem[HT16]{hayashi2016correlation}
M.~Hayashi and M.~Tomamichel.
\newblock Correlation detection and an operational interpretation of the {R{\'e}nyi} mutual information.
\newblock {\em Journal of Mathematical Physics}, 57(10), 2016.

\bibitem[HV93]{han1993approximation}
T.~S. Han and S.~Verd{\'u}.
\newblock Approximation theory of output statistics.
\newblock {\em IEEE Transactions on Information Theory}, 39(3):752--772, 1993.

\bibitem[HY24]{hayashi2024generalized}
M.~Hayashi and H.~Yamasaki.
\newblock Generalized quantum {Stein's} lemma and second law of quantum resource theories.
\newblock {\em arXiv preprint arXiv:2408.02722}, 2024.

\bibitem[IRS16]{iten2016pretty}
R.~Iten, J.~M. Renes, and D.~Sutter.
\newblock Pretty good measures in quantum information theory.
\newblock {\em IEEE Transactions on Information Theory}, 63(2):1270--1279, 2016.

\bibitem[KL51]{kullback1951information}
S.~Kullback and R.~A. Leibler.
\newblock On information and sufficiency.
\newblock {\em The Annals of Mathematical Statistics}, 22(1):79--86, 1951.

\bibitem[KZ05]{kurdila2005convex}
A.~Kurdila and M.~Zabarankin.
\newblock {\em Convex Functional Analysis}.
\newblock Systems \& Control: Foundations \& Applications. Birkh{\"a}user Basel, 2005.

\bibitem[Lam24]{lami2024solutiongeneralisedquantumsteins}
L.~Lami.
\newblock A solution of the generalised quantum {Stein's} lemma.
\newblock {\em arXiv preprint arXiv: 2408.06410}, 2024.

\bibitem[LM02]{levitan2002competitive}
E.~Levitan and N.~Merhav.
\newblock {A competitive Neyman-Pearson approach to universal hypothesis testing with applications}.
\newblock {\em IEEE Transactions on Information Theory}, 48(8):2215--2229, 2002.

\bibitem[McM53]{mcmillan1953basic}
B.~McMillan.
\newblock The basic theorems of information theory.
\newblock {\em The Annals of mathematical statistics}, pages 196--219, 1953.

\bibitem[MH11]{Mosonyi_2011}
M.~Mosonyi and F.~Hiai.
\newblock On the quantum rényi relative entropies and related capacity formulas.
\newblock {\em IEEE Transactions on Information Theory}, 57(4):2474–2487, April 2011.

\bibitem[MH23]{mosonyi2023continuitypropertiesquantumrenyi}
M.~Mosonyi and F.~Hiai.
\newblock {Some continuity properties of quantum R\'enyi divergences}.
\newblock {\em arXiv preprint arXiv: 2209.00646}, 2023.

\bibitem[MLDS{\etalchar{+}}13]{muller2013quantum}
M.~M{\"u}ller-Lennert, F.~Dupuis, O.~Szehr, S.~Fehr, and M.~Tomamichel.
\newblock On quantum {R}{\'e}nyi entropies: {A} new generalization and some properties.
\newblock {\em Journal of Mathematical Physics}, 54(12), 2013.

\bibitem[MO15]{mosonyi2015two}
M.~Mosonyi and T.~Ogawa.
\newblock Two approaches to obtain the strong converse exponent of quantum hypothesis testing for general sequences of quantum states.
\newblock {\em IEEE Transactions on Information Theory}, 61(12):6975--6994, 2015.

\bibitem[MSR25]{mazzola2025uhlmann}
G.~Mazzola, D.~Sutter, and R.~Renner.
\newblock Uhlmann's theorem for relative entropies.
\newblock {\em arXiv preprint arXiv:2502.01749}, 2025.

\bibitem[MSW22]{Mosonyi_2022}
M.~Mosonyi, Z.~Szilagyi, and M.~Weiner.
\newblock On the error exponents of binary state discrimination with composite hypotheses.
\newblock {\em IEEE Transactions on Information Theory}, 68(2):1032–1067, February 2022.

\bibitem[NO00]{Ogawa2000}
H.~Nagaoka and T.~Ogawa.
\newblock {Strong converse and {Stein's} lemma in quantum hypothesis testing}.
\newblock {\em IEEE Transactions on Information Theory}, 46(7):2428--2433, feb 2000.

\bibitem[NS09]{Nussbaum2009}
M.~Nussbaum and A.~Szko{\l}a.
\newblock {The Chernoff Lower Bound For Symmetric Quantum Hypothesis Testing}.
\newblock {\em The Annals of Statistics}, 37(2):1040--1057, 2009.

\bibitem[Pet86]{petz1986quasi}
D.~Petz.
\newblock Quasi-entropies for finite quantum systems.
\newblock {\em Reports on Mathematical Physics}, 23(1):57--65, 1986.

\bibitem[Pet08]{petzquantum}
D.~Petz.
\newblock {\em Quantum Information Theory and Quantum Statistics}.
\newblock Springer, 2008.

\bibitem[Pia09]{Piani2009}
M.~Piani.
\newblock {Relative entropy of entanglement and restricted measurements}.
\newblock {\em Physical Review Letters}, 103(16):160504, 2009.

\bibitem[QWW18]{qi2018applications}
H.~Qi, Q.~Wang, and M.~M. Wilde.
\newblock Applications of position-based coding to classical communication over quantum channels.
\newblock {\em Journal of Physics A: Mathematical and Theoretical}, 51(44):444002, 2018.

\bibitem[Rai01]{rains2001semidefinite}
E.~M. Rains.
\newblock A semidefinite program for distillable entanglement.
\newblock {\em IEEE Transactions on Information Theory}, 47(7):2921--2933, 2001.

\bibitem[Ren05]{renner2005security}
R.~Renner.
\newblock {\em Security of QKD}.
\newblock PhD thesis, PhD thesis, ETH, 2005, quant-ph/0512258, 2005.

\bibitem[RFWA18]{regula2018one}
B.~Regula, K.~Fang, X.~Wang, and G.~Adesso.
\newblock One-shot coherence distillation.
\newblock {\em Physical Review Letters}, 121(1):010401, 2018.

\bibitem[RSB24]{rippchen2024locally}
T.~Rippchen, S.~Sreekumar, and M.~Berta.
\newblock {Locally-measured R\'enyi divergences}.
\newblock {\em arXiv preprint arXiv:2405.05037}, 2024.

\bibitem[RT24]{rubboli2024new}
R.~Rubboli and M.~Tomamichel.
\newblock New additivity properties of the relative entropy of entanglement and its generalizations.
\newblock {\em Communications in Mathematical Physics}, 405(7):162, 2024.

\bibitem[RTT24]{rubboli2024mixed}
R.~Rubboli, R.~Takagi, and M.~Tomamichel.
\newblock Mixed-state additivity properties of magic monotones based on quantum relative entropies for single-qubit states and beyond.
\newblock {\em Quantum}, 8:1492, 2024.

\bibitem[Sch95]{schumacher1995quantum}
B.~Schumacher.
\newblock Quantum coding.
\newblock {\em Physical Review A}, 51(4):2738, 1995.

\bibitem[Sha48]{Shannon1948}
C.~E. Shannon.
\newblock {A Mathematical Theory of Communication}.
\newblock {\em Bell System Technical Journal}, 27(3):379--423, jul 1948.

\bibitem[Sio58]{Sion1958}
M.~Sion.
\newblock {On general minimax theorems}.
\newblock {\em Pacific Journal of Mathematics}, 8(1):171--176, mar 1958.

\bibitem[TCR09]{tomamichel2009fully}
M.~Tomamichel, R.~Colbeck, and R.~Renner.
\newblock A fully quantum asymptotic equipartition property.
\newblock {\em IEEE Transactions on Information Theory}, 55(12):5840--5847, 2009.

\bibitem[TFG25]{theurer2025single}
T.~Theurer, K.~Fang, and G.~Gour.
\newblock Single-shot entanglement manipulation of states and channels revisited.
\newblock {\em IEEE Transactions on Information Theory}, 2025.

\bibitem[Tom16]{Tomamichel2015b}
M.~Tomamichel.
\newblock {\em {Quantum Information Processing with Finite Resources}}, volume~5 of {\em SpringerBriefs in Mathematical Physics}.
\newblock Springer International Publishing, Cham, 2016.

\bibitem[Uhl76]{uhlmann1976transition}
A.~Uhlmann.
\newblock The ``transition probability'' in the state space of a*-algebra.
\newblock {\em Reports on Mathematical Physics}, 9(2):273--279, 1976.

\bibitem[Ume54]{umegaki1954conditional}
H.~Umegaki.
\newblock Conditional expectation in an operator algebra.
\newblock {\em Tohoku Mathematical Journal, Second Series}, 6(2-3):177--181, 1954.

\bibitem[VH97]{verdu1997role}
S.~Verdu and T.~S. Han.
\newblock The role of the asymptotic equipartition property in noiseless source coding.
\newblock {\em IEEE Transactions on Information Theory}, 43(3):847--857, 1997.

\bibitem[VV95]{vembu1995generating}
S.~Vembu and S.~Verd{\'u}.
\newblock Generating random bits from an arbitrary source: Fundamental limits.
\newblock {\em IEEE Transactions on Information Theory}, 41(5):1322--1332, 1995.

\bibitem[Wat18]{watrous2018theory}
J.~Watrous.
\newblock {\em The theory of quantum information}.
\newblock Cambridge University Press, 2018.

\bibitem[WT24]{watanabe2024black}
K.~Watanabe and R.~Takagi.
\newblock Black box work extraction and composite hypothesis testing.
\newblock {\em Physical Review Letters}, 133(25):250401, 2024.

\bibitem[WWS20]{Wang2018magicstates}
X.~Wang, M.~M. Wilde, and Y.~Su.
\newblock Efficiently computable bounds for magic state distillation.
\newblock {\em Physical Review Letters}, 124(9):090505, 2020.

\bibitem[WWY14]{wilde2014strong}
M.~M. Wilde, A.~Winter, and D.~Yang.
\newblock Strong converse for the classical capacity of entanglement-breaking and {H}adamard channels via a sandwiched {R}{\'e}nyi relative entropy.
\newblock {\em Communications in Mathematical Physics}, 331:593--622, 2014.

\end{thebibliography}

\vspace{2cm}
\appendix

\noindent \textbf{\LARGE Appendices}

\vspace{1cm}

In Section~\ref{app sec: variational formula for other alpha} we present the variational formula for the measured relative entropy with different parameters $\alpha$. In Section~\ref{sec: appendix Alternative proof lem: hypothesis operational and DH} we provide an alternative proof for Lemma~\ref{lem: hypothesis operational and DH}. In Section~\ref{sec: Proving the superadditivity from its variational formula}, we will see that proving the superadditivity of $D_{\Meas}(\sA\|\sB)$ directly from its variational formula requires the partial trace assumption. Some useful properties are presented in Section~\ref{sec: app Useful properties}.

\section{Variational formula for $D_{\Meas,\alpha}(\sA\|\sB)$}
\label{app sec: variational formula for other alpha}

\begin{proof}[Proof of Lemma~\ref{lem: DM alpha variational sets}]
The case for $\alpha \in [1/2,1)$ has been proved in the main text. Now we prove the other two cases.
Let $\alpha \in (0,1/2)$. For any fixed $\rho \in \sA$ and $\sigma \in \sB$, we have by Eq.~\eqref{eq: DM alpha variational} that
\begin{align}
Q_{\Meas,\alpha}(\rho\|\sigma) & := \inf_{W \in \PD}\ \alpha \tr \left[\rho W\right] + (1-\alpha) \tr [\sigma W^{\frac{\alpha}{\alpha-1}}]\\
& = \inf_{\substack{W \in \PD\\W^{\frac{\alpha}{\alpha-1}} \leq V}}\ \alpha \tr \left[\rho W\right] + (1-\alpha) \tr [\sigma V],
\end{align}
where the second equality follows by introducing an additional variable $V$.
Then we have
\begin{align}
\sup_{\substack{\rho \in \sA\\ \sigma \in \sB}} Q_\alpha^\Meas(\rho\|\sigma)
& = \sup_{\substack{\rho \in \sA\\ \sigma \in \sB}}  \; \inf_{\substack{W \in \PD\\W^{\frac{\alpha}{\alpha-1}} \leq V}}\ \alpha \tr \left[\rho W\right] + (1-\alpha) \tr [\sigma V].
\end{align}
Note that all $\sA,\sB$ and $\{(W,V): W \in \PD, W^{\frac{\alpha}{\alpha-1}} \leq V\}$ are convex sets, with $\sA,\sB$ being compact. Moreover, the objective function is linear in $(\rho,\sigma)$, and also linear in $(W,V)$. So we can apply Sion's minimax theorem~\cite[Corollary 3.3]{Sion1958} to exchange the infimum and supremum and get
\begin{align}
\sup_{\substack{\rho \in \sA\\ \sigma \in \sB}} Q_\alpha^\Meas(\rho\|\sigma) & {=} \inf_{\substack{W \in \PD\\W^{\frac{\alpha}{\alpha-1}} \leq V}} \; \sup_{\substack{\rho \in \sA\\ \sigma \in \sB}} \ \alpha \tr \left[\rho W\right] + (1-\alpha) \tr [\sigma V]\\
& {=} \inf_{\substack{W \in \PD\\W^{\frac{\alpha}{\alpha-1}} \leq V}} \;  \ \alpha \sup_{\rho \in \sA} \tr \left[\rho W\right] + (1-\alpha) \sup_{\sigma \in \sB} \tr [\sigma V].
\end{align}
By using the definition of the support function, we have
\begin{align}
\sup_{\substack{\rho \in \sA\\ \sigma \in \sB}} Q_\alpha^\Meas(\rho\|\sigma) & = \inf_{\substack{W \in \PD\\W^{\frac{\alpha}{\alpha-1}} \leq V}} \alpha h_{\sA}(W) + (1-\alpha) h_{\sB}(V)\\
& = \inf_{\substack{W \in \PD\\W^{\frac{\alpha}{\alpha-1}} \leq V}} h_{\sA}(W)^\alpha  h_{\sB}(V)^{1-\alpha},
\end{align}
where the second line follows from the weighed arithmetic-geometric mean inequality $\alpha x + (1-\alpha)y \leq x^\alpha y^{1-\alpha}$ (with equality if and only if $x = y$) and the fact that $(W,V)$ is a feasible solution implies $(kW, k^{\frac{\alpha}{\alpha-1}} V)$ is also a feasible solution for any $k \geq 0$. Therefore, we can choose $k = (h_{\sA}(W)/h_{\sB}(W))^{\alpha-1}$, which implies $h_{\sA}(kW) = h_{\sB}(k^{\frac{\alpha}{\alpha-1}} V)$ and therefore the equality of the weighed arithmetic-geometric mean is achieved. Similarly, for any feasible solution $(W, V)$ we can always construct a new solution $(W h_{\sB}(V)^{\frac{1-\alpha}{\alpha}},V/h_{\sB}(V))$ achieves the same objective value. This implies 
\begin{align}
    \sup_{\substack{\rho \in \sA\\ \sigma \in \sB}} Q_\alpha^\Meas(\rho\|\sigma)  = \inf_{\substack{W \in \PD\\W^{\frac{\alpha}{\alpha-1}} \leq V\\h_{\sB}(V) = 1}} (h_{\sA}(W))^\alpha = \inf_{\substack{W \in \PD\\W^{\frac{\alpha}{\alpha-1}} \leq V\\h_{\sB}(V) \leq 1}} (h_{\sA}(W))^\alpha,
\end{align}
where the second equality follows by the same reasoning. 
Finally, noting that $h_{\sB}(V) \leq 1$ if and only if $V \in \sB^{\circ}$, we have the asserted result in Eq.~\eqref{eq: DM alpha variational sets}. It is easy to check that the objective function $h_{\sA}(W)$ is convex in $W$ and the feasible set is also a convex set.

Let $\alpha \in (1,+\infty)$. For any fixed $\rho \in \sA$ and $\sigma \in \sB$, we have by Eq.~\eqref{eq: DM alpha variational} that
\begin{align}
Q_{\Meas,\alpha}(\rho\|\sigma) & := \sup_{W \in \PD} \ \alpha \tr\left[\rho W^{\frac{\alpha-1}{\alpha}}\right] + (1-\alpha) \tr[\sigma W]\\
& = \sup_{\substack{W \in \PD\\W^{\frac{\alpha-1}{\alpha}} \geq V \geq 0}}\ \alpha \tr\left[\rho V\right] + (1-\alpha) \tr[\sigma W],
\end{align}
where the second equality follows by introducing an additional variable $V$.
Then we have
\begin{align}
\inf_{\substack{\rho \in \sA\\ \sigma \in \sB}} Q_\alpha^\Meas(\rho\|\sigma)
& = \inf_{\substack{\rho \in \sA\\ \sigma \in \sB}}  \; \sup_{\substack{W \in \PD\\W^{\frac{\alpha-1}{\alpha}} \geq V \geq 0}}\ \alpha \tr\left[\rho V\right] + (1-\alpha) \tr[\sigma W].
\end{align}
Note that all $\sA,\sB$ and $\{(W,V): W \in \PD, W^{\frac{\alpha-1}{\alpha}} \geq V \geq 0\}$ are convex sets, with $\sA,\sB$ being compact. Moreover, the objective function is linear in $(\rho,\sigma)$, and also linear in $(W,V)$. So we can apply Sion's minimax theorem~\cite[Corollary 3.3]{Sion1958} to exchange the infimum and supremum and get
\begin{align}
\inf_{\substack{\rho \in \sA\\ \sigma \in \sB}} Q_\alpha^\Meas(\rho\|\sigma) & =  \sup_{\substack{W \in \PD\\W^{\frac{\alpha-1}{\alpha}} \geq V \geq 0}}\; \inf_{\substack{\rho \in \sA\\ \sigma \in \sB}}  \alpha \tr\left[\rho V\right] + (1-\alpha) \tr[\sigma W]\\
& = \sup_{\substack{W \in \PD\\W^{\frac{\alpha-1}{\alpha}} \geq V \geq 0}}  \;  \ \alpha \inf_{\rho \in \sA} \tr \left[\rho V\right] + (1-\alpha) \inf_{\sigma \in \sB} \tr [\sigma W].
\end{align}
Recall that $\revh_{\cvxset}(X):= \inf_{\rho\in \cvxset} \tr[X \rho]$, we have
\begin{align}
\inf_{\substack{\rho \in \sA\\ \sigma \in \sB}} Q_\alpha^\Meas(\rho\|\sigma) & = \sup_{\substack{W \in \PD\\W^{\frac{\alpha-1}{\alpha}} \geq V \geq 0}} \alpha \revh_{\sA}(V) + (1-\alpha) \revh_{\sB}(W)\\
& = \sup_{\substack{W \in \PD\\W^{\frac{\alpha-1}{\alpha}} \geq V \geq 0}} \revh_{\sA}(V)^\alpha \; \revh_{\sB}(W)^{1-\alpha},
\end{align}
where the second line follows from the weighed arithmetic-geometric mean inequality $\alpha x + (1-\alpha)y \leq x^\alpha y^{1-\alpha}$ (with equality if and only if $x = y$) and the fact that $(W,V)$ is a feasible solution implies $(kW, k^{\frac{\alpha-1}{\alpha}} V)$ is also a feasible solution for any $k \geq 0$. Therefore, we can choose $k = (\revh_{\sA}(V)/\revh_{\sB}(W))^{\alpha}$, which implies $\revh_{\sB}(kW) = \revh_{\sA}(k^{\frac{\alpha-1}{\alpha}} V)$ and therefore the equality of the weighed arithmetic-geometric mean is achieved. Similarly, for any feasible solution $(W, V)$ we can always construct a new solution $(W / \revh_{\sB}(W), V \revh_{\sB}(W)^{\frac{1-\alpha}{\alpha}})$ achieves the same objective value. This implies 
\begin{align}
    \inf_{\substack{\rho \in \sA\\ \sigma \in \sB}} Q_\alpha^\Meas(\rho\|\sigma)  = \sup_{\substack{W \in \PD\\W^{\frac{\alpha-1}{\alpha}} \geq V \geq 0\\ \revh_{\sB}(W) = 1}} (\revh_{\sA}(V))^\alpha = \sup_{\substack{W \in \PD\\W^{\frac{\alpha-1}{\alpha}} \geq V \geq 0\\ \revh_{\sB}(W) \geq 1}} (\revh_{\sA}(V))^\alpha,
\end{align}
where the second equality follows by the same reasoning. 
Finally, noting that $\revh_{\sB}(W) \geq 1$ if and only if $W \in \sB^{\star}:=\{X \in \PSD: \tr[X Y] \geq 1, \forall Y \in \sB\}$, we have the asserted result in Eq.~\eqref{eq: DM alpha variational sets}. It is easy to check that the objective function $\revh_{\sA}(V)$ is concave in $V$ and the feasible set is also a convex set.
\end{proof}

\section{Alternative proof for Lemma~\ref{lem: hypothesis operational and DH}}
\label{sec: appendix Alternative proof lem: hypothesis operational and DH}

Here we give an alternative proof for Lemma~\ref{lem: hypothesis operational and DH}, which follows the argument outlined in~\cite[Section III]{Brand_o_2010}. Denote
\begin{align}\label{eq: operational proof tmp3}
    f( \sA, \sigma, \ve) := \sup_{M}  \left\{\tr[\sigma M]: 0\leq M\leq I,\, \tr[\rho M] \leq \ve, \forall \rho \in \sA \right\}.
\end{align}
By using the definition of dual cone of a convex set, we get 
\begin{align}\label{eq: operational proof tmp1}
    f( \sA, \sigma, \ve) = \sup_{M} \left\{\tr[\sigma M]: 0\leq M\leq I, \ve I - M \in \sA^* \right\},
\end{align}
where $\sA^*:=\{X: \tr[X\rho] \geq 0, \forall \rho \in \sA\}$ is the dual cone of $\sA$. Then introducing the Lagrange multipliers $X \geq 0$, $Y \geq 0$ and $Z \in \Cone(\sA)$, with $\Cone(\sA)$ being the cone generated by $\sA$, we get the Lagrangian of $f( \sA, \sigma, \ve)$ as 
\begin{align}
    L(\sigma, \ve, M, X, Y, Z)  := & \tr[\sigma M] + \tr[XM] + \tr[Y(I-M)] + \tr[\ve I - M] Z\\
    = &  \tr[M(\sigma + X - Y - Z)] + \tr [Y] + \ve \tr[Z].
\end{align}
It is easy to see that $M = (\ve/2) I$ is a strictly feasible solution to the optimzation problem in Eq.~\eqref{eq: operational proof tmp1}. Therefore, by Slater's condition~\cite{boyd2004convex}, $f( \sA, \sigma, \ve)$ is equal to its dual program, which is given by
\begin{align}
    f(\sA, \sigma,\ve) = \inf_{Y, Z} \left\{\tr [Y] + \ve \tr[Z]: \sigma \leq Y + Z, Y \geq 0, Z \in \Cone(\sA)\right\}.
\end{align} 
Using that $\tr[V_+] = \min\left\{\tr[W]: W \geq 0, W \geq V\right\}$ with $V_+$ being the positive part of $V$, we have
\begin{align}
    f( \sA, \sigma, \ve) & = \inf_{Z} \left\{\tr [(\sigma - Z)_+] + \ve \tr[Z]: Z \in \Cone(\sA)\right\}\\
    & = \inf_{\rho \in \sA} \inf_{x > 0} \left\{\tr [(\sigma - x\rho )_+] + \ve x\right\}\label{eq: operational proof tmp2}
\end{align}
where the second line follows by introducing $Z = x \rho$ with $\rho \in \sA$ and the fact that $\sA \subseteq \density$ is a set of normalized quantum states. Applying the above result to the case that $\sA = \{\rho\}$, we get 
\begin{align}
    f(\rho,\sigma,\ve) = \sup_{M}  \left\{\tr[\sigma M]: 0\leq M\leq I,\, \tr[\rho M] \leq \ve \right\} = \inf_{x > 0} \left\{\tr [(\sigma - x\rho )_+] + \ve x\right\}.
\end{align}
Taking this to Eq.~\eqref{eq: operational proof tmp2}, we have
\begin{align}\label{eq: operational proof tmp4}
    f( \sA, \sigma, \ve) = \inf_{\rho \in \sA} \sup_{M}  \left\{\tr[\sigma M]: 0\leq M\leq I,\, \tr[\rho M] \leq \ve \right\}.
\end{align}
Replacing $M$ with $I-M$ in Eqs.~\eqref{eq: operational proof tmp3} and~\eqref{eq: operational proof tmp4}, we get
\begin{align}
    \inf_{M}  & \big\{\tr[\sigma M]:  0\leq M\leq I,\, \tr[\rho (I-M)] \leq \ve, \forall \rho \in \sA \big\} \\
    & = \sup_{\rho \in \sA} \inf_{M}  \left\{\tr[\sigma M]: 0\leq M\leq I,\, \tr[\rho (I-M)] \leq \ve \right\}.
\end{align}
In a simpler notation, this shows  
\begin{align}\label{eq: operational proof tmp5}
    \beta_\ve(\sA\|\{\sigma\}) = \sup_{\rho \in \sA} \beta_\ve(\rho\|\sigma).
\end{align}
Now we are ready to prove the asserted result. Note that both $\sB$ and $\{M:  0 \leq M \leq I, \alpha(\sA,M)\leq \ve\}$ are convex with the later being compact. Moreover, the objective function $\tr[\sigma M]$ is linear in both $\sigma$ and $M$. Therefore, we can apply Sion's minimax theorem~\cite[Corollary 3.3]{Sion1958} to exchange the infimum and supremum and get
\begin{align}
    \beta_\ve(\sA\|\sB) & = \inf_{\substack{0 \leq M \leq I\\\alpha(\sA,M)\leq \ve}} \sup_{\sigma \in \sB} \tr[\sigma M] = \sup_{\sigma \in \sB} \inf_{\substack{0 \leq M \leq I\\\alpha(\sA,M)\leq \ve}}  \tr[\sigma M].
\end{align}
Combining with Eq.~\eqref{eq: operational proof tmp5}, we have
\begin{align}
    \beta_\ve(\sA\|\sB)  & = \sup_{\sigma \in \sB} \beta_\ve(\sA\|\{\sigma\})  = \sup_{\sigma \in \sB} \sup_{\rho \in \sA} \beta_\ve(\rho\|\sigma).
\end{align}
This completes the proof.

\section{Superadditivity of $D_{\Meas}(\sA\|\sB)$ from its variational formula}
\label{sec: Proving the superadditivity from its variational formula}

Here we show that if we attempt to prove the superadditivity of Lemma~\ref{lem: generalized supadditivity DM} by directly using the variational formula in Lemma~\ref{lem: DM variational sets},
\begin{align}
D_{\Meas}(\sA\|\sB) = \sup_{W \in \polarPD{\sB}} -h_{\sA}(-\log W),
\end{align}
we will require the stability assumption of $\sA_n$ under partial trace.
Let $W_1$ and $W_2$ be any feasible solutions to $D_{\Meas}(\sA_1\|\sB_1)$ and $D_{\Meas}(\sA_2\|\sB_2)$, respectively. It is clear that $W_1\ox W_2$ is also a feasible solution to $D_{\Meas}(\sA_{12}\|\sB_{12})$ by the polar assumption $\polarPSD{(\sB_1)} \ox \polarPSD{(\sB_2)} \subseteq \polarPSD{(\sB_{12})}$. Now we aim to argue that this achieves an objective value no smaller than those by $W_1$ and $W_2$. Note that $ -h_{\sA}(-\log W) = \inf_{\rho\in \sA} \tr[\rho \log W]$. We hope to show
\begin{align}\label{eq: superadditivity appendix tmp}
\inf_{\rho_{12}\in \sA_{12}} \tr[\rho_{12} \log (W_1 \ox W_2)] \geq \inf_{\rho_{1}\in \sA_{1}} \tr[\rho_{1} \log W_1] + \inf_{\rho_{2}\in \sA_{2}} \tr[\rho_{2} \log  W_2].
\end{align}
For any $\rho_{12} \in \sA_{12}$, we have 
\begin{align}
    \tr[\rho_{12} \log (W_1 \ox W_2)] & = \tr[\rho_{12} (\log W_1) \ox I] + \tr[\rho_{12} I \ox (\log W_2)]\\
    & = \tr[\tr_2\rho_{12} (\log W_1)] + \tr[\tr_1 \rho_{12} (\log W_2)].
\end{align}
If we had $\tr_2 \rho_{12} \in \sA_1$ and $\tr_1 \rho_{12} \in \sA_2$, then we can conclude Eq.~\eqref{eq: superadditivity appendix tmp}. It is however not clear how to reach this conclusion without this assumption. This justifies our use of continuity to prove superadditivity in Lemma~\ref{lem: generalized supadditivity DM}, thereby avoiding the partial trace assumption and placing fewer constraints on the subsequent theories.

\section{Useful properties}
\label{sec: app Useful properties}

We establish an explicit continuity for $D_{\alpha}$ was $\alpha \to 1$ with $\alpha < 1$. It is analogous to~\cite[Lemma 8]{tomamichel2009fully} and~\cite[Lemma B.8]{dupuis2020entropy} and the proof is almost the same.
\begin{lemma}[Explicit continuity of R\'enyi divergences as $\alpha \to 1$ from below]
\label{lem: continuity alpha = 1 from below}
Let $\rho \in \density(\cH)$ and $\sigma \in \PSD(\cH)$. Let $\eta = \left(\max(4, 2^{2 D_{\Petz,3/2}(\rho \| \sigma)} + 2^{-2D_{\Petz,1/2}(\rho \| \sigma)} + 1)\right)^2$. For $\alpha \in (1-1/\log \eta, 1)$, we have 
\begin{align}
    0 \leq D(\rho \| \sigma) - D_{\Petz, \alpha}(\rho \| \sigma) \leq (1-\alpha) (\log \eta)^2.
\end{align}
As a result, we also have
\begin{align}
    0 \leq D(\rho \| \sigma) - D_{\Sand, \alpha}(\rho \| \sigma) \leq (1-\alpha) \left((\log \eta)^2 + D(\rho \| \sigma) + \log \tr(\sigma) \right).
\end{align}
\end{lemma}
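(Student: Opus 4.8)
The plan is to reduce everything to a single second-derivative (variance) estimate for the function $\phi(\alpha) := \log\tr[\rho^\alpha\sigma^{1-\alpha}]$, whose chord slope is exactly the Petz divergence. First I would dispose of the left inequalities: since the Petz R\'enyi divergence is monotonically nondecreasing in $\alpha$ and $D_{\Petz,1} = D_{\Sand,1} = D$, we have $D_{\Petz,\alpha}\le D$ and $D_{\Sand,\alpha}\le D_{\Petz,\alpha}\le D$ for $\alpha<1$, giving $0\le D - D_{\Petz,\alpha}$ and $0\le D - D_{\Sand,\alpha}$. For the upper bound on $D - D_{\Petz,\alpha}$, note $Q_\alpha := \tr[\rho^\alpha\sigma^{1-\alpha}] = 2^{(\alpha-1)D_{\Petz,\alpha}}$, so $D_{\Petz,\alpha} = \phi(\alpha)/(\alpha - 1)$ with $\phi(1) = 0$ and (differentiating at $\alpha=1$, using $\supp\rho\subseteq\supp\sigma$) $\phi'(1) = D(\rho\|\sigma)$. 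A second-order Taylor expansion then yields, for some $\xi\in(\alpha,1)$, the identity $D(\rho\|\sigma) - D_{\Petz,\alpha}(\rho\|\sigma) = \tfrac12(1-\alpha)\phi''(\xi)$, so the task becomes bounding $\phi''(\xi)$ uniformly for $\xi\in(1 - 1/\log\eta, 1)$.

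Second, I would obtain a clean expression for $\phi''$. Writing $\rho = \sum_i p_i\proj{i}$ and $\sigma = \sum_j q_j\proj{j}$ in their eigenbases and setting $c_{ij} = |\langle i|j\rangle|^2$, we have $Q_\alpha = \sum_{ij}c_{ij}p_i^\alpha q_j^{1-\alpha}$. Introducing the probability distribution $P_\alpha(i,j)\propto c_{ij}p_i^\alpha q_j^{1-\alpha}$ and the random variable $Y(i,j) = \log(p_i/q_j)$, one checks the moment-generating identity $\EE_{P_\alpha}[2^{sY}] = Q_{\alpha+s}/Q_\alpha$, from which $\phi'(\alpha) = \EE_{P_\alpha}[Y]$ and $\phi''(\alpha) = (\ln 2)\operatorname{Var}_{P_\alpha}(Y)\ge 0$; in particular $\phi$ is convex, which also re-proves the monotonicity invoked above.

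Third---and this is the crux---I would bound the variance. Using $\operatorname{Var}_{P_\xi}(Y)\le\EE_{P_\xi}[Y^2]$ together with the elementary inequality $x^2\le e^x + e^{-x}$ applied to $x = (s\ln 2)Y$, I get $\EE_{P_\xi}[Y^2]\le (s\ln 2)^{-2}(\EE_{P_\xi}[2^{sY}] + \EE_{P_\xi}[2^{-sY}]) = (s\ln 2)^{-2}(Q_{\xi+s} + Q_{\xi-s})/Q_\xi$. Choosing $s = 1/2$ places the exponents near the anchor points $3/2$ and $1/2$. The convexity of $\log Q_\alpha$ and $\log Q_1 = 0$ then let me bound $Q_{\xi+1/2}$ and $Q_{\xi-1/2}$ from above by the values encoded in $D_{\Petz,3/2}$ and $D_{\Petz,1/2}$, while the hypothesis $\xi > 1 - 1/\log\eta$ (with $\eta\ge 16$, hence $\xi > 3/4$) keeps $Q_\xi$ bounded below away from $0$. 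The definition $\eta = (\max(4, 2^{2D_{\Petz,3/2}} + 2^{-2D_{\Petz,1/2}} + 1))^2$ is engineered precisely so that, after these (deliberately loose) estimates, $\tfrac12\phi''(\xi)\le(\log\eta)^2$, yielding $D - D_{\Petz,\alpha}\le(1-\alpha)(\log\eta)^2$. I expect the bookkeeping that converts the crude variance bound into exactly the constant $(\log\eta)^2$---simultaneously guaranteeing the lower bound on $Q_\xi$ and the validity of the range $(1-1/\log\eta,1)$---to be the main obstacle; the analytic content itself is light, and all the delicate work is in matching constants rather than in the structure.

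Finally, for the sandwiched statement I would split $D - D_{\Sand,\alpha} = (D - D_{\Petz,\alpha}) + (D_{\Petz,\alpha} - D_{\Sand,\alpha})$ and bound the first term by the part just proved. For the second term I would invoke the standard comparison $D_{\Sand,\alpha}(\rho\|\sigma)\ge\alpha D_{\Petz,\alpha}(\rho\|\sigma) - (1-\alpha)\log\tr\sigma$ (valid for $\alpha\in(1/2,1)$, the same relation used in the proof of Lemma~\ref{eq: approx Dreg DMmalpha}), which gives $D_{\Petz,\alpha} - D_{\Sand,\alpha}\le(1-\alpha)(D_{\Petz,\alpha} + \log\tr\sigma)\le(1-\alpha)(D + \log\tr\sigma)$ since $D_{\Petz,\alpha}\le D$. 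Adding the two contributions produces exactly the bound $(1-\alpha)((\log\eta)^2 + D(\rho\|\sigma) + \log\tr\sigma)$, completing the argument.
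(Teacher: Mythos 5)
Your skeleton---Taylor with Lagrange remainder giving $D - D_{\Petz,\alpha} = \tfrac{1-\alpha}{2}\phi''(\xi)$, the identity $\phi''(\alpha) = (\ln 2)\operatorname{Var}_{P_\alpha}(Y)$ for the tilted Nussbaum--Szko{\l}a distribution, and the final sandwiched step via Eq.~\eqref{eq: relation sand petz}---is sound, and the sandwiched step is identical to the paper's. But the crux (your third step) fails as stated, for two independent reasons. First, with $s = 1/2$ the exponent $\xi - 1/2$ ranges over $(1/4, 1/2)$, strictly \emph{outside} the convex hull of your anchor points $\{1/2, 1, 3/2\}$; convexity of $\phi$ interpolates but never extrapolates, so $Q_{\xi - 1/2}$ is simply not controlled by $D_{\Petz,3/2}$ and $D_{\Petz,1/2}$. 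Concretely, take commuting $\rho, \sigma$ with spectra $P = (1-\epsilon, \epsilon)$, $Q = (1, M)$ and $\epsilon = M^{-2}$: then $D_{\Petz,3/2} \to 0$ and $D_{\Petz,1/2} \to 0$, so $\eta \to 16$ and the admissible interval is $(3/4,1)$, yet $Q_t = (1-\epsilon)^t + M^{1-3t} \to \infty$ for any fixed $t < 1/3$, so $\EE_{P_\xi}[2^{-Y/2}] = Q_{\xi - 1/2}/Q_\xi$ blows up for $\xi$ near $3/4$ while the target $(\log \eta)^2 = 16$ stays fixed. (The lemma itself is safe here: $\operatorname{Var}_{P_\xi}(Y) \approx (3\log M)^2 M^{1-3\xi} \to 0$ for $\xi > 1/3$; it is only your MGF surrogate that explodes. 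And since the Lagrange point $\xi$ is unknown, you need the bound uniformly on $(\alpha,1)$, which this kills.)

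Second, even the controlled term cannot yield the stated constant with fixed $s$: the best convexity gives is $Q_{\xi+s} \leq \max(1, 2^{D_{\Petz,3/2}/2})$, so your bound on $\EE[Y^2]$ scales like $2^{D_{\Petz,3/2}/2} \sim \eta^{c}$, \emph{exponentially} larger than $(\log\eta)^2 \approx (4 D_{\Petz,3/2})^2$ when $D_{\Petz,3/2}$ is large. So this is not mere bookkeeping: you must take the tilt $s = \Theta(1/\log\eta)$ (equivalently $s \sim 1-\alpha$), keep $\xi \pm s$ within $O(1/\log\eta)$ of $1$, and bound $Q_{\xi \pm s}$ by interpolating between $\phi(1) = 0$ and $\phi(3/2) = D_{\Petz,3/2}/2$ (resp.\ $\phi(1/2) = -D_{\Petz,1/2}/2$), using $D_{\Petz,3/2} \leq \tfrac14 \log\eta$ and $-D_{\Petz,1/2} \leq \tfrac14\log\eta$ so that $Q_{\xi\pm s} = 2^{O(1)}$, together with $Q_\xi \geq 2^{(\xi-1)D} \geq 2^{-1/4}$ from the supporting line at $\alpha = 1$. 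With those repairs your variance route does close (up to constant-chasing), but at that point it has essentially reconstructed the paper's argument: there the tilt is $\beta = \alpha - 1$ itself, the pointwise bound $r_\beta(t) \leq 2(\cosh(\beta \ln t) - 1)$ plays the role of your $x^2 \leq e^x + e^{-x}$, concavity of $s_{2\beta}$ plus Jensen replaces per-point anchoring, and the hypothesis $(1-\alpha)\log\eta \leq 1$ is exactly what keeps the resulting $\cosh$ factor $O(1)$. Your left-inequality arguments (monotonicity in $\alpha$ and $D_{\Sand,\alpha} \leq D_{\Petz,\alpha}$) are correct.
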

 \begin{proof}
 We show the result for classical distributions $P, Q$ over $\cX$ and then the result follows immediately for the Petz R\'enyi divergence by applying it to the corresponding Nussbaum-Szko{\l}a distributions~\cite{Nussbaum2009} as was done in~\cite{dupuis2019entropySecondOrder}. We follow the proof from~\cite[Lemma 8]{tomamichel2009fully}. Letting $\beta = \alpha - 1 < 0$, we have
 \begin{align}
 D_{\alpha}(P \| Q) 
 &= \frac{1}{\beta} \log \sum_{x \in \cX} P(x) \left(\frac{P(x)}{Q(x)}\right)^{\beta}.
 \end{align}
 Letting $t^{\beta} = 1 + \beta \ln t + r_{\beta}(t)$, where $r_{\beta}(t) = t^{\beta} - \beta \ln t - 1$, we can write
  \begin{align}
 D_{\alpha}(P \| Q) 
 &= \frac{1}{\beta} \log\left(1 + \beta \sum_{x \in \cX} P(x) \ln \frac{P(x)}{Q(x)}  + \sum_{x \in \cX} P(x) r_{\beta}\left(\frac{P(x)}{Q(x)}\right) \right).
 \end{align}
 We now use the fact that $\ln(1+y) \leq y$. Thus,
 \begin{align}
 D_{\alpha}(P \| Q) 
 &\geq \frac{1}{\beta} \frac{1}{\ln 2} \left(\beta \sum_{x \in \cX} P(x) \ln \frac{P(x)}{Q(x)}  + \sum_{x \in \cX} P(x) r_{\beta}\left(\frac{P(x)}{Q(x)}\right) \right) \\
 &= D(P \| Q) + \frac{1}{\beta \ln 2} \sum_{x \in \cX} P(x) r_{\beta}\left(\frac{P(x)}{Q(x)}\right).
 \end{align}
 We now define $s_{\beta}(t) = 2 (\cosh(\beta \ln t) - 1)$ as in~\cite{tomamichel2009fully}, but recall that we have $\beta < 0$. Note that $s_{\beta}(t) = s_{-\beta}(t)$ so we also have that it is monotonically increasing for $t > 1$ and concave in $t$ for $\beta > -1/2$ and $t \geq 3$. In addition, we have $r_{\beta}(t) = e^{\beta \ln t} + (-\beta \ln t + 1) - 2 \leq e^{\beta \ln t} + e^{-\beta \ln t} - 2 = s_{\beta}(t)$ for all $t \geq 0$. Note that we also have $s_{\beta}(t) = s_{\beta}(1/t)$ and $s_{\beta}(t^2) = s_{2\beta}(t)$. As a result, $s_{\beta}(t) \leq s_{2\beta}(1+\sqrt{t} + \sqrt{1/t})$. As a consequence, because we assumed $\alpha > 1-1/\log 16$, we have $2\beta \geq -1/2$ and $1+\sqrt{t}+\sqrt{1/t} \geq 3$ and we can use the concavity of $s_{2\beta}$ and get
  \begin{align}
 D_{\alpha}(P \| Q) 
 &\geq D(P \| Q) + \frac{1}{\beta \ln 2} \sum_{x \in \cX : P(x) > 0} P(x) s_{2\beta}\left(1 + \sqrt{\frac{P(x)}{Q(x)}} + \sqrt{\frac{Q(x)}{P(x)}}\right) \\
 &\geq D(P \| Q) + \frac{1}{\beta \ln 2} s_{2\beta}\left(\sum_{x \in \cX : P(x) > 0} P(x)\left(1 + \sqrt{\frac{P(x)}{Q(x)}} + \sqrt{\frac{Q(x)}{P(x)}}\right) \right) \\
 &\geq D(P \| Q) + \frac{1}{\beta \ln 2} s_{2\beta}\left(1 + 2^{2 D_{3/2}(P \| Q)} + 2^{-2 D_{1/2}(P \| Q)} \right) \\
 &\geq D(P \| Q) + \frac{1}{\beta \ln 2} s_{2\beta}\left(\sqrt{\eta} \right).
 \end{align}
 Note that by Taylor's theorem around $z = 0$, we have $e^{z} + e^{-z} - 1 \leq z^2 (e^z + e^{-z})$. As a result
 $\frac{1}{\beta \ln 2} s_{2\beta}(\sqrt{\eta}) \geq \frac{1}{\beta \ln 2} (2\beta \ln \sqrt{\eta})^2 \cosh(2\beta \ln \sqrt{\eta}) = \beta (\log \eta)^2 \ln 2 \cosh(\beta \ln \eta) \geq \beta (\log \eta)^2$ where we used the fact that $\ln 2 \cosh(\ln 2) < 1$, which proves that 
  \begin{align}
 D_{\alpha}(P \| Q) 
 &\geq D(P \| Q) - (1-\alpha) (\log \eta)^2.
 \end{align}

To prove the bound on the sandwiched R\'enyi divergence, we use the relation proved in~\cite{iten2016pretty} between the different divergences in the regime $\alpha \in (0,1)$:
\begin{align}
\label{eq: relation sand petz}
    D_{\Sand,\alpha}(\rho \| \sigma) \geq \alpha D_{\Petz,\alpha}(\rho \| \sigma) + (1-\alpha) (\log \tr(\rho) - \log \tr(\sigma)).
\end{align}
This implies 
\begin{align}
D_{\Sand,\alpha}(\rho \| \sigma) 
&\geq D_{\Petz, \alpha}(\rho \| \sigma) - (1-\alpha) D_{\Petz,\alpha}(\rho \| \sigma) - (1-\alpha) \log \tr(\sigma) \\
&\geq D(\rho \| \sigma) - (1-\alpha) \left((\log \eta)^2 + D(\rho \| \sigma) + \log \tr(\sigma) \right)
\end{align}
\end{proof}

For $\alpha > 1$, we use~\cite[Lemma 8]{tomamichel2009fully}, with a slightly different parametrization. The proof is the same as Lemma~\ref{lem: continuity alpha = 1 from below}.
\begin{lemma}
\label{lem: continuity alpha = 1 from above}
Let $\rho \in \density(\cH)$ and $\sigma \in \PSD(\cH)$.  For any $\alpha \in (1, 1+1/\log \eta )$, we have
\begin{align}
D_{\Sand, \alpha}(\rho \| \sigma) \leq D(\rho \| \sigma) + (\alpha - 1) \left( \log \eta \right)^2,
\end{align}
where $\eta = \left(\max(4, 2^{D_{\Petz,3/2}(\rho \| \sigma)} + 2^{-D_{\Petz,1/2}(\rho \| \sigma)} + 1)\right)^2$.
\end{lemma}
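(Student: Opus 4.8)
The plan is to follow the proof of Lemma~\ref{lem: continuity alpha = 1 from below} essentially verbatim, keeping careful track of how the sign of $\beta := \alpha - 1$ reverses the direction of each estimate. Since we now seek an \emph{upper} bound on the sandwiched divergence, I would first reduce to the Petz divergence via the standard ordering $D_{\Sand,\alpha}(\rho\|\sigma) \le D_{\Petz,\alpha}(\rho\|\sigma)$, which holds for all $\alpha > 1$ as a consequence of the Araki--Lieb--Thirring inequality. This is what replaces the role played by the relation~\eqref{eq: relation sand petz} of~\cite{iten2016pretty} in the regime $\alpha < 1$, and it is the source of the slightly different parametrization of $\eta$. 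It then suffices to bound $D_{\Petz,\alpha}$ from above.

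Next I would pass to the Nussbaum--Szko\l{}a distributions $P,Q$ associated with $(\rho,\sigma)$~\cite{Nussbaum2009}, for which $D_{\Petz,\gamma}(\rho\|\sigma) = D_\gamma(P\|Q)$ for every order $\gamma$ and $D(\rho\|\sigma) = D(P\|Q)$. The statement thereby reduces to the purely classical inequality $D_\alpha(P\|Q) \le D(P\|Q) + (\alpha-1)(\log\eta)^2$, with $\eta$ expressed through $D_{3/2}(P\|Q)$ and $D_{1/2}(P\|Q)$.

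For the classical bound I would run the same Taylor-expansion argument. Writing $t^\beta = 1 + \beta\ln t + r_\beta(t)$ with $r_\beta(t) = t^\beta - \beta\ln t - 1$ and using $\ln(1+y)\le y$, one obtains
\[
D_\alpha(P\|Q) \le D(P\|Q) + \frac{1}{\beta\ln 2}\sum_{x} P(x)\, r_\beta\!\left(\tfrac{P(x)}{Q(x)}\right).
\]
The decisive difference from the $\alpha<1$ case is that now $\beta>0$, so the prefactor $\tfrac{1}{\beta\ln 2}$ is \emph{positive}; consequently the bound $r_\beta \le s_\beta$ with $s_\beta(t) = 2(\cosh(\beta\ln t)-1)$, the passage $s_\beta(t)\le s_{2\beta}(1+\sqrt t + \sqrt{1/t})$, and Jensen's inequality for the concave increasing function $s_{2\beta}$ all preserve the ``$\le$'' direction, whereas in the $\beta<0$ analysis these same steps produced a lower bound. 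Collecting the averaged argument into a quantity controlled by $D_{3/2}(P\|Q)$ and $D_{1/2}(P\|Q)$, overestimating it by $\sqrt\eta$ using the $\max$ in the definition of $\eta$ together with monotonicity of $s_{2\beta}$, and finishing with the elementary estimate $2(\cosh w - 1)\le w^2\cosh w$ and $\ln 2\,\cosh(\ln 2)<1$, yields the claimed bound.

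The main obstacle is essentially the bookkeeping: one must check that every inequality that pointed ``$\ge$'' in the $\beta<0$ proof now points ``$\le$'', which hinges entirely on the positivity of $\tfrac{1}{\beta\ln2}$ and on the fact that $s_\beta$ depends on $\beta$ only through $|\beta|$, so that all the monotonicity and concavity facts transfer unchanged. The one genuine constraint to verify is that the hypothesis $\alpha < 1 + 1/\log\eta$ keeps $2\beta$ in the range where $s_{2\beta}$ is concave and increasing on $[3,\infty)$ and keeps $|\beta|\ln\eta < \ln 2$, so that the closing estimate $\ln 2\,\cosh(\beta\ln\eta)\le \ln 2\,\cosh(\ln 2)<1$ applies; the lower threshold $\eta\ge 16$ furnished by the $\max(4,\cdots)$ guarantees that the argument of $s_{2\beta}$ exceeds the concavity threshold $3$.
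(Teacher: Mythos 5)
Your proposal is correct and follows essentially the same route as the paper, which proves this lemma by invoking \cite[Lemma 8]{tomamichel2009fully} and noting that the argument is identical to the proof of Lemma~\ref{lem: continuity alpha = 1 from below} with $\beta = \alpha - 1 > 0$ preserving the direction of every estimate. The one step you spell out that the paper leaves implicit is the reduction $D_{\Sand,\alpha}(\rho\|\sigma) \leq D_{\Petz,\alpha}(\rho\|\sigma)$ for $\alpha > 1$ via Araki--Lieb--Thirring (replacing the \cite{iten2016pretty} relation used in the $\alpha<1$ case), and your range checks ($2\beta$ within the concavity regime of $s_{2\beta}$, $\beta\ln\eta \leq \ln 2$, and the argument of $s_{2\beta}$ exceeding $3$ thanks to $\eta \geq 16$) match the paper's.
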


\end{document}